\def \v#1{{\bm #1}}
\def \be {\begin{equation}}
\def \ee {\end{equation}}
\newcommand{\I}{\mathrm{i}}
\newcommand{\ket}[1]{|#1\rangle}
\newcommand{\bra}[1]{\langle#1|}
\newcommand{\tr}[1]{\mathrm{tr}\left(#1\right)}
\newcommand{\Tr}[1]{\mathrm{Tr}\left\{#1\right\}}
\newcommand{\Trabs}[1]{\mathrm{TrAbs}\left\{#1\right\}}
\def \ds{\displaystyle}
\def \trAbs{\mathrm{TrAbs}}
\def \Re{\mathrm{Re}\,}
\def \Im{\mathrm{Im}\,}
\newcommand{\vin}[2]{\langle#1,#2\rangle}
\newcommand{\vecin}[2]{\langle#1|#2\rangle}
\newcommand{\sldin}[2]{\langle#1,#2\rangle_{\rho_{\theta}}}
\newcommand{\rldin}[2]{\langle#1,#2\rangle_{\rho_{\theta}}^{+}}
\def \del{\partial}
\def \stheta{\v{s}_{\theta}}
\def \Def{\stackrel{\rm def}{\Leftrightarrow}}
\def \cB{{\cal B}}
\def \cM{{\cal M}}
\def \cH{{\cal H}}
\def \cX{{\cal X}}
\def \sofh{{\cal S}({\cal H})}
\def \bbr{{\mathbb R}}
\def \bbc{{\mathbb C}}
\def \sofc2{{\cal S}({\mathbb C}^2)}
\def \lofh{{\cal L}({\cal H})}
\def \lofhh{{\cal L}_h({\cal H})}
\def \cD#1{{\cal D}_{#1}}
\def \cW{{\cal W}}
\def \rhon{\rho_\theta^{\otimes n}}
\def \Eof#1{{\mathrm{E}}_{\theta}^{(n)}[#1]}
\newtheorem{theorem}{Theorem}[section]
\newtheorem{lemma}[theorem]{Lemma}
\newtheorem{proposition}[theorem]{Proposition}
\newenvironment{proof}[1][Proof:]{\begin{trivlist}
\item[\hskip \labelsep {\bfseries #1}]}{\end{trivlist}}
\newenvironment{definition}[1][Definition]{\begin{trivlist}
\item[\hskip \labelsep {\bfseries #1}]}{\end{trivlist}}
\newcommand{\qed}{\nobreak \ifvmode \relax \else
      \ifdim\lastskip<1.5em \hskip-\lastskip
      \hskip1.5em plus0em minus0.5em \fi \nobreak
      \vrule height0.75em width0.5em depth0.25em\fi}
\begin{document}

\title{Explicit formula for the Holevo bound for two-parameter qubit-state estimation problem}
\author{Jun Suzuki}
\date{\today}
\email{junsuzuki@uec.ac.jp}
\affiliation{Graduate School of Information Systems, The University of Electro-Communications,\\
1-5-1 Chofugaoka, Chofu-shi, Tokyo, 182-8585 Japan}

\begin{abstract}
The main contribution of this paper is to derive an explicit expression 
for the fundamental precision bound, the Holevo bound, 
for estimating any two-parameter family of qubit mixed-states in terms of quantum versions of Fisher information. 
The obtained formula depends solely on the symmetric logarithmic derivative (SLD), 
the right logarithmic derivative (RLD) Fisher information, and a given weight matrix. 
This result immediately provides necessary and sufficient conditions for the following 
two important classes of quantum statistical models; 
the Holevo bound coincides with the SLD Cram\'er-Rao bound 
and it does with the RLD Cram\'er-Rao bound. 
One of the important results of this paper is that a general model other 
than these two special cases exhibits an unexpected property: 
The structure of the Holevo bound changes smoothly when the weight matrix varies. 
In particular, it always coincides with the RLD Cram\'er-Rao bound for a certain choice of the weight matrix. 
Several examples illustrate these findings.  
\end{abstract}


\maketitle 

\section{Introduction}
One of the fundamental questions in quantum statistical inference problem is 
to establish the ultimate precision bound for a given quantum statistical model allowed by 
the laws of statistics and quantum theory. 
Mainly due to the non-commutativity of operators and nontrivial optimization over 
all possible measurements, 
this question still remains open in full generality. 
This is very much in contrast to the classical case where the precision bounds 
are obtained in terms of information quantities for various statistical inference problems. 

The problem of point estimation of quantum parametric models is 
of fundamental importance among various quantum statistical inference problems. 
This problem was initiated by Helstrom in the 1960s 
and he devised a method to translate the well-known strategies developed in classical statistics 
into the quantum case \cite{helstrom}. A quantum version of Fisher information 
was successfully introduced and the corresponding precision bound, a quantum version 
of Cram\'er-Rao (CR) bound, was derived.  
It turned out, however, that the obtained bound is not generally achievable except for trivial cases. 

A clear distinction regarding the quantum parameter estimation problem 
arises when exploring possible estimation strategies 
since there is no measurement degrees of freedom in the classical estimation problem. 
Consider $n$ identical copies of a given quantum state and 
we are allowed to perform any kinds of quantum measurements according to quantum theory. 
A natural question is then to ask how much can one improve estimation errors 
by measurements jointly performed on the $n$ copies when compared to the case 
by those individually performed on each quantum state. The former class of 
measurements is called {\it collective} or {\it joint} and the latter is referred to 
as {\it separable} in literature. It is clear that the class of collective measurements 
includes separable ones and one expects that collective measurements should be 
more powerful than separable ones in general.  
Since one cannot do better than the best collective measurement, 
the ultimate precision bound is the one that is asymptotically achieved 
by a sequence of the best collective measurements as the number of copies tends to infinity. 
This fundamental question has been addressed by several authors before 
\cite{nagaoka89-2,HM98,GM00,BNG00,hayashi03,hayashi,bbmr04,bbgmm06,HM08,GK06,KG09,YFG13,GG13}. 

It was Holevo who developed parameter estimation theory of quantum states 
by departing from a direct analogy to classical statistics. 
He proposed a bound, known as the Holevo bound, in the 1970s aiming to 
derive the fundamental precision limit for quantum parameter estimation problem, see Ch.~6 of his book \cite{holevo}. 
At that time, it was not entirely clear whether or not this bound is a really 
tight one, i.e., the asymptotic achievability by some sequence of measurements. 
Over the last decade, there have been several important progress on asymptotic analysis 
of quantum parameter estimation theory 
revealing that the Holevo bound is indeed the best asymptotically achievable bound 
under certain conditions \cite{HM08,GK06,KG09,YFG13}. 
These results confirm that the Holevo bound plays 
a pivotal role in the asymptotic theory of quantum parameter estimation problem. 
Despite the fact that we now have the fundamental precision bound, 
the Holevo bound has a major drawback: It is not an explicit form 
in terms of a given model, but rather it is written as an optimization of 
a certain nontrivial function. 
Therefore, unlike the classical case, where the Fisher information 
can be directly calculated from a given statistical model, 
the structure of this bound is not transparent in terms of the model under consideration. 

Having said the above introductory remarks, we wish to gain a deeper insight into 
the structure of the Holevo bound reflecting statistical properties of a given model. 
To make progress along this line of thoughts, we take the simplest quantum parametric model, 
a general qubit model, and analyze its Holevo bound in detail. 
Since explicit formulas for the Holevo bound for mixed-state models with one and three parameters 
and pure-state models are known in literature, 
the case of two-parameter qubit model is the only one left to be solved. 
The main contribution of this paper is to derive an explicit expression 
for the Holevo bound for any two-parameter qubit model of mixed-states without 
referring to a specific parametrization of the model. 
Remarkably, the obtained formula depends solely on a given weight matrix and three previously known bounds: 
The symmetric logarithmic derivative (SLD) CR bound, 
the right logarithmic derivative (RLD) CR bound, and the bound for D-invariant models. 
This result immediately provides necessary and sufficient conditions for the two important cases. 
One is when the Holevo bound coincides with the RLD CR bound 
and the other is when it does with the SLD CR bound. 
We also show that a general model other than these two special cases exhibits an unexpected 
property, that is, the structure of the Holevo bound changes smoothly when the weight matrix varies. 
We note that similar transition has been obtained by others \cite{HM08,YFG13} for a specific parametrization of two-parameter qubit model. 
Here we emphasize that our result is most general and is expressed in terms only of 
the weight matrix and two quantum Fisher information. 

The main result of this paper is summarized in the following theorem 
(The detail of these quantities will be given later): 
Consider a two-parameter qubit model of mixed states, which changes smoothly about variation of the parameter. 
Denote the SLD and RLD Fisher information matrix by $G_\theta$ and $\tilde{G}_\theta$, respectively,  
and define the SLD and RLD CR bounds by 
\begin{align}
C_\theta^S[W]&=\Tr{W G_\theta^{-1}},\\
C_\theta^R[W]&=\Tr{W \Re\tilde{G}_\theta^{-1}} +\Trabs{W\Im\tilde{G}_\theta^{-1}},
\end{align}
respectively. Here, $W$ is a given $2\times2$ positive definite matrix and is called a weight matrix 
($\Trabs{}$ is defined after Eq.~\eqref{zmatrix}.). 
Introduce another quantity by 
\be
C_\theta^Z[W]:=\Tr{W \Re Z_\theta} +\Trabs{W\Im Z_\theta}, 
\ee 
where $Z_\theta$ is a $2\times 2$ hermite matrix defined by 
\be
Z_\theta:=\big[z_\theta^{ij} \big]_{i,j\in\{1,2\}},\quad z_\theta^{ij}:= \tr{\rho_\theta L_\theta^j L_\theta^i}.  
\ee
Here $L_\theta^i$ are a linear combination of the SLD operators; $L_\theta^i=\sum_j g^{ij}_\theta L_{\theta,j}$ 
with $g^{ij}_\theta$ denoting the $(i,j)$ component of the inverse of SLD Fisher information matrix and $L_{\theta,i}$ SLD operators. 
With these definitions, we obtain the following result:
\begin{theorem}\label{thm1}
The Holevo bound for any two-parameter qubit model under the regularity conditions is  
\be \label{eq1}
C_\theta^H[W]=
\begin{cases}
\ds C_\theta^R[W] \qquad \mathrm{if}\ C_\theta^R[W]\ge\frac{C_\theta^Z[W]+C_\theta^S[W]}{2}\\[2ex]
\ds C_\theta^R[W]+S_\theta[W]\qquad \mathrm{otherwise}
\end{cases},
\ee
where the function $S_\theta[W]$, which is nonnegative, is defined by 
\be\label{hsbound}
\ds S_\theta[W]:= \frac {\left[\frac{1}{2}(C_\theta^Z[W]+C_\theta^S[W])- C_\theta^R[W] \right]^2}{C_\theta^Z[W]-C_\theta^R[W]}. 
\ee
\end{theorem}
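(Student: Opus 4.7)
The plan is to reduce the Holevo minimization, which is a priori an optimization over vector-valued locally unbiased estimators on $\mathbb{C}^2$, to a two-real-variable problem that can be solved in closed form. Under the regularity conditions, $\mathrm{span}\{L_{\theta,1},L_{\theta,2}\}$ is a two-dimensional subspace of the three-dimensional space of traceless Hermitian operators on $\mathbb{C}^2$, so its SLD-orthogonal complement is one-dimensional. Fix a Hermitian generator $B$ of this complement and set $\beta := \tr{\rho_\theta B^2} > 0$ (strictly positive since the state is mixed). Any locally unbiased $\vec X = (X^1,X^2)$ then admits the unique decomposition
\begin{equation*}
X^i = L_\theta^i + c^i B, \qquad c^1,c^2\in\mathbb{R},
\end{equation*}
reducing the problem to a minimization over $c = (c^1,c^2)^T$.

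Next I would compute $z^{ij}[\vec X] = \tr{\rho_\theta X^j X^i}$ in this parametrization. Writing $\tr{\rho_\theta L_\theta^i B} = \mathrm{i}\,b^i$ with $b^i\in\mathbb{R}$ (purely imaginary by SLD-orthogonality), a direct expansion yields $\Re Z[\vec X] = G_\theta^{-1} + \beta\, c c^T$ and $\Im Z[\vec X] = \Im Z_\theta + (c^1 b^2 - c^2 b^1)\,J$, where $J=\bigl(\begin{smallmatrix} 0 & 1 \\ -1 & 0 \end{smallmatrix}\bigr)$. The $2\times 2$ identity $\Trabs{W\zeta J} = 2|\zeta|\sqrt{\det W}$ then collapses the Holevo functional to
\begin{equation*}
F(c) = C_\theta^S[W] + \beta\, c^T W c + 2\sqrt{\det W}\,\bigl|a + c^T J b\bigr|,
\end{equation*}
with $a := \Im z_\theta^{12}$ and $b := (b^1,b^2)^T$.

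I would then minimize $F(c)$ in two stages. First, minimize the quadratic form $\beta\, c^T W c$ subject to the scalar constraint $c^T J b = v$ by Lagrange multipliers; this gives value $\kappa v^2$ with $\kappa := \beta\det W/(b^T W b)$. What remains is the one-dimensional nonsmooth problem $\min_w \bigl[\kappa(w-a)^2 + 2\sqrt{\det W}\,|w|\bigr]$ in $w := a+v$. This problem has a bifurcation at $|a|\kappa = \sqrt{\det W}$: below the threshold the optimum sits at $w=0$ (so the imaginary part of $Z[\vec X]$ is exactly cancelled), while above it the optimum sits at $w = a - \mathrm{sign}(a)\sqrt{\det W}/\kappa \ne 0$.

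The main obstacle will be to translate the auxiliary quantities $\beta$ and $b$ back into the intrinsic invariants $C_\theta^R[W]$ and $C_\theta^Z[W]$. I expect this to rest on the qubit-specific identity
\begin{equation*}
\tilde G_\theta^{-1} = Z_\theta - \frac{1}{\beta}\, b b^T,
\end{equation*}
which exhibits the RLD Fisher information as $Z_\theta$ minus the rank-one projection onto the SLD-orthogonal $B$-direction; establishing this via the Bloch parametrization, and in particular verifying that $\Im\tilde G_\theta^{-1} = \Im Z_\theta$ so that the correction is a real rank-one matrix, is the most delicate step. Granted it, one reads off $b^T W b/\beta = C_\theta^Z[W] - C_\theta^R[W]$, so that the threshold $|a|\kappa = \sqrt{\det W}$ rewrites as $C_\theta^R[W] = (C_\theta^Z[W]+C_\theta^S[W])/2$. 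Substituting, the above-threshold branch collapses to $C_\theta^H[W] = C_\theta^R[W]$, while the below-threshold branch yields $C_\theta^H[W] = C_\theta^S[W] + (C_\theta^Z[W]-C_\theta^S[W])^2/[4(C_\theta^Z[W]-C_\theta^R[W])]$, which a short algebraic rearrangement identifies with $C_\theta^R[W] + S_\theta[W]$ for $S_\theta[W]$ as in \eqref{hsbound}.
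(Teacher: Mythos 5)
Your proposal is correct and follows essentially the same route as the paper: your decomposition $X^i=L_\theta^i+c^iB$ is the operator form of the paper's Bloch-vector substitution $\v x^i=\v\ell_\theta^i+\xi^i\v\ell_\theta^\bot$ (Lemma \ref{lem-holevo2}), your reduction to $\min_c\,[\beta\,c^TWc+2\sqrt{\det W}|a+c^TJb|]$ is the paper's Lemma \ref{lem9} optimization, and your key identity $\tilde G_\theta^{-1}=Z_\theta-\beta^{-1}bb^T$ (with $\Im\tilde G_\theta^{-1}=\Im Z_\theta$) is exactly the content of Lemmas \ref{lem7} and \ref{lem8}. The only cosmetic differences are working with abstract Hermitian operators rather than Bloch vectors and splitting the two-variable minimization into a Lagrange step plus a one-dimensional nonsmooth problem.
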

Note that the condition $C_\theta^R<(C_\theta^Z+C_\theta^S)/2$ implies $C_\theta^Z-C_\theta^R>0$ 
so that $S_\theta[W]$ is well defined. (See the discussion after Eq.~\eqref{zger}.)

The above main result \ref{thm1} sheds several new insights on the quantum parameter estimation problems. 
First note that the form of the Holevo bound changes according to the choice of weight matrices. 
This kind of transition phenomenon has never occurred in the classical case. 
Second surprise is to observe the appearance of the RLD CR bound 
in the generic two-parameter estimation problem. As we will provide in the next section, 
the RLD CR bound has been shown to be important for a special class of 
statistical models, known as a D-invariant model. Here, we explicitly show 
that it also plays a major role for non D-invariant models. 
Last, in many of previous studies parameter estimation problems, 
the precision bound is either expressed in terms of the SLD or RLD Fisher information, 
but the second case of the above expression \eqref{eq1} depends {\it both} on 
the SLD and RLD Fisher information. To see this explicitly, we can rewrite it as
\begin{multline}\label{eq2}
C_\theta^R[W]+S_\theta[W]=
\Tr{WG_\theta^{-1}} \\
+\frac14\ \frac{\left(\Trabs{W\Im\tilde{G}_\theta^{-1}}\right)^2}{\Tr{W(G_\theta^{-1}-\Re\tilde{G}_\theta^{-1})}}. 
\end{multline}
All these findings will be discussed in details together with examples. 

The rest of this paper continues as follows. 
Section \ref{sec2} provides definitions and some of known results for parameter estimation theory 
within the asymptotically unbiased setting. 
In Sec.~\ref{sec3}, a useful tool based on the Bloch vector is introduced and then the above main theorem is proved. 
Discussions on the main theorem are presented in Sec.~\ref{sec4}. 
Section \ref{sec5} gives several examples to illustrate findings of this paper. 
Concluding remarks are listed in Sec.~\ref{sec6}. 
Most of the proofs for lemmas are deferred to Appendix \ref{sec:appb}. 
Supplemental materials are given in Appendix \ref{sec:appc}. 


\section{Preliminaries}\label{sec2}
In this section, we establish definitions and notations used in this paper. 
We then list several known results regarding the Holevo bound 
to make the paper self-contained. 
\subsection{Definitions}\label{sec2-1}
Consider a $d$-dimensional Hilbert space $\cH$ ($d<\infty$) and a $k$-parameter family of quantum states
$\rho_\theta$ on it: 
\be
\cM:=\{\rho_\theta \,|\, \theta=(\theta^1,\theta^2,\dots,\theta^k)\in\Theta\subset\bbr^k \}, 
\ee  
where $\Theta$ is an open subset of $k$-dimensional Euclidean space. 
The family of states $\cM$ is called a {\it quantum statistical model} or simply a {\it model}. 
The model discussed throughout the paper is assumed to satisfy certain regularity conditions 
for the mathematical reasons \cite{comment0}.   
For our purpose, the relevant regularity conditions are: 
i) The state $\rho_\theta$ is faithful, i.e., $\rho_\theta$ is strictly positive. 
ii) It is differentiable with respect to these parameters $\theta$ sufficiently many times. 
iii) The partial derivatives of the state $\del \rho_\theta/\del\theta^i$ are all linearly independent. 
In the rest of this paper, the regularity conditions above are taken for granted unless otherwise stated. 

For a given quantum state $\rho\in\sofh$, we define the SLD and RLD inner products by 
\begin{align} \label{srinn}\nonumber
\langle X,Y\rangle_\rho&:=\frac12\tr{\rho(YX^*+X^* Y)}=\Re \{\tr{\rho YX^*}\}, \\
\langle X,Y\rangle_\rho^+&:=\tr{\rho YX^*}, 
\end{align}
respectively, for any (bounded) linear operators $X,Y$ on $\cH$. 
Here, $*$ denotes the hermite conjugation. 
Given a $k$-parameter model $\cM=\{\rho_\theta\,|\, \theta\in\Theta\subset\bbr^k\}$, 
the SLD operators $L_{\theta, i}$ and RLD  operators $\tilde{L}_{\theta, i}$ are 
formally defined by the solutions to the operator equations:
\begin{align}\label{srdef}\nonumber
\frac{\del}{\del\theta_i}\rho_{\theta}&=\frac12 (\rho_{\theta}L_{\theta,i}+L_{\theta,i}\rho_{\theta}), \\
\frac{\del}{\del\theta_i}\rho_{\theta}&=\rho_{\theta}\tilde{L}_{\theta,i}. 
\end{align}
The SLD Fisher information matrix is defined by \cite{helstrom} 
\begin{align} \label{sldfisher}
G_{\theta}&:= \left[ g_{\theta, ij}\right]_{i,j\in\{1,\dots,k\}} \\ \nonumber
g_{\theta, ij}&:=\sldin{L_{\theta,i}}{L_{\theta,j}}=\tr{\rho_{\theta}\frac12 \big(L_{\theta,i}L_{\theta,j}+L_{\theta,j}L_{\theta,i}  \big)}, 
\end{align}
and the RLD Fisher information is \cite{yl73,holevo}
\begin{align} \nonumber
\tilde{G}_{\theta}&:= \left[ \tilde{g}_{\theta, ij} \right]_{i,j\in\{1,2,\dots,k\}} \\ \label{rldfisher}
\tilde{g}_{\theta, ij}&:=\rldin{\tilde{L}_{\theta,i}}{\tilde{L}_{\theta,j}}=\tr{\rho_{\theta}\tilde{L}_{\theta,j}\tilde{L}_{\theta,i}^*}. 
\end{align}

Define the following linear combinations of 
the SLD and RLD operators: 
\be\nonumber
L_{\theta}^i:= \sum_{j=1}^k(G_\theta^{-1})^{ji}L_{\theta,j}\ \mathrm{and}\ 
\tilde{L}_{\theta}^i:=\sum_{j=1}^k(\tilde{G}_\theta^{-1})^{ji}\tilde{L}_{\theta,j}. 
\ee
From these definitions, the following orthogonality conditions hold.  
\be \label{orthcond}
\sldin{L_{\theta}^i}{L_{\theta,j}}=\delta^i_{\,j},\ \mathrm{and}\ 
\rldin{\tilde{L}_{\theta}^i}{\tilde{L}_{\theta,j}}=\delta^i_{\,j}.
\ee
These operators with upper indices are referred to as 
the {\it SLD and RLD dual operators}, respectively. 

Consider $n$th i.i.d. extension of a given state and we define 
$n$th extended model by 
\be \label{nmodel}
\cM^{(n)}:=\{\rho_\theta^{\otimes n} \,|\, \theta\in\Theta\subset\bbr^k \}.  
\ee
The main objective of quantum statisticians is to perform a measurement 
on the $n$ tensor state $\rho_\theta^{\otimes n}$ and then to make 
an estimate for the value of the parameter $\theta$ based on the measurement outcomes. 
Here measurements are described mathematically by a positive operator-valued measure (POVM) 
and is denoted as $\Pi^{(n)}$. An estimator, which is a purely classical data processing, 
is a (measurable) function taking values on $\Theta$ and is denoted as $\hat{\theta}_{n}$. 
They are
\begin{align*}
\Pi^{(n)}&=\{\Pi_x^{(n)} \,|\, x\in\cX_n,\, \forall \Pi_x^{(n)}\ge0,\,\sum_{x\in\cX_n}\Pi_x^{(n)}=I^{\otimes n}\}, \\
\hat{\theta}^{(n)}&=(\hat{\theta}^1_{n},\hat{\theta}^2_{n},\dots,\hat{\theta}^k_{n}):\,  \cX_n\rightarrow \Theta,
\end{align*}
where $I$ is the identity operator on $\cH$ and we assume that 
POVMs consist of discrete measurement outcomes. For continuous POVMs, we replace the summation by an integration. 
A pair $(\Pi^{(n)},\hat{\theta}_{n})$ is called a {\it quantum estimator} or simply an {\it estimator} 
when is clear from the context and is denoted by $\hat{\Pi}^{(n)}$ \cite{comment1}. 

The performance of a particular estimator can be compared to others based on 
a given figure of merit and then one can seek the ``best" estimator accordingly. 
As there is no universally accepted figure of merit, one should carefully 
adopt a reasonable one depending upon a given situation. For example, 
a specific prior distribution for the parameter $\theta$ is known, 
the Bayesian criterion would be meaningful to find the best Bayesian estimator. 
If one wishes to avoid bad performance of estimators, the min-max criterion 
provides an optimal one that suppresses such cases. 
In this paper, we are interested in analyzing estimation errors at specific point $\theta\in\Theta$, 
that is, the pointwise estimation setting. 
For a given model \eqref{nmodel} and an estimator $\hat{\Pi}^{(n)}=(\Pi^{(n)},\hat{\theta}_{n})$, 
we define a {\it bias} at a point $\theta$ as
\begin{align}\nonumber
b^{(n)}_\theta[\hat{\Pi}^{(n)}]&:=\sum_{x\in\cX_n}(\hat\theta_{n}(x)-\theta)p^{(n)}_\theta(x)=\Eof{\hat\theta_{n}}-\theta,\\
\mathrm{with}&\quad p^{(n)}_\theta(x) :=\tr{\rhon\Pi^{(n)}_x},
\end{align}
where $\Eof{X^{(n)}}$ denotes the expectation value of a random variable $X^{(n)}$ 
with respect to the probability distribution $p^{(n)}_\theta$. 
Note that the bias $b^{(n)}_\theta[\hat\Pi^{(n)}]$ is a $k$-dimensional real vector. 
An important class of estimators when estimating the specific point of the model is 
the locally unbiased estimatior. This is to restrict estimators 
such that the bias vanishes at the true point $\theta$ up to the first order in 
the Taylor expansion. Mathematically, an estimator $\hat{\Pi}^{(n)}$ is called 
{\it locally unbiased} at $\theta$ if 
\be \label{lucondition}
b^{(n)}_\theta[\hat{\Pi}^{(n)}]=0\ \mathrm{and}\ 
\frac{\del}{\del\theta^i} \Eof{\hat\theta_{n}^j}=\delta^{j}_{\,i}, 
\ee
hold at $\theta\in\Theta$. It is known that the Quantum CR bounds 
hold for any locally unbiased estimator \cite{helstrom,holevo}. 

Upon analyzing performance of estimators within the asymptotic regime, 
we should impose some conditions that restrict the class of estimators. 
In statistics, a sequence of an estimator is said (weakly) {\it consistent}, 
if it converges to the true value in probability for every value $\theta\in\Theta$, i.e., 
\be\nonumber
\forall \epsilon>0\quad \lim_{n\to\infty}\mathrm{Pr}\{|\hat\theta_{n}-\theta|>\epsilon \}=0,
\ee
holds for all $\theta\in\Theta$. In this expression, 
$|v|=({\sum v_i^2})^{1/2}$ denotes the standard Euclidean norm 
and the right hand side means that error probability can be made arbitrary small. 
As a good estimator must converge to the true value as $n$ goes to infinity, 
it is reasonable to look for the class of consistent estimators in quantum parameter estimation as well. 
In classical statistics, this condition of consistency alone turns out to be 
weak in order to exclude artificial estimators. There are several approaches to 
handle these problems in the classical case \cite{Vaart}. 
Rather than going into mathematical discussions, 
we simply look for the following class of estimators to avoid such a situation. 
A sequence of estimators $\{\hat{\Pi}^{(n)}\}_{n=1}^\infty$ is called {\it asymptotically unbiased} if it satisfies
\be \label{aucondition}
\lim_{n\to \infty} \sqrt{n}b^{(n)}_\theta[\hat{\Pi}^{(n)}]=0,\ 
\lim_{n\to \infty}\frac{\del}{\del\theta^i} \Eof{\hat\theta_{n}^j}=\delta^{j}_{\,i}, 
\ee
for all $i,j\in\{1,2,\dots,k\}$ and for all $\theta\in\Theta$. That is 
to require the locally unbiased condition \eqref{lucondition} in the $n\to\infty$ limit. 

To quantify estimation errors of a given estimator, we consider the {\it mean-square error} (MSE) matrix defined by 
\begin{align*}
V^{(n)}_\theta[\hat\Pi^{(n)}]&:=\left[ v^{ij}_{\theta,n}[\hat\Pi^{(n)}] \right]_{i,j\in\{1,2,\dots,k\}},\\
v^{ij}_{\theta,n}[\hat\Pi^{(n)}]&:=\sum_{x\in\cX_n} (\hat{\theta}_{n}^i(x)-\theta^i) (\hat{\theta}_{n}^j(x)-\theta^j)p^{(n)}_\theta(x).
\end{align*}
By definition, the MSE matrix is a $k\times k$ real symmetric matrix and it is straightforward 
to show that it is nonnegative. As stated in the introduction, 
we wish to find the best precision bound allowed by the laws of quantum theory and statistics, 
which is achievable in the $n\to\infty$ limit. 
In the classical case, one can directly minimize the MSE matrix as a matrix inequality over the class of asymptotically unbiased estimators 
and to find the lowest MSE error achievable as $n\to\infty$. 
This line of approach does not work in the quantum case. One way to 
tackle this question is to deal with a weighted trace of the MSE matrix, 
which is a scalar quantity, and it is defined by
\be\nonumber
\mathrm{MSE}_\theta^{(n)}[\hat{\Pi}^{(n)}|W]:=\Tr{W V^{(n)}_\theta[\hat{\Pi}^{(n)}]}.  
\ee 
Here the matrix $W$ is called a {\it weight matrix} and can be chosen 
arbitrary as long as it is real and strictly positive. Since the weight matrix is one of the important 
ingredient for our discussion, let us denote the set of all possible weight matrices by 
\be \label{weightset}
\cW:=\{W\in \bbr^{k\times k}\,|\, W>0\}.  
\ee
By changing the weight matrix, one can explore trade-off relations in estimating 
different parameter components $\theta^i$. We note 
that the weight matrix can depend on the value of the estimation parameter $\theta$ as well. 
For example, it can be chosen as the SLD Fisher information matrix. 

Defining these terminologies, we now state the problem: 
For a specific point of a given i.i.d. model $\cM^{(n)}$, 
what is the best sequence of estimators $\{\hat{\Pi}^{(n)} \}_{n=1}^\infty$ 
and the minimum value of the weighted trace of MSE? 
To put it differently, one wishes to find the optimal sequence of estimators 
that minimum of the first order coefficient $C_\theta[W]$ in 
the large $n$ expansion: 
\be
\mathrm{MSE}_\theta^{(n)}[\hat{\Pi}^{(n)}|W]\simeq \frac{C_\theta[W]}{n} +{\cal O}(n^{-2}), 
\ee
i.e., the fastest decaying rate for the MSE. 
Mathematically, we define the CR type bound for the MSE by the following optimization problem:  
\be\nonumber
C_\theta[W]:=\inf_{\{\hat{\Pi}^{(n)}\}\mbox{:a.u.}}
\big\{ \limsup_{n\to\infty}\,n\,\mathrm{MSE}_\theta^{(n)}[\hat{\Pi}^{(n)}|W] \big\}, 
\ee
where the infimum is taken over all possible sequences of estimators $\{\hat{\Pi}^{(n)}\}_{n=1}^\infty$ 
that is asymptotically unbiased (a.u.). 
Note that this bound depends both on the weight matrix $W$ and the model $\rho_\theta$ at $\theta$. 
The symbol $\theta$ appearing in the bound $C_\theta[W]$ represents the model $\rho_\theta$ at $\theta$. 
Unlike the Bayesian or the min-max 
settings mentioned before, we are interested in understanding statistical 
properties of a given parametric model. This would be important 
in particular study of quantum states from geometrical point of view \cite{ANbook}. 

\subsection{The Holevo bound}\label{sec2-2}
To define the Holevo bound, we need some definitions first. 
For a given quantum statistical model on $\cH$, 
denote a $k$ array of hermite operators on $\cH$ by 
$\vec{X}=(X^1,X^2,\dots, X^k)$, $
X^i\in\lofhh$, i.e., $(X^i)^*=X^i$, for all $i=1,2,\dots,k$,
and define the set $\cX_\theta$ by
\begin{multline}\label{Xset}
\cX_\theta:=\{\vec{X}\,|\,\forall i\,X^i\in\lofhh, \forall i\,\tr{\rho_\theta X^i}=0,\\
 \forall i,j\,\tr{\frac{\del \rho_\theta}{\del\theta^i} X^j}=\delta^j_{\,i} \}.
\end{multline}
The Holevo function \cite{nagaoka89} in the quantum estimation theory is defined by
\be \label{hfunction}
h_\theta[\vec{X}|W]:=\Tr{W\mathrm{Re}\,Z_\theta[\vec{X}]}+\Trabs {W\mathrm{Im}\,Z_\theta[\vec{X}] }, 
\ee
where the $k\times k$ hermite matrix $Z_\theta[\vec{X}]$ is
\be \label{zmatrix}
Z_\theta[\vec{X}]:= \big[ \tr{\rho_\theta X^jX^i} \big]_{i,j\in\{1,\dots,k\}},  
\ee
and TrAbs$X$ denotes the sum of the absolute values of $\lambda_j$ 
with $X=T\mathrm{diag}(\lambda_1,\lambda_2,\dots,\lambda_m) T^{-1}$ for some invertible matrix $T$. 
We note the following relation also holds for any anti-symmetric operator $X$: 
\be
\Trabs{WX}=\sum_i|\lambda_i|=\Tr{|W^{1/2}XW^{1/2} |}, 
\ee 
where $|X|=\sqrt{X^* X}$ denotes the absolute value of a linear operator $X$.  

The Holevo bound is defined through the following optimization:
\be \label{hbound}
C_\theta^H[W]:=\min_{\vec{X}\in\cX_\theta}h_\theta[\vec{X}|W]. 
\ee
The derivation of the above optimization is well summarized in Hayashi and Matsumoto \cite{HM08}. 
Holevo showed that this quantity is a bound for the MSE for estimating 
a single copy of the given state under the locally unbiased condition \cite{holevo}: 
\be
\mathrm{MSE}_\theta^{(1)}[\hat{\Pi}|W]\ge C_\theta^H[W], 
\ee
holds for any locally unbiased estimator $\hat{\Pi}$. 
The nontrivial property of the Holevo bound is the additivity \cite{HM08}: 
\be
C_\theta^H[W,\rhon]=n^{-1}C_\theta^H[W,\rho_\theta], 
\ee
where the notation $C_\theta[W,\rhon]$ represents the Holevo bound about $n$th extended model. 

The following theorem establishes that the Holevo bound is the solution to the problem of our interest:
\begin{theorem}\label{thm2}
For a given model satisfying the regularity conditions, $C_\theta[W]=C_\theta^H[W]$ holds for all weight matrices. 
\end{theorem}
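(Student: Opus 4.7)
The plan is to prove $C_\theta[W]=C_\theta^H[W]$ via two matching inequalities: the converse $C_\theta[W]\ge C_\theta^H[W]$, stating that every asymptotically unbiased sequence asymptotically respects the Holevo bound, and the direct part $C_\theta[W]\le C_\theta^H[W]$, stating that some asymptotically unbiased sequence attains $C_\theta^H[W]/n$ to leading order. The two inequalities rely on complementary structural facts about the Holevo functional already recalled in Sec.~\ref{sec2-2}: additivity under tensor powers for the converse, and the existence of a Gaussian limit for $\rho_\theta^{\otimes n}$ on which the Holevo bound is exactly saturated for the direct part.

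For the converse, fix any asymptotically unbiased sequence $\{\hat{\Pi}^{(n)}\}$. If each $\hat{\Pi}^{(n)}$ were locally unbiased on the extended model $\cM^{(n)}$, the single-copy Holevo inequality applied to $\rho_\theta^{\otimes n}$ together with the additivity $C_\theta^H[W,\rho_\theta^{\otimes n}]=n^{-1}C_\theta^H[W,\rho_\theta]$ would give $n\,\mathrm{MSE}_\theta^{(n)}[\hat{\Pi}^{(n)}|W]\ge C_\theta^H[W]$ directly. Condition \eqref{aucondition} is weaker, so one closes this gap by an affine correction: replace $\hat{\theta}_n$ by $B_n^{-1}(\hat{\theta}_n-b_n)$, with $b_n$ and $B_n$ chosen so that the corrected estimator has vanishing bias and identity Jacobian of its mean at $\theta$. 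Because $\sqrt{n}\,b_\theta^{(n)}[\hat{\Pi}^{(n)}]\to 0$ and $\partial\Eof{\hat\theta_n^j}/\partial\theta^i\to\delta^j_{\,i}$, the correction perturbs the weighted MSE only by $o(n^{-1})$, and applying the Holevo inequality to the corrected estimator yields $\limsup_n n\,\mathrm{MSE}_\theta^{(n)}[\hat{\Pi}^{(n)}|W]\ge C_\theta^H[W]$.

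For the direct part — the genuinely hard step — the plan is to invoke quantum local asymptotic normality (Q-LAN). Near a fixed $\theta$, rescale parameters as $\theta+u/\sqrt{n}$ and show that the localized family $\{\rho_{\theta+u/\sqrt{n}}^{\otimes n}\}_u$ converges, in a sense strong enough to transport moments, to a quantum Gaussian shift model whose classical covariance encodes $G_\theta$ and whose commutation structure encodes $\Im\tilde{G}_\theta^{-1}$. On this Gaussian limit the minimum weighted MSE over locally unbiased estimators equals exactly $C_\theta^H[W]$ — the $\Trabs$ term in \eqref{hfunction} emerges naturally as the cost of jointly estimating non-commuting displacement coordinates. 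One then pulls the optimal Gaussian POVM back through the Q-LAN embedding to obtain a POVM on $\rho_\theta^{\otimes n}$, and uses a standard two-stage construction (a preliminary consistent estimate on $n^{1-\alpha}$ copies with $0<\alpha<1/2$, followed by the localized estimator applied to the remaining copies) to splice local achievability into an asymptotically unbiased global sequence with weighted MSE $C_\theta^H[W]/n+o(n^{-1})$.

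The main obstacle is this achievability half. Establishing Q-LAN in a norm strong enough to control second moments (not merely convergence in distribution of $\sqrt{n}(\hat\theta_n-\theta)$), constructing and identifying the optimal estimator on the Gaussian shift model, and checking that the spliced two-stage sequence satisfies both conditions in \eqref{aucondition} (not just weak consistency), each require non-trivial technical machinery; concentration estimates for the preliminary stage are also needed so that the residual bias decays faster than $n^{-1/2}$. Once these ingredients are imported from the cited works of Hayashi--Matsumoto, Kahn--Guta, and Yamagata--Fujiwara--Gill, combining the two inequalities completes the proof.
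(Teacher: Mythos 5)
The paper does not prove Theorem \ref{thm2} itself; it simply cites Hayashi--Matsumoto, Gu\c{t}\u{a}--Kahn, Kahn--Gu\c{t}\u{a}, and Yamagata--Fujiwara--Gill, and your two-part outline (converse via the single-copy Holevo inequality, the additivity $C_\theta^H[W,\rho_\theta^{\otimes n}]=n^{-1}C_\theta^H[W,\rho_\theta]$, and an affine correction turning asymptotic unbiasedness into local unbiasedness; achievability via quantum local asymptotic normality plus a two-stage estimation scheme) is precisely the architecture of those cited proofs. As a standalone argument it is of course incomplete --- the Q-LAN convergence strong enough to transport second moments, the identification of $C_\theta^H[W]$ as the optimal risk on the Gaussian shift limit, and the bias control for the spliced two-stage sequence are exactly the technical content you defer to the references --- but that is the same deferral the paper makes, and the strategy you describe is the correct and standard one.
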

There exist several different approaches upon proving the above theorem. 
Hayashi and Matsumoto \cite{HM08} proved the case for a full qubit model first. 
Gu\c{t}\u{a} and Kahn \cite{GK06} introduced a different tool based 
on (strong) quantum local asymptotic normality to prove the qubit case.  
This was further generalized to full models on any finite dimensional Hilbert space \cite{KG09}. 
However, all these proofs depend on a specific parametrization of quantum states. 
More general proof has been recently established by Yamagata, Fujiwara, and Gill \cite{YFG13}. 

This theorem implies that if we choose an optimal sequence of estimators, 
the MSE behaves as 
\be
\Tr{W V^{(n)}_\theta[\hat{\Pi}^{(n)}]}\simeq \frac{C^H_\theta[W]}{n} +{\cal O}(n^{-2}), 
\ee
for sufficiently large $n$. That is the Holevo bound is the fastest 
decaying rate for the MSE. 

Although the Holevo bound stands as an important cornerstone to 
set the fundamental precision bound, the definition \eqref{hbound} 
contains a nontrivial optimization. The main motivation of our work, 
as stated in the introduction, is to perform this optimization explicitly 
for any given model for qubit case. The result shows several 
nontrivial aspects of parameter estimation in quantum domain. 
Before going to present our result, we summarize several known results. 

\subsection{Holevo bound for one-parameter and D-invariant models}\label{sec2-3}
In this subsection, we consider two special cases where analytical forms 
of the Holevo bound are known. 

For a given $k$-parameter model on the Hilbert space $\cH$, 
let us denote SLD and RLD Fisher information matrices by $G_\theta$ and $\tilde{G}_\theta$, 
respectively, Eqs.~(\ref{sldfisher}, \ref{rldfisher}). Define the SLD and RLD CR bounds by
\begin{align}\label{sldbound}
C_\theta^S[W]&:=\Tr{WG_\theta^{-1}},\\ \label{rldbound}
C_\theta^R[W]&:=\Tr{W\Re\tilde{G}_\theta^{-1}}+\Trabs{W{\Im}\tilde{G}_\theta^{-1}},
\end{align} 
respectively. Throughout the paper, we use the notation $\Re\tilde{G}_\theta^{-1}=\Re\{\tilde{G}_\theta^{-1}\}$ 
($\Im\tilde{G}_\theta^{-1}=\Im\{\tilde{G}_\theta^{-1}\}$) 
representing the real (imaginary) part of the inverse matrix of the RLD Fisher information matrix.  
The well-known fact is that the SLD and RLD CR bounds cannot be better than the Holevo bound: 
\begin{lemma}\label{lem2}
For a given model satisfying the regularity conditions, 
the Holevo bound is more informative than the SLD and the RLD CR bound, i.e., 
$C^H_\theta[W]\ge C_\theta^S[W]$ and $C^H_\theta[W]\ge C_\theta^R[W]$ hold for 
an arbitrary weight matrix $W$. 
\end{lemma}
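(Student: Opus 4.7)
The plan is to establish both inequalities by bounding the Holevo function $h_\theta[\vec{X}|W]$ from below for every admissible $\vec{X}\in\cX_\theta$, then passing to the infimum in the definition \eqref{hbound}. The unifying observation is that $Z_\theta[\vec{X}]$ is a Gram matrix in two distinct senses: its real part is the SLD Gram matrix, $(\Re Z_\theta[\vec{X}])_{ij}=\sldin{X^i}{X^j}$, while $Z_\theta[\vec{X}]$ itself is the RLD Gram matrix, $(Z_\theta[\vec{X}])_{ij}=\rldin{X^i}{X^j}$. Correspondingly, the locally unbiased conditions defining $\cX_\theta$ translate into duality relations $\sldin{L_{\theta,i}}{X^j}=\delta^j_i$ and $\rldin{\tilde L_{\theta,i}}{X^j}=\delta^j_i$.

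For the SLD bound, a standard Gram-matrix/orthogonal-projection argument---the direct quantum analogue of the classical Cram\'er-Rao derivation---combined with the SLD duality above yields $\Re Z_\theta[\vec{X}]\ge G_\theta^{-1}$ as real symmetric matrices. Since $W$ is real and positive definite, it follows that $\Tr{W\Re Z_\theta[\vec{X}]}\ge\Tr{WG_\theta^{-1}}=C_\theta^S[W]$; the remaining term $\Trabs{W\Im Z_\theta[\vec{X}]}$ is manifestly nonnegative, and passing to the infimum over $\vec{X}$ gives $C_\theta^H[W]\ge C_\theta^S[W]$.

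For the RLD bound, the same Gram argument performed in the RLD inner product gives $Z_\theta[\vec{X}]\ge\tilde G_\theta^{-1}$ as hermitian matrices. Setting $C:=Z_\theta[\vec{X}]-\tilde G_\theta^{-1}\ge 0$ and applying the triangle inequality to $\Trabs{W\Im(\tilde G_\theta^{-1}+C)}$ after expansion, one arrives at
\begin{align*}
h_\theta[\vec{X}|W]\;\ge\;&\Tr{W\Re\tilde G_\theta^{-1}}+\Trabs{W\Im\tilde G_\theta^{-1}}\\
&+\Tr{W\Re C}-\Trabs{W\Im C}.
\end{align*}
It would therefore suffice to establish the auxiliary matrix inequality $\Tr{W\Re C}\ge\Trabs{W\Im C}$ for every hermitian $C\ge 0$ and every real $W>0$.

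This auxiliary inequality is the main obstacle. I would handle it by reducing to the case $W=I$: setting $\tilde C:=W^{1/2}CW^{1/2}\ge 0$, one has $\Tr{W\Re C}=\Tr{\Re\tilde C}$, while $\Trabs{W\Im C}=\Trabs{\Im\tilde C}$ because $W\Im C$ is similar to $W^{1/2}(\Im C)W^{1/2}$ and shares its eigenvalues. For $\tilde C\ge 0$, the real antisymmetric matrix $\Im\tilde C$ has eigenvalues in conjugate pairs $\pm i\mu_j$; restricting $\tilde C$ to the two-dimensional invariant subspace picking out each such pair gives a nonnegative $2\times 2$ block whose off-diagonal magnitude is $|\mu_j|$, and an elementary AM-GM argument on its diagonal forces the diagonal contribution to $\Re\tilde C$ restricted there to be at least $2|\mu_j|$. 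Summing over pairs yields $\Tr{\Re\tilde C}\ge\sum_j|\mu_j|=\Trabs{\Im\tilde C}$, which completes the chain to $h_\theta[\vec{X}|W]\ge C_\theta^R[W]$ for every $\vec{X}\in\cX_\theta$; the infimum then gives $C_\theta^H[W]\ge C_\theta^R[W]$.
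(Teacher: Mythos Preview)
Your proof is correct. The paper, however, does not supply its own proof of this lemma: it simply refers to Holevo's book, to Nagaoka's note, and to Hayashi--Matsumoto. What you have written is essentially the standard argument one finds in those references, carried out cleanly: Schur-complement/Gram inequalities $\Re Z_\theta[\vec X]\ge G_\theta^{-1}$ and $Z_\theta[\vec X]\ge\tilde G_\theta^{-1}$ from the SLD and RLD duality constraints, followed by the monotonicity of $A\mapsto \Tr{W\Re A}+\Trabs{W\Im A}$ on positive semidefinite Hermitian $A$.

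Two minor points of presentation. First, in the last step your index bookkeeping is slightly off: each conjugate pair $\pm i\mu_j$ contributes $2|\mu_j|$ to $\Trabs{\Im\tilde C}$, and your $2\times 2$ block argument gives exactly the matching $a_j+b_j\ge 2|\mu_j|$; the inequality you wrote as $\Tr{\Re\tilde C}\ge\sum_j|\mu_j|$ should read $\sum_j 2|\mu_j|$ (or, equivalently, sum $|\mu|$ over all eigenvalues rather than over pairs). Second, the phrase ``restricting $\tilde C$ to the two-dimensional invariant subspace'' is more precisely a compression: pass to a real orthogonal basis that block-diagonalizes the real antisymmetric matrix $\Im\tilde C$, then use that principal $2\times 2$ submatrices of a PSD matrix are PSD. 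With those cosmetic fixes the argument is complete and matches the cited literature.
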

Proof can be found in the original work by Holevo that is summarized in his book \cite{holevo}. 
More compact proof was stated by Nagaoka \cite{nagaoka89}. See also Hayashi and Matsumoto \cite{HM08}. 

\subsubsection{One-parameter model}
When the number of parameters is one, the problem can be reduced significantly. 
In this case, there cannot be any imaginary part for the matrix \eqref{zmatrix} 
and thus the minimization is reduced to minimizing the MSE itself. 
\begin{theorem}\label{thm3}
For any one-parameter model, the Holevo bound coincides with the SLD CR bound, i.e., 
\be \label{1bound}
C_\theta^H=\frac{1}{g_\theta}, 
\ee
holds for all $\theta\in\Theta$ where $g_\theta$ is the SLD Fisher information at $\theta$. 
\end{theorem}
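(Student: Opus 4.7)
The plan is to show the two inequalities $C_\theta^H \le 1/g_\theta$ and $C_\theta^H \ge 1/g_\theta$ separately. Since everything in sight scales linearly in the scalar weight $W$, I will set $W=1$ without loss of generality and read \eqref{1bound} as $C_\theta^H[1]=1/g_\theta$.

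The lower bound is free: Lemma \ref{lem2} already gives $C_\theta^H[W]\ge C_\theta^S[W]$, which for $k=1$ is exactly $C_\theta^H\ge 1/g_\theta$. So the only real content is the upper bound, for which I would first exploit the drastic simplification at $k=1$. Here $\vec{X}=X$ is a single hermitian operator, and $Z_\theta[X]$ from \eqref{zmatrix} collapses to the single number $\tr{\rho_\theta X^2}$, which is real by hermiticity of $X$. Hence $\Im Z_\theta[X]=0$, the $\Trabs{}$ term in \eqref{hfunction} vanishes identically, and the Holevo function reduces to $h_\theta[X|1]=\tr{\rho_\theta X^2}$. The problem is thereby reduced to minimizing $\tr{\rho_\theta X^2}$ over hermitian $X$ subject to the two affine constraints $\tr{\rho_\theta X}=0$ and $\tr{(\del\rho_\theta/\del\theta)X}=1$ defining $\cX_\theta$ in \eqref{Xset}.

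Next, I would exhibit the explicit minimizer $X_\ast:=L_\theta/g_\theta$. Using the SLD defining relation \eqref{srdef} and $\tr{\rho_\theta L_\theta}=0$ (obtained by differentiating $\tr{\rho_\theta}=1$), one verifies directly that $X_\ast\in\cX_\theta$, and a one-line computation yields $\tr{\rho_\theta X_\ast^2}=\tr{\rho_\theta L_\theta^2}/g_\theta^2=1/g_\theta$. This shows $C_\theta^H\le 1/g_\theta$, which with the lower bound closes the argument. For conceptual cleanliness, optimality of $X_\ast$ can also be read off from Cauchy-Schwarz in the SLD inner product: the constraint $\tr{(\del\rho_\theta/\del\theta)X}=1$ is exactly $\sldin{X}{L_\theta}=1$ for hermitian $X$, so $1\le \sldin{X}{X}\sldin{L_\theta}{L_\theta}=\tr{\rho_\theta X^2}\cdot g_\theta$, with equality iff $X\propto L_\theta$.

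There is no genuine obstacle in this proof. The essential content of the theorem is the structural remark that the non-commutative obstruction encoded in $\Im Z_\theta$, which is precisely what makes the multi-parameter Holevo bound hard to compute (and motivates the analysis leading to Theorem \ref{thm1}), is automatically absent when $k=1$.
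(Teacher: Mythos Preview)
Your proof is correct and follows exactly the approach the paper sketches in the paragraph preceding the theorem: the key observation that for $k=1$ the matrix $Z_\theta[X]$ is the single real number $\tr{\rho_\theta X^2}$, so the $\Trabs{}$ term in \eqref{hfunction} vanishes and the optimization reduces to the quadratic SLD problem. The paper does not spell out the minimization (it cites \cite{yang,bc94,nagaoka87} for achievability), so your explicit identification of $X_\ast=L_\theta/g_\theta$ together with the appeal to Lemma~\ref{lem2} for the lower bound simply completes the argument the paper leaves implicit.
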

Note that there is no weight matrix since we are dealing with a scalar MSE for the one-parameter case. 
Importantly, there is no gain from collective POVMs for one-parameter models. 
Existence of a POVM whose MSE is equal to this bound is discussed independently 
by several authors \cite{yang,bc94,nagaoka87}. 

\subsubsection{D-invariant model} \label{sec2-3-2}
Consider an arbitrary $k$-parameter model $\cM$ and let $L_{\theta,i}$ ($i=1,2,\dots,k$) be 
the SLD operators at $\theta$. The linear span of SLD operators with real coefficients is called 
the {\it SLD tangent space} of the model at $\theta$:
\be
T_\theta(\cM):=\mathrm{span}_\bbr\{L_{\theta,1},L_{\theta,2},\dots,L_{\theta,k} \}. 
\ee
Any elements of the SLD tangent space, $X\in T_\theta(\cM)$, satisfy $\tr{\rho_\theta X}=0$ 
and it is not difficult to see that the space $T_\theta(\cM)$ is essentially a real vector 
space with the dimension $k$. 
Holevo introduced a super-operator $\cD{\rho}$, called a {\it commutation operator}, as follows.
Given a state $\rho$ on $\cH$, let $\lofh$ be the set of linear operators on $\cH$, then 
$\cD{\rho}$ is a map from $\lofh$ to itself defined through the following equation: 
\be
\langle Y,\cD{\rho}(X)\rangle_\rho=[Y,\,X]_\rho,\quad \forall X,Y\in\lofh. 
\ee
Here,  $\langle X,Y\rangle_\rho=\Re\,\tr{\rho X^* Y}$ 
is the SLD inner product and $[X,\,Y]_\rho=\tr{\rho[X^*,\,Y]}/(2\I)$ is a sesqui-linear form. 
(Here, the definition is different from the original one by a factor.)
When considering a parametric model, we denote $\cD{\rho_\theta}=\cD{\theta}$ for simplicity. 
We say that a model is {\it D-invariant} at $\theta$ if the SLD tangent space at $\theta$ is invariant 
under the action of the commutation operator. Mathematically, this definition is expressed as 
\be\nonumber
\mbox{$\cM$ is D-invariant at $\theta$}\Def\cD{\theta}(X)\in T_\theta(\cM),\ \forall X\in T_\theta(\cM). 
\ee
When a model is D-invariant for all $\theta\in\Theta$, we say the model $\cM$ is globally D-invariant. 
Lemma \ref{lem4} in Appendix \ref{sec:appc-1} characterizes equivalent conditions for D-invariant models. 

From the definitions of two inner products \eqref{srinn} and the commutation operator, the relationship
\begin{multline}
\langle X,Y\rangle_\rho^+=\langle X, (I+\I \cD{\rho})(Y)\rangle_\rho\\
\Leftrightarrow \langle X,\cD{\rho}(Y)\rangle_\rho=-\I (\langle X,Y\rangle_\rho^+-\langle X,Y\rangle_\rho), 
\end{multline}
holds for all linear operators $X,Y$ on $\cH$. 
For a given model, another important relation 
\be
\tr{\del_i\rho_\theta X}=\sldin{X}{L_{\theta,i}}=\rldin{X}{\tilde{L}_{\theta,i}},
\ee
holds for $\forall X\in\lofh$. Combining them gives 
$\sldin{X}{L_{\theta,i}}=\sldin{X}{(I+\I\cD{\theta})(\tilde{L}_{\theta,i})}$, 
and hence we obtain 
\be \label{app-0}
L_{\theta,i}=(I+\I\cD{\theta})(\tilde{L}_{\theta,i}). 
\ee
Two more useful relations are 
\begin{align} \label{app1}
\sldin{L_{\theta}^i}{\cD{\theta}(L_{\theta}^j)}&=\Im z_\theta^{ij},\\ \label{app2}
\rldin{\tilde{L}_{\theta}^i}{\cD{\theta}({L}_{\theta}^j)}&=-\I(\tilde{g}_\theta^{ij}-g_\theta^{ij}), 
\end{align}
which can be checked directly from the definitions.

It is well known that the Holevo bound gets simplified significantly 
if the model is D-invariant \cite{holevo}.  
Importantly, D-invariant model enjoys the following proposition, which is due to Holevo: 
\begin{proposition} \label{propDinv}
Let $L_\theta^i$ ($\tilde{L}_\theta^i$) be the SLD (RLD) dual operator 
and $G_\theta$ ($\tilde{G}_\theta$) be the SLD (RLD) Fisher information matrix, respectively. 
Define a $k\times k$ hermite matrix 
by $Z_\theta=[\rldin{L_\theta^i}{L_\theta^j}]_{i,j\in\{1,\dots,k\}}$. 
When the model is D-invariant at $\theta$, $Z_\theta=\tilde{G}_\theta^{-1}$ holds at $\theta$ 
and further the Holevo bound is expressed as 
\be
\forall W\in\cW\ C_\theta^H[W]=C_\theta^R[W] =h_\theta[\vec{L}|W]. 
\ee
\end{proposition}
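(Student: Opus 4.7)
The plan is to first establish the algebraic identity $Z_\theta=\tilde G_\theta^{-1}$ via the intermediate claim $\tilde L_\theta^i=L_\theta^i$ (equality of SLD and RLD dual operators under D-invariance), and then to obtain the Holevo-bound equalities from feasibility of $\vec L=(L_\theta^1,\dots,L_\theta^k)$ together with Lemma~\ref{lem2}.

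For the intermediate claim, observe that D-invariance means $\cD{\theta}$ restricts to a real linear operator on $T_\theta(\cM)$, and therefore its complex-linear extension preserves the complexification $T_\theta(\cM)+\I T_\theta(\cM)$. Under this invariance, $I+\I\cD{\theta}$ is invertible on the complexification (this rests on a short spectral argument that I flag as the main obstacle below), and solving Eq.~\eqref{app-0} inside $T_\theta(\cM)+\I T_\theta(\cM)$ gives $\tilde L_{\theta,i}\in T_\theta(\cM)+\I T_\theta(\cM)$. Consequently each RLD dual expands as $\tilde L_\theta^i=\sum_k c^k L_{\theta,k}$ for a priori complex coefficients $c^k$. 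Pairing with $\del\rho_\theta/\del\theta^l$ and using both $\tr{(\del\rho_\theta/\del\theta^l) X}=\rldin{X}{\tilde L_{\theta,l}}$ and the RLD orthogonality \eqref{orthcond} yields $\sum_k c^k g_{\theta,kl}=\delta^i_{\,l}$; since $G_\theta$ is a real invertible matrix and the right-hand side is real, the unique solution is $c^k=g_\theta^{ki}\in\bbr$, giving $\tilde L_\theta^i=L_\theta^i$.

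The identity $Z_\theta=\tilde G_\theta^{-1}$ then follows immediately by substitution: $z_\theta^{ij}=\rldin{L_\theta^i}{L_\theta^j}=\rldin{\tilde L_\theta^i}{\tilde L_\theta^j}=\tilde g_\theta^{ij}$, where the last equality is the standard dual-basis computation using $\rldin{\tilde L_\theta^i}{\tilde L_{\theta,k}}=\delta^i_{\,k}$ together with hermiticity of $\tilde G_\theta$. Next, I would verify that $\vec L\in\cX_\theta$: hermiticity of each $L_\theta^i$ is automatic as a real combination of SLDs, $\tr{\rho_\theta L_\theta^i}=0$ inherits from the corresponding SLD property, and the normalization is precisely the orthogonality \eqref{orthcond}. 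Evaluating the Holevo functional at this feasible point and inserting $Z_\theta=\tilde G_\theta^{-1}$ gives $h_\theta[\vec L|W]=\Tr{W\Re\tilde G_\theta^{-1}}+\Trabs{W\Im\tilde G_\theta^{-1}}=C_\theta^R[W]$. Lemma~\ref{lem2} supplies $C_\theta^H[W]\ge C_\theta^R[W]$, while minimality together with feasibility of $\vec L$ forces $C_\theta^H[W]\le h_\theta[\vec L|W]=C_\theta^R[W]$, sandwiching the Holevo bound to the desired equality.

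The main obstacle I anticipate is the invertibility of $I+\I\cD{\theta}$ on $T_\theta(\cM)+\I T_\theta(\cM)$ under D-invariance, which underpins $\tilde L_{\theta,i}\in T_\theta(\cM)+\I T_\theta(\cM)$. The standard argument uses antisymmetry of $\cD{\theta}$ with respect to the positive-definite SLD inner product, which forces the eigenvalues of $\cD{\theta}|_{T_\theta(\cM)}$ to be purely imaginary, so that $I+\I\cD{\theta}$ has no zero eigenvalue on the complexified subspace. Once this point is dispensed with, the remaining steps reduce to a few direct substitutions.
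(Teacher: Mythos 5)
Your overall architecture is sound and it reaches the stated conclusions by a route genuinely different from the paper's. The paper (Appendix~\ref{sec:appc-1}, Lemma~\ref{lem4}) never inverts $I+\I\cD{\theta}$: it first expands $\cD{\theta}(L_\theta^i)$ in the SLD basis and computes the coefficients to be $\Im z_\theta^{ij}$, deduces $Z_\theta=\tilde{G}_\theta^{-1}$ from Eqs.~\eqref{app1}--\eqref{app2}, and only then obtains $L_\theta^i=\tilde{L}_\theta^i$; for the bound it establishes the matrix inequality $Z_\theta[\vec{X}]\ge Z_\theta$ for all $\vec{X}\in\cX_\theta$ (Appendix~\ref{sec:appc-2}). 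You reverse the order ($\tilde{L}_\theta^i=L_\theta^i$ first, $Z_\theta=\tilde{G}_\theta^{-1}$ by substitution) and replace the matrix inequality by the cheaper sandwich ``feasibility of $\vec{L}$ plus Lemma~\ref{lem2},'' which is entirely adequate for the proposition as stated, though it yields slightly less (the paper's matrix inequality also identifies $\vec{L}$ as the minimizer and gives $C_\theta^H=C_\theta^Z$ directly). Your dual-basis and feasibility computations are correct (modulo a harmless missing complex conjugate in $\sum_k \bar{c}^k g_{\theta,kl}=\delta^i_{\,l}$, which does not affect the conclusion that the $c^k$ are real).

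The one step you flag as the main obstacle is, however, justified incorrectly as written. Antisymmetry of $\cD{\theta}$ with respect to the SLD inner product does force its eigenvalues on the complexified tangent space to be purely imaginary, say $\I\mu$ with $\mu\in\bbr$; but then the corresponding eigenvalue of $I+\I\cD{\theta}$ is $1+\I(\I\mu)=1-\mu$, which vanishes precisely when $\mu=1$. Purely imaginary spectrum of $\cD{\theta}$ therefore does \emph{not} rule out a kernel of $I+\I\cD{\theta}$. The correct argument uses the faithfulness of $\rho_\theta$: from $\langle X,Y\rangle_\rho^+=\langle X,(I+\I\cD{\rho})(Y)\rangle_\rho$ one gets $\langle X,(I+\I\cD{\theta})(X)\rangle_{\rho_\theta}=\tr{\rho_\theta XX^*}>0$ for every $X\neq 0$, so $I+\I\cD{\theta}$ is positive definite with respect to the SLD inner product on all of $\lofh$, hence injective everywhere and invertible on the ($\cD{\theta}$-invariant, hence $(I+\I\cD{\theta})$-invariant) complexified tangent space. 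This also guarantees that the preimage of $L_{\theta,i}$ you construct inside $T_\theta(\cM)+\I T_\theta(\cM)$ is the \emph{global} solution of Eq.~\eqref{app-0}, i.e.\ really equals $\tilde{L}_{\theta,i}$, a point your argument tacitly needs. With this repair the proof is complete.
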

This statement can be proven in several different manners \cite{holevo,HM08,YFG13}. 

In passing, we note that the expression $h_\theta[\vec{L}|W]$ in the above proposition 
is also expressed as $h_\theta[\vec{L}|W]=\Tr{W \Re Z_\theta} +\Trabs{W\Im Z_\theta}$ in terms of 
the matrix $Z_\theta$. When the model is not D-invariant, $h_\theta[\vec{L}|W]$ does not 
seem to play any important role. This is because the quantity 
\be \label{grldbound}
C_\theta^Z[W]:=\Tr{W \Re Z_\theta} +\Trabs{W\Im Z_\theta}, 
\ee
is always greater or equal to the Holevo bound, i.e., $C_\theta^Z[W]\ge C_\theta^H[W] $ for all weight matrices. 
Nevertheless, as will be shown in this paper, this is an important quantity 
and we call it as the {\it D-invariant bound} in our discussion.  
We note that this quantity \eqref{grldbound} was also named as the {\it generalized RLD CR bound} 
by Fujiwara and Nagaoka \cite{FN99} in the following sense. 
When a model fails to satisfy some of the regularity conditions, the RLD operators do not exist always. 
Even in this case, when the model is D-invariant, then the above bound \eqref{grldbound} is well defined 
and provides the achievable bound for a certain class of models, known as the coherent model \cite{FN99}. 

Another remark regarding this proposition is that the converse statement 
also holds. 
\begin{theorem}\label{thmDinv}
For any $k$-parameter model $\cM$ on any dimensional Hilbert space under the regularity conditions, 
the following equivalence holds:
\begin{align}\nonumber
&\mbox{$\cM$ is D-invariant at $\theta$.}\\ \nonumber
\Leftrightarrow&\ \forall W\in\cW\ C_\theta^H[W]=C_\theta^R[W].\\  
\Leftrightarrow&\ \forall W\in\cW\ C_\theta^H[W]=C_\theta^Z[W]. 
\end{align}
\end{theorem}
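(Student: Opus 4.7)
The forward direction, that D-invariance at $\theta$ implies both equalities, is provided by Proposition \ref{propDinv}. My plan is to establish the two converse implications via a variational analysis of the Holevo function $h_\theta[\vec{X}|W]$ around $\vec{X}=\vec{L}$. First, I would parametrize any admissible array $\vec{X}\in\cX_\theta$ as $\vec{X}=\vec{L}+\vec{Y}$, where each $Y^i$ is hermite, $\rho_\theta$-centered, and SLD-orthogonal to $T_\theta(\cM)$; uniqueness of this decomposition follows from the locally unbiased conditions together with the duality \eqref{orthcond}. A direct expansion then gives $\Re Z_\theta[\vec{X}]=G_\theta^{-1}+[\sldin{Y^i}{Y^j}]\succeq G_\theta^{-1}$ and $\Im Z_\theta[\vec{X}]=\Im Z_\theta+B(\vec{Y})+\mathrm{(quadratic)}$, where
\begin{equation*}
B(\vec{Y})_{ij}=-\sldin{Y^i}{P_\perp\cD{\theta}(L_\theta^j)}+\sldin{Y^j}{P_\perp\cD{\theta}(L_\theta^i)},
\end{equation*}
and $P_\perp$ denotes the SLD-orthogonal projection onto $T_\theta(\cM)^\perp$. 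Varying $\vec{Y}$ over admissible directions shows that $B$ vanishes identically if and only if $P_\perp\cD{\theta}(L_\theta^i)=0$ for every $i$, i.e., precisely the D-invariance condition.

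For $C_\theta^H[W]=C_\theta^Z[W]$ for all $W\Rightarrow$ D-invariance, note that $h_\theta[\vec{L}|W]=C_\theta^Z[W]$, so the hypothesis asserts that $\vec{L}$ minimizes $h_\theta[\cdot|W]$ over $\cX_\theta$ for every $W\in\cW$. Because $\Tr{W\Re Z_\theta[\vec{L}+\epsilon\vec{Y}]}$ has no first-order term in $\epsilon$, the whole first-order variation is carried by $\Trabs{W(\Im Z_\theta+\epsilon B(\vec{Y}))}$. Sign-flipping $\vec{Y}\mapsto-\vec{Y}$ forces this derivative to vanish, and varying $W$ over an open set where the spectral functional is smooth, together with varying $\vec{Y}$ across admissible directions, then forces $B\equiv 0$, from which D-invariance follows by the characterization above.

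For $C_\theta^H[W]=C_\theta^R[W]$ for all $W\Rightarrow$ D-invariance I would use a D-invariant extension argument. Construct an enlarged regular model $\tilde{\cM}$ whose SLD tangent space at $\theta$ is the smallest $\cD{\theta}$-invariant subspace containing $T_\theta(\cM)$. Proposition \ref{propDinv} applied to $\tilde{\cM}$ with a block-padded weight $\tilde{W}=\mathrm{diag}(W,0)$ yields $C_\theta^H[\tilde{\cM},\tilde{W}]=C_\theta^R[\tilde{\cM},\tilde{W}]$, and a Schur-complement computation on the block RLD Fisher information identifies this common value with the original $C_\theta^R[\cM,W]$. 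Combined with the chain $C_\theta^R[\cM,W]\le C_\theta^H[\cM,W]\le C_\theta^H[\tilde{\cM},\tilde{W}]$ (the first by Lemma \ref{lem2}, the second because enlarging the parameter set while weighting the new directions with zero only restricts the admissible set), the hypothesis saturates the chain and forces $\tilde{T}_\theta=T_\theta(\cM)$, so $\cM$ must be D-invariant.

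The main obstacle is the non-smoothness of the spectral functional $M\mapsto\Trabs{WM}$ at points where $W^{1/2}\Im Z_\theta W^{1/2}$ is singular: the first-order variational step in the $C^Z$-converse requires either a subgradient treatment or a small perturbation of $W$ to remove degeneracies. Establishing the block identity $C_\theta^R[\tilde{\cM},\tilde{W}]=C_\theta^R[\cM,W]$ in the $C^R$-converse also demands careful bookkeeping of the RLD Fisher information on the extended tangent space.
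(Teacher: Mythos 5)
Your forward direction and overall architecture (everything reduces to Proposition \ref{propDinv} plus two converses) match the paper, but both converse arguments have genuine gaps. For the $C^Z$ converse, your expansion is correct ($\Re Z_\theta[\vec{L}+\epsilon\vec{Y}]$ has no linear term, the linear term sits entirely in the imaginary part) and the characterization $B\equiv 0\Leftrightarrow P_\perp\cD{\theta}(L_\theta^i)=0\ \forall i\Leftrightarrow$ D-invariance is sound. But the step you flag as a technical ``obstacle'' is not repairable: when $\Im Z_\theta=0$, the matrix $W^{1/2}(\Im Z_\theta)W^{1/2}$ is singular for \emph{every} $W\in\cW$, the map $\epsilon\mapsto\Trabs{W\epsilon B}=|\epsilon|\,\Trabs{WB}$ has one-sided derivative $\Trabs{WB}\ge 0$ at $\epsilon=0$, and the first-order stationarity of $\vec{L}$ holds for every $B$; neither a subgradient treatment nor perturbing $W$ changes this. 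Worse, the implication is actually false there: $\Im Z_\theta=0$ gives $C_\theta^Z[W]=C_\theta^S[W]$, and since $C_\theta^S[W]\le C_\theta^H[W]\le C_\theta^Z[W]$ always, every asymptotically classical model satisfies $C_\theta^H[W]=C_\theta^Z[W]$ for all $W$ without being D-invariant (e.g.\ the model of Sec.~\ref{sec5-2}, for which $\vecin{\stheta}{\del_i\stheta}\neq 0$). So the third line of the theorem requires the extra hypothesis $\Im Z_\theta\neq 0$; note that the paper's own argument in Appendix~\ref{sec:appc-2} has the same soft spot, since the direction ``$\forall W\ C_\theta^H[W]=C_\theta^Z[W]\Rightarrow\forall\vec{X}\in\cX_\theta\ Z_\theta[\vec{X}]\ge Z_\theta$'' of Eq.~\eqref{Dequiv1} is asserted without justification and fails in exactly this degenerate case.

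For the $C^R$ converse your route is genuinely different from the paper's, but the pivotal identity $C_\theta^R[\tilde{\cM},\tilde{W}]=C_\theta^R[\cM,W]$ is wrong. The top-left $k\times k$ block of $\tilde{G}_{\tilde{\cM}}^{-1}$ is the inverse of the Schur complement $\tilde{G}_{11}-\tilde{G}_{12}\tilde{G}_{22}^{-1}\tilde{G}_{21}$, not the inverse of the block $\tilde{G}_{11}$ that is the RLD Fisher information of $\cM$; the two differ by a positive semidefinite correction that vanishes only if the added tangent directions are RLD-orthogonal to the original ones, which cannot be arranged while also making $\tilde{\cM}$ D-invariant unless $\cM$ was D-invariant to begin with. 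Your sandwich therefore only yields $C_\theta^R[\cM,W]\le C_\theta^H[\cM,W]\le C_\theta^R[\tilde{\cM},\tilde{W}]$, where the gap between the two ends is precisely the quantity you are trying to show vanishes, so the argument is circular. The paper closes this direction by a different device: it relaxes the minimization from hermitian $\vec{X}\in\cX_\theta$ to arbitrary operators, shows the relaxed problem has the unique minimizer $\vec{\tilde{L}}_\theta$ with $Z_\theta[\vec{X}]\ge\tilde{G}_\theta^{-1}$, concludes from $C_\theta^H[W]=C_\theta^R[W]$ for all $W$ that the RLD dual operators must be hermitian, and then reads off D-invariance from $(I+\I\cD{\theta})(\tilde{L}_\theta^i)=\sum_j\tilde{g}^{ji}L_{\theta,j}$ and the invertibility of $\Re\tilde{G}_\theta^{-1}$. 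You would need to adopt an argument of this kind to complete the $C^R$ direction.
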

This equivalence for the D-invariant model might have been 
known for some experts, but it was not stated explicitly in literature to our knowledge \cite{nagaokaseminar}. 
Sketch of proof is given in Appendix \ref{sec:appc-2} for the sake of reader's convenience. 
 
We remark that the Holevo bound for a general model, which is not D-invariant, exhibits 
a gap among $C_\theta^Z[W]$, $C_\theta^R[W]$, and $C_\theta^S[W]$. The following relation holds in general: 
\be
C_\theta^Z[W]\ge C_\theta^H[W]\ge\max\{C_\theta^S[W],C_\theta^R[W] \},
\ee
for all weight matrices $W$. 
From this general inequality, it is clear that the condition of D-invariance shrinks 
the gap between $C_\theta^Z[W]$ and $C_\theta^R[W]$ to zero. 
The Holevo bound then coincides with the RLD and D-invariant bounds. 

\section{The Holevo bound for qubit models} \label{sec3}
In this section we consider a model for quantum two-level system, a qubit model. 
For mixed-state models, where parametric states are rank-2 for all $\theta\in\Theta$, 
possible numbers of the parameters are from one to three. 
As stated in Sec.~\ref{sec2-3}, the Holevo bound for one-parameter qubit model is solved 
and is given by Theorem \ref{thm3}. When the number of parameters is equal to 
three, on the other hand, it is easy to show that the model becomes D-invariant as follows.  
Since three SLD operators are linearly independent, they expand 
the set of all linear operators $X$ satisfying the condition $\tr{\rho_\theta X}=0$. 
In other words, the SLD tangent space is same as this space 
and hence the SLD tangent space is always D-invariant. 
In this case, the Holevo bound is given as Eq.~\eqref{rldbound}. 
Therefore, the two-parameter case needs to be solved explicitly. 
In the following, we consider two-parameter qubit models of mixed states only. 
Further, the regularity conditions mentioned before are assumed throughout our discussion. 

Upon performing this optimization to derive an explicit formula for the Holevo bound, 
it is convenient to utilize the Bloch-vector formalism. 
A similar technique has been used by Watanabe, where all operators are expanded 
in terms of a basis of Lie algebras \cite{watanabeD}. 
In the next subsection, we present 
necessary machinery and then solve the two-parameter case. 

\subsection{Bloch-vector formalism for qubit estimation problem}\label{sec3-1}
In this subsection, we present a formalism in which SLD operators are 
represented by three-dimensional real vector. This is motivated by the well-known 
one-to-one mapping between a given qubit state and three-dimensional real vector.  
Thus, any qubit model can be represented by a family of three-dimensional real vectors as
\be \label{qbmodel}
\cM_{\cB}=\left\{\v{s}_{\theta}=(s_\theta^1,s_\theta^2,s_\theta^3 )\in\cB\, |\, \theta\in\Theta \right\},
\ee
with $\cB=\{ x\in\bbr^3\,|\,|x| <1\}$ the interior of the Bloch ball. 
To simplify notations, we define the standard inner product and 
the outer product for three-dimensional complex vectors by 
\[
\vecin{\v a}{\v b}=\sum_{i=1,2,3}\bar{a}_i b_i,\quad
\ket{\v a}\bra{\v b}=\Big[a_i \bar{b}_j \Big]_{i,j\in\{1,2,3\}},
\]
respectively, where $\bar{a}$ denotes the complex conjugation of $a$. 
The outer product is a $3\times3$ matrix whose action onto a vector 
$\v c\in\bbc^3$ is $\ket{\v a}\bra{\v b} {\v c}=\vecin{\v b}{\v c}{\v a}$.

We first observe that the one-to-one correspondence between 
the SLD operator and a four-dimensional vector when is expanded 
in terms of the basis $\{ I, \sigma_1,\sigma_2,\sigma_3\}$ with 
$\sigma_j$ usual Pauli spin matrices for spin-1/2 particles. 
Since the SLD operators belong to the SLD tangent space, 
the relation $\tr{\rho_\theta L_{\theta,i}}=\sldin{I}{L_{\theta,i}}=0$ holds, 
i.e., they are orthogonal to the identity operator with respect 
to the SLD inner product.  
This leads to the following constraint:  
\be
\tr{L_{\theta,i}}=-\vecin{\stheta}{ \v{\ell}_{\theta,i}},
\ee
where $\v\ell_{\theta,i}=(\ell^1_{\theta,i},\ell^2_{\theta,i},\ell^3_{\theta,i})^T$ with 
$\ell^j_{\theta,i}=\tr{ \sigma_j L_{\theta,i}}$ is a three-dimensional real vector. 
Thus, we have a one-to-one mapping from the SLD operator $L_{\theta,i}$ 
to the three-dimensional real vector $\v\ell_{\theta,i}$. 
The vector $\v\ell_{\theta,i}$ shall be referred to as the {\it SLD Bloch vector} in this paper.  

It is straightforward to solve the operator equation \eqref{srdef}, 
which defines the SLD operators, and the SLD Bloch vector is
\be
\v{\ell}_{\theta,i}= \del_i\stheta+\frac{ \vecin{\del_i \stheta}{\stheta}}{1-s^2_\theta}\stheta, 
\ee
where $s_\theta=|\stheta|$ denotes the length of the Bloch vector 
and $ \del_i=\del/\del\theta^i$ is the $i$th partial derivative. 
To proceed further, we find it convenient to introduce a $3\times 3$ matrix: 
\be \label{Qop1}
Q_{\theta}:=\openone+\frac{\ket{\stheta}\bra{\stheta}}{1-s_\theta^2}, 
\ee
with $\openone$ the identity matrix acting on the three-dimensional vector space $\bbc^3$. 
It follows from the definition that $Q_{\theta}$ is a real and positive matrix with eigenvalues 
$1,1,(1-s^2_\theta)^{-1}$ and its inverse is 
\be\label{Qop2}
Q_{\theta}^{-1}=\openone-\ket{\stheta}\bra{\stheta}.   
\ee
The SLD Bloch vector is then expressed as
\be
\v{\ell}_{\theta,i}=Q_\theta \del_i \stheta\quad
(\Leftrightarrow\ Q^{-1}_\theta\v{\ell}_{\theta,i}= \del_i \stheta). 
\ee
The $(i,j)$ component of the SLD Fisher information is 
\be \label{sldqubit}
g_{\theta,ij}=\vecin{\v{\ell}_{\theta,i}}{Q_\theta^{-1}\v{\ell}_{\theta,j}}=
 \vecin{ \del_i \stheta}{Q_\theta  \del_j \stheta}. 
\ee
Let $g^{ij}_\theta=(G_\theta^{-1})_{ij}$ be the $(i,j)$ component of the inverse SLD Fisher matrix 
and we define the SLD Bloch dual vector $\v{\ell}_{\theta}^i$ by 
\be\label{sldcot}
\v{\ell}_{\theta}^i=\sum_{j}g^{ji}_\theta\v{\ell}_{\theta,j}, 
\ee
then, the following orthogonality condition holds:
\be
\vecin{\v{\ell}_{\theta}^i}{Q_\theta^{-1}\v{\ell}_{\theta,j}}=\delta^i_{j}, 
\ee
which corresponds to Eq.~\eqref{orthcond}. 
The inverse of SLD Fisher information matrix is also expressed as 
\be \label{sldQinv}
g^{ij}_\theta=\vecin{\v{\ell}_{\theta}^i}{Q_\theta^{-1}\v{\ell}_{\theta}^j}. 
\ee

The same line of arguments holds for RLD operators and RLD Fisher information. 
The only difference is here is that the RLD Bloch vector becomes complex in general. 
Define a $3\times 3$ complex matrix: 
\be \label{Qtop1}
\tilde{Q}_{\theta}:=\frac{1}{1-s_\theta^2}\left(\openone-\I F_\theta\right), 
\ee
where $(F_\theta)_{ij}:= \sum_{k}\epsilon_{i k j}s_{\theta,k} $ with $\epsilon_{i k j}$ 
the completely antisymmetric tensor. The action of $F_\theta$ is to give the 
exterior product of two vectors, i.e., $F_\theta \v a=\stheta\times \v a$ for $\v a\in\bbc^3$. 
From this definition, $\tilde{Q}_{\theta}$ is also strictly positive and its inverse is given by 
\be\label{Qtop2}
\tilde{Q}_{\theta}^{-1}=\openone-\ket{\stheta}\bra{\stheta}+\I F_\theta=Q_\theta^{-1}+\I F_\theta.   
\ee
The RLD Bloch vector is
\be
\tilde{\v\ell}_{\theta,i}=\tilde{Q}_\theta \del_i \stheta\quad
(\Leftrightarrow\ \tilde{Q}^{-1}_\theta\tilde{\v\ell}_{\theta,i}= \del_i \stheta), 
\ee
and the RLD Fisher information matrix is
\be\label{rldqubit}
\tilde{g}_{\theta,ij}=\vecin{\tilde{\v\ell}_{\theta,i}}{\tilde{Q}_\theta^{-1}\tilde{\v\ell}_{\theta,j}}=
 \vecin{ \del_i \stheta}{\tilde{Q}_\theta  \del_j \stheta}. 
\ee
Define the RLD Bloch dual vector by
\be\label{rldcot}
\tilde{\v\ell}_{\theta}^i=\sum_{j}\tilde{g}^{ji}_\theta\tilde{\v\ell}_{\theta,j}, 
\ee
then we have 
\be
\vecin{\tilde{\v\ell}_{\theta}^i}{\tilde{Q}_\theta^{-1}\tilde{\v\ell}_{\theta,j}}=\delta^i_{j}. 
\ee

Other useful relations are listed below without detail calculations. 
First, there is a one-to-one correspondence between SLD and RLD Bloch vector. This is 
given by
\be
\v{\ell}_{\theta,i}=\left(\openone+\I F_\theta \right) \tilde{\v\ell}_{\theta,i}\quad
(\Leftrightarrow\  \tilde{\v\ell}_{\theta,i}=\left(Q_\theta^{-1}-\I F_\theta \right) \v{\ell}_{\theta,i}).  
\ee

Second, the vector $\v{\ell}_{\theta,i}-\tilde{\v\ell}_{\theta,i}$ is orthogonal to the Bloch vector $\stheta$. 
Defining 
\be\label{gamma}
\gamma_{\theta,i}:=\vecin{\stheta}{\v{\ell}_{\theta,i}},\quad
\tilde\gamma_{\theta,i}:=\vecin{\stheta}{\tilde{\v\ell}_{\theta,i}}, 
\ee
this is expressed as
\be
\gamma_{\theta,i}=\tilde\gamma_{\theta,i}. 
\ee

Third, the SLD Fisher information and real part of the RLD Fisher information 
are related by
\be \label{sldQrld}
\frac{1}{1-s_\theta^2}\,{g_{\theta,ij}}-\Re\tilde{g}_{\theta,ij}=\gamma_{\theta,i}\gamma_{\theta,j}.  
\ee
In other words, the matrix $(1-s_\theta^2)^{-1}\,G_{\theta}-\Re\tilde{G}_{\theta}$ is rank one.

\subsection{Two-parameter qubit model}\label{sec3-2}
In this subsection, we consider an arbitrary two-parameter qubit model, that is 
the parameter to be estimated is $\theta=(\theta_1,\theta_2)\in\Theta$. 
In order to derive an explicit expression for the Holevo bound, 
we first rewrite the Holevo bound \eqref{hbound} in terms of the Bloch vectors. 

A linear operator which satisfies $\tr{\rho_\theta X^i}=0$ can be expressed as 
\be
X^i=-\vecin{\stheta}{\v x^i}I+{\v x^i}\cdot{\v \sigma} .  
\ee 
Let $T_{\theta,i}^{\bot}=\{\v x\in\bbr^3\,|\, \vecin{x}{\del_i \v \stheta}=0 \}$ 
be the orthogonal space to the $i$th derivative of the Bloch vector and 
an element of the set appearing in the definition \eqref{Xset} takes the form of 
$\vec{X}=(X^1,X^2)$ with 
\begin{align}\nonumber \label{h2set}
X^i&=-\vecin{\stheta}{\v x^i}I+{\v x^i}\cdot{\v \sigma},\\
\v{x}^i&\in T_{\theta,j}^{\bot}\mbox{ for $j\neq i$\quad and}\quad \vecin{\v{x}^i}{\del_i\stheta}=1.
\end{align}
Thus, the set of operators $\vec{X}$ can be mapped to 
the set of vectors $\vec{x}=(\v x^1,\v x^2)^T\in\bbr^6$. 
Using this form of Bloch vector representation, the $(i,j)$ component of the $Z_\theta[\vec{x}]$ matrix 
and the Holevo function read
\begin{align}\label{h2function}
z_\theta^{ij}[\vec x]&= \vecin{\v x^i}{\tilde{Q}_\theta^{-1}\v x^j}, \\ \nonumber
h_\theta[\vec{x}|W]&=\sum_{i,j=1}^2 \left[ w_{ij}\vecin{\v x^i}{Q_\theta^{-1}\v x^j} 
+\sqrt{\det W}\big| \vecin{\v x^i}{F_\theta\v x^j} \big| \right],
\end{align}
for a given $2\times2$ weight matrix $W=[w_{ij}]_{i,j\in\{1,2\}}$. 

We note that the Holevo function \eqref{h2function} is a 
quadratic function of the six-dimensional vector $\vec{x}$.
The minimization of this function under the constraints \eqref{h2set} can be 
handled with a standard procedure. The only point needs to be
taken is that the function is not differentiable for all points. 
Since the number of free variables for the optimization is 
$6-4$(the number of independent constraints)$=2$, 
we take the following substitution: 
\be
\v x^i=\v\ell_\theta^i+\xi^i \v\ell_{\theta}^{\bot}, 
\ee
where $\ell_{\theta}^{\bot}=\del_1\stheta\times\del_2\stheta$ is a 
vector orthogonal to both $\del_1\stheta$ and $\del_2\stheta$ 
and $\vec\xi=(\xi^1,\xi^2)^T\in\bbr^2$ is a free variable without any constraint.   
With this substitution, the Holevo function is significantly simplified as follows. 
\begin{lemma}\label{lem-holevo2}
For a two-parameter qubit model, the Holevo bound takes 
the following minimization form without any constraint: 
\be
C_\theta^H[W]=\min_{\vec{\xi}\in\bbr^2}h_\theta[\vec{\xi}|W],  
\ee
where the function $h_\theta[\vec{\xi}|W]$ is defined by 
\begin{multline} \label{h3function}
h_\theta[\vec{\xi}|W]=\Tr{WG_\theta^{-1}}
+\vecin{\v\ell_{\theta}^\bot}{{Q}_{\theta}^{-1}\v\ell_{\theta}^\bot} 
(\vec\xi|W\vec\xi)\\
+2\sqrt{\det W}\left| \vecin{\v \ell_\theta^1}{F_\theta\v \ell_\theta^2}+(1-s_\theta^2)(\vec{\gamma}_\theta|\vec{\xi}) \right|. 
\end{multline}
\end{lemma}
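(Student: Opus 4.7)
The plan is to parametrize the constraint set appearing in \eqref{h2set} by two unconstrained real variables and then evaluate the Bloch-form Holevo function \eqref{h2function} on that parametrization. The constraints $\v x^i\in T_{\theta,j}^\bot$ (for $j\neq i$) together with $\vecin{\v x^i}{\del_i\stheta}=1$ are equivalent to the four scalar equations $\vecin{\v x^i}{\del_j\stheta}=\delta^i_{\,j}$, a linear system on the six-dimensional space of pairs $(\v x^1,\v x^2)$. The SLD dual Bloch vectors furnish a particular solution, since $\vecin{\v\ell_\theta^i}{Q_\theta^{-1}\v\ell_{\theta,j}}=\delta^i_{\,j}$ and $Q_\theta^{-1}\v\ell_{\theta,j}=\del_j\stheta$. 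By regularity condition iii, $\del_1\stheta$ and $\del_2\stheta$ are linearly independent, so $\v\ell_\theta^\bot=\del_1\stheta\times\del_2\stheta$ is nonzero and spans the one-dimensional homogeneous solution space. Consequently the general admissible $\v x^i$ takes exactly the form $\v x^i=\v\ell_\theta^i+\xi^i\v\ell_\theta^\bot$ with $\vec\xi=(\xi^1,\xi^2)^T\in\bbr^2$ arbitrary, turning the constrained minimization into an unconstrained one.

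Next I would substitute this parametrization into the symmetric part $\sum_{i,j}w_{ij}\vecin{\v x^i}{Q_\theta^{-1}\v x^j}$. Using self-adjointness of the real symmetric operator $Q_\theta^{-1}$, the mixed terms collapse to $\vecin{\v\ell_\theta^i}{Q_\theta^{-1}\v\ell_\theta^\bot}=\sum_k g_\theta^{ki}\vecin{\del_k\stheta}{\v\ell_\theta^\bot}$; each of these scalar triple products has a repeated argument and therefore vanishes. The pure $\v\ell_\theta^i$ part equals $\sum_{i,j}w_{ij}g_\theta^{ij}=\Tr{WG_\theta^{-1}}$ by \eqref{sldQinv}, and the pure $\vec\xi$ part contributes $\vecin{\v\ell_\theta^\bot}{Q_\theta^{-1}\v\ell_\theta^\bot}(\vec\xi|W\vec\xi)$. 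Together these reproduce the first two terms in \eqref{h3function}.

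For the antisymmetric piece I would compute $\vecin{\v x^1}{F_\theta\v x^2}$ using the identification $F_\theta\v a=\stheta\times\v a$. The $\xi^1\xi^2$ coefficient, $\vecin{\v\ell_\theta^\bot}{F_\theta\v\ell_\theta^\bot}$, vanishes because $F_\theta$ is real antisymmetric and $\v\ell_\theta^\bot$ is real; the constant term is $\vecin{\v\ell_\theta^1}{F_\theta\v\ell_\theta^2}$, which appears unchanged in \eqref{h3function}. For the linear-in-$\xi$ cross terms, the BAC-CAB identity applied to $\v\ell_\theta^i\times(\del_1\stheta\times\del_2\stheta)$, combined with the orthonormality $\v\ell_\theta^i\cdot\del_j\stheta=\delta^i_{\,j}$, gives $\v\ell_\theta^2\times\v\ell_\theta^\bot=\del_1\stheta$ and $\v\ell_\theta^1\times\v\ell_\theta^\bot=-\del_2\stheta$. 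Cyclic permutation of the scalar triple product then reduces the linear piece to $(\stheta\cdot\del_1\stheta)\xi^1+(\stheta\cdot\del_2\stheta)\xi^2$. Finally, from \eqref{Qop1} and \eqref{gamma} one obtains $\stheta\cdot\del_j\stheta=(1-s_\theta^2)\gamma_{\theta,j}$, so the linear contribution is $(1-s_\theta^2)(\vec\gamma_\theta|\vec\xi)$. Restoring the prefactor $2\sqrt{\det W}$ and the absolute value yields the third term of \eqref{h3function}.

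The main obstacle is the antisymmetric piece: correctly tracking orientations and signs in the BAC-CAB expansion so that the combination $(1-s_\theta^2)\gamma_{\theta,j}$ emerges naturally rather than as an ad hoc rearrangement. Once the affine parametrization of the admissible set is in place, everything else is bookkeeping, and the claim that the minimization is over all of $\bbr^2$ without further constraint is an immediate consequence of the parametrization.
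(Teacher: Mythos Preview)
Your proposal is correct and follows essentially the same route as the paper: parametrize the admissible set by $\v x^i=\v\ell_\theta^i+\xi^i\v\ell_\theta^\bot$ and evaluate the Bloch-form Holevo function termwise. Your treatment is in fact more explicit than the paper's, which simply asserts $\vecin{\v\ell_\theta^i}{Q_\theta^{-1}\v\ell_\theta^\bot}=0$ and the identities $\vecin{\v\ell_\theta^\bot}{F_\theta\v\ell_\theta^2}=(1-s_\theta^2)\gamma_{\theta,1}$, $\vecin{\v\ell_\theta^\bot}{F_\theta\v\ell_\theta^1}=-(1-s_\theta^2)\gamma_{\theta,2}$ without derivation; your use of the scalar-triple-product vanishing and the BAC--CAB expansion supplies exactly those missing steps.
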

In this expression, we introduce the standard inner product for two-dimensional real vector space by 
$(\vec{a}|\vec{b})=a_1b_1+a_2b_2$ 
and $\vec{\gamma}_\theta=(\gamma_{\theta,1},\gamma_{\theta,2})^T$ is given by Eq.~\eqref{gamma}.  
The derivation for this lemma is given in Appendix \ref{sec:appb-0}. 

\subsection{Main result}\label{sec3-3}
In the following, we carry out the above optimization to derive the main result of this paper, Theorem \ref{thm1}. 
We first list several definitions and lemmas. 
\begin{definition}
For a given two-parameter qubit mixed-state model, the SLD CR, RLD CR, and D-invariant bounds are defined by
\begin{align}
C_\theta^S[W]&=\Tr{W G_\theta^{-1}},\\ \nonumber
C_\theta^R[W]&=\Tr{W \Re\tilde{G}_\theta^{-1}} +\Trabs{W\Im\tilde{G}_\theta^{-1}},\\ \nonumber
C_\theta^Z[W]&=\Tr{W \Re Z_\theta} +\Trabs{W\Im Z_\theta}, 
\end{align}
respectively, where $G_\theta$ is the SLD Fisher information matrix, $\tilde{G}_\theta$ is the RLD Fisher information matrix, 
and $Z_\theta:=\big[z_\theta^{ij} \big]_{i,j\in\{1,2\}}$ with $z_\theta^{ij}:= \tr{\rho_\theta L_\theta^j L_\theta^i}$ as before. 
\end{definition}
\begin{lemma}\label{lem7}
For any two-parameter qubit mixed-state model, the following relations hold: 
\begin{enumerate}
\item $\ds\vecin{\v\ell_{\theta}^\bot}{{Q}_{\theta}^{-1}\v\ell_{\theta}^\bot} = 
(1-s_\theta^2)\det\, G_\theta=(1-s_\theta^2)^2\det\, \tilde{G}_\theta$.
\item $\ds2\sqrt{\det W}\left| \vecin{\v \ell_\theta^1}{F_\theta\v \ell_\theta^2}\right|=\Trabs{W\Im\tilde{G}_\theta^{-1}}=\Trabs{W\Im Z_\theta}$. 
\item $\ds (\vec{\gamma}_\theta|W^{-1}\vec{\gamma}_\theta)=
\frac{\det W^{-1}G_\theta}{1-s_\theta^2}\left(C_\theta^Z[W]-C_\theta^R[W] \right)$. 
\end{enumerate}
\end{lemma}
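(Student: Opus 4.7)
The plan is to verify the three identities in turn, using the Bloch-vector formalism of Section~\ref{sec3-1} throughout. Items~1 and~2 are geometric identities that fall out from Gram-determinant and scalar-triple-product manipulations; item~3 carries the computational weight and is the main obstacle.

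For item~1, I would apply $Q_\theta^{-1}=\openone-\ket{\v s_\theta}\bra{\v s_\theta}$ from \eqref{Qop2} to write $\vecin{\v\ell_\theta^\bot}{Q_\theta^{-1}\v\ell_\theta^\bot}=|\v\ell_\theta^\bot|^2-\vecin{\v s_\theta}{\v\ell_\theta^\bot}^2$. Using Lagrange's identity for $|\del_1\v s_\theta\times\del_2\v s_\theta|^2$ together with the scalar triple product $\vecin{\v s_\theta}{\v\ell_\theta^\bot}=\det(\v s_\theta,\del_1\v s_\theta,\del_2\v s_\theta)$, this expands as a polynomial in the Gram entries of $(\v s_\theta,\del_1\v s_\theta,\del_2\v s_\theta)$. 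Expanding $(1-s_\theta^2)\det G_\theta$ directly from \eqref{sldqubit}, with $g_{\theta,ij}=\vecin{\del_i\v s_\theta}{\del_j\v s_\theta}+\gamma_{\theta,i}\gamma_{\theta,j}/(1-s_\theta^2)$, produces the same polynomial. For the second equality, I would compute $\det\tilde G_\theta$ from \eqref{rldqubit} by splitting $\tilde Q_\theta=(1-s_\theta^2)^{-1}(\openone-\I F_\theta)$; the antisymmetric piece contributes $\vecin{\v s_\theta}{\v\ell_\theta^\bot}^2/(1-s_\theta^2)^2$ through $|\tilde g_{12}|^2$, and combines with the symmetric Gram part to reproduce $(1-s_\theta^2)^{-2}\vecin{\v\ell_\theta^\bot}{Q_\theta^{-1}\v\ell_\theta^\bot}$.

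For item~2, \eqref{Qtop2} and \eqref{sldQinv} give $z_\theta^{ij}=\vecin{\v\ell_\theta^i}{\tilde Q_\theta^{-1}\v\ell_\theta^j}=g_\theta^{ij}+\I\vecin{\v\ell_\theta^i}{F_\theta\v\ell_\theta^j}$, so $\Im Z_\theta$ is antisymmetric with $(1,2)$-entry $\vecin{\v\ell_\theta^1}{F_\theta\v\ell_\theta^2}$. A direct eigenvalue computation shows $\Trabs{WM}=2\sqrt{\det W}\,|M_{12}|$ for any real antisymmetric $2\times 2$ matrix $M$, which settles the first equality. To match $\Im Z_\theta$ with $\Im\tilde G_\theta^{-1}$, I would compute both $(1,2)$-entries: the explicit Hermitian inverse formula produces $\Im(\tilde G_\theta^{-1})_{12}=-\vecin{\v s_\theta}{\v\ell_\theta^\bot}/\det G_\theta$ after invoking item~1, while expanding $\vecin{\v\ell_\theta^1}{F_\theta\v\ell_\theta^2}=\det(\v\ell_\theta^1,\v s_\theta,\v\ell_\theta^2)$ via $\v\ell_\theta^i=\sum_j g_\theta^{ji}\v\ell_{\theta,j}$ and $\v\ell_{\theta,i}=Q_\theta\del_i\v s_\theta$, where the $\ket{\v s_\theta}\bra{\v s_\theta}$ piece of $Q_\theta$ drops out of the determinant, yields the same value.

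Item~3 is where the real work lies. The starting point is the rank-one identity $\Re\tilde G_\theta=G_\theta/(1-s_\theta^2)-\vec\gamma_\theta\vec\gamma_\theta^T$ from \eqref{sldQrld}. Sherman--Morrison produces $(\Re\tilde G_\theta)^{-1}$, and the $2\times 2$ Hermitian identity $\Re M^{-1}=(\det\Re M/\det M)(\Re M)^{-1}$ (which holds because the off-diagonal imaginary part enters $\det M$ only through its square) combined with the matrix determinant lemma and item~1 yields, after simplification,
\[
G_\theta^{-1}-\Re\tilde G_\theta^{-1}=(1-s_\theta^2)\left[(\vec\gamma_\theta^T G_\theta^{-1}\vec\gamma_\theta)\,G_\theta^{-1}-G_\theta^{-1}\vec\gamma_\theta\vec\gamma_\theta^T G_\theta^{-1}\right].
\]
By item~2 the $\Trabs$ contributions to $C_\theta^Z[W]-C_\theta^R[W]$ cancel, reducing the difference to $\Tr{W(G_\theta^{-1}-\Re\tilde G_\theta^{-1})}$. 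The main obstacle is then the purely algebraic $2\times 2$ identity
\[
\det G_\theta\left[(\vec\gamma_\theta^T G_\theta^{-1}\vec\gamma_\theta)\Tr{WG_\theta^{-1}}-\vec\gamma_\theta^T G_\theta^{-1}WG_\theta^{-1}\vec\gamma_\theta\right]=\det W\cdot\vec\gamma_\theta^T W^{-1}\vec\gamma_\theta,
\]
which I would establish either by direct component-wise expansion—where the coefficients of $\gamma_1^2$, $\gamma_1\gamma_2$, and $\gamma_2^2$ reorganize with a common factor of $\det G_\theta$ into the adjugate quadratic form of $W$—or, more elegantly, via the Cayley--Hamilton relation $\mathrm{adj}(M)=\Tr(M)I-M$ for $2\times 2$ matrices applied to $M=WG_\theta^{-1}$, which exposes the $\det W\cdot W^{-1}=\mathrm{adj}(W)$ structure intrinsically.
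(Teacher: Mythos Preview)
Your argument is correct throughout. For items~1 and~2 you are doing essentially what the paper does: Gram/Lagrange expansions for the determinants, and an explicit computation of the antisymmetric imaginary parts culminating in $\Im Z_\theta=\Im\tilde{G}_\theta^{-1}$.

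For item~3, however, your route and the paper's genuinely differ. The paper never passes to inverses: it starts from $\det W\,(\vec\gamma_\theta|W^{-1}\vec\gamma_\theta)=w_{22}\gamma_{\theta,1}^2+w_{11}\gamma_{\theta,2}^2-2w_{12}\gamma_{\theta,1}\gamma_{\theta,2}$, substitutes the rank-one relation \eqref{sldQrld} in the form $\gamma_{\theta,i}\gamma_{\theta,j}=(1-s_\theta^2)^{-1}g_{\theta,ij}-\Re\tilde{g}_{\theta,ij}$ \emph{componentwise}, and then recognizes each of the two resulting combinations (the $G_\theta$ one and its $\tilde{G}_\theta$ analogue) as $\det G_\theta\,\Tr{WG_\theta^{-1}}$ and $\det\tilde{G}_\theta\,\Tr{W\Re\tilde{G}_\theta^{-1}}$ via the $2\times2$ adjugate formula; item~1 then finishes the calculation in one line. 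Your approach instead inverts first---Sherman--Morrison, the $2\times2$ Hermitian identity $\Re(M^{-1})=(\det\Re M/\det M)(\Re M)^{-1}$, the matrix determinant lemma---to obtain the explicit formula for $G_\theta^{-1}-\Re\tilde{G}_\theta^{-1}$, and Cayley--Hamilton closes the loop. The paper's computation is three or four lines where yours is a page; on the other hand, your derivation produces the closed expression for $G_\theta^{-1}-\Re\tilde{G}_\theta^{-1}$ as a byproduct, which the paper has to derive separately (and in a different form) when proving Lemma~\ref{lem8}.
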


\begin{lemma}\label{lem8}
For any two-parameter qubit model $\cM$, the following conditions are equivalent. 
\begin{enumerate}
\item $\cM$ is D-invariant at $\theta$.
\item $\Re \tilde{G}_\theta^{-1}=G_\theta^{-1}$ at $\theta$. 
\item $\gamma_{\theta,1}=\gamma_{\theta,2}=0$ at $\theta$.
\end{enumerate}
Furthermore, we have the following equivalent characterization for global D-invariance. 
\begin{enumerate}
\item[4.] $\cM$ is globally D-invariant. 
\item[5.] $\Re \tilde{G}_\theta^{-1}=G_\theta^{-1}$ for all $\theta\in \Theta$. 
\item[6.] $|\stheta|$ is independent of $\theta$. 
\end{enumerate}
\end{lemma}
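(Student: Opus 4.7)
The plan is to establish the local equivalences $1\Leftrightarrow 2\Leftrightarrow 3$ first, using condition~3 (vanishing of both $\gamma_{\theta,i}$) as the bridge between the geometric statement~1 and the analytic statement~2, and then to deduce the global equivalences $4\Leftrightarrow 5\Leftrightarrow 6$ from the pointwise versions together with a short direct computation identifying $\gamma_{\theta,i}$ with $\del_i|\stheta|^2$.

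For $2\Leftrightarrow 3$ I would invoke Lemma~\ref{lem7}. Items~1 and~2 of that lemma give $\Trabs{W\Im Z_\theta}=\Trabs{W\Im\tilde G_\theta^{-1}}$, so $C_\theta^Z[W]-C_\theta^R[W]=\Tr{W(\Re Z_\theta-\Re\tilde G_\theta^{-1})}$. A direct computation from the definition of the SLD dual $L_\theta^i=\sum_j g_\theta^{ij}L_{\theta,j}$ and the SLD inner product shows $\Re z_\theta^{ij}=g_\theta^{ij}$, hence $\Re Z_\theta=G_\theta^{-1}$. Substituting this into item~3 of Lemma~\ref{lem7} yields
\[
(\vec\gamma_\theta|W^{-1}\vec\gamma_\theta)=\frac{\det(W^{-1}G_\theta)}{1-s_\theta^2}\,\Tr{W(G_\theta^{-1}-\Re\tilde G_\theta^{-1})},
\]
valid for every $W\in\cW$. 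Varying $W$ then gives the equivalence: if $\vec\gamma_\theta=0$, the left side vanishes identically, which forces the symmetric matrix $G_\theta^{-1}-\Re\tilde G_\theta^{-1}$ to vanish; conversely if $G_\theta^{-1}=\Re\tilde G_\theta^{-1}$, then $(\vec\gamma_\theta|W^{-1}\vec\gamma_\theta)=0$ for every positive definite $W$, so $\vec\gamma_\theta=0$.

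For $1\Leftrightarrow 3$ I would work in the Bloch-vector picture of Section~\ref{sec3-1}. Combining the relation $\langle X,Y\rangle_{\rho_\theta}^{+}=\langle X,(I+\I\cD{\theta})(Y)\rangle_{\rho_\theta}$ with the Bloch formulas for the SLD and RLD inner products, obtained by expanding $\tr{\rho_\theta XY}$ in Pauli components, I would identify the action of $\cD{\theta}$ on the three-dimensional Bloch space as a matrix built from $Q_\theta$ and $F_\theta$. Applying this map to $\v\ell_{\theta,i}=Q_\theta\del_i\stheta$ and using $F_\theta\stheta=\stheta\times\stheta=0$ produces a Bloch image proportional to $\stheta\times\del_i\stheta$. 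The SLD-orthogonal complement in $\bbr^3$ of the Bloch image of $T_\theta(\cM)$ is spanned by $\v\ell_\theta^{\bot}=\del_1\stheta\times\del_2\stheta$, so D-invariance at $\theta$ reduces to the two scalar conditions $(\stheta\times\del_i\stheta)\cdot(\del_1\stheta\times\del_2\stheta)=0$ for $i=1,2$. Expanding each via the quadruple-product identity $(\v a\times\v b)\cdot(\v c\times\v d)=(\v a\cdot\v c)(\v b\cdot\v d)-(\v a\cdot\v d)(\v b\cdot\v c)$ produces a $2\times 2$ homogeneous linear system in the unknowns $\vecin{\stheta}{\del_j\stheta}$, whose coefficient determinant equals $|\del_1\stheta|^2|\del_2\stheta|^2-(\del_1\stheta\cdot\del_2\stheta)^2=|\del_1\stheta\times\del_2\stheta|^2$; this is strictly positive by regularity condition~iii), forcing $\vecin{\stheta}{\del_i\stheta}=0$ for both $i$. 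Since $\gamma_{\theta,i}=\vecin{\stheta}{\del_i\stheta}/(1-s_\theta^2)$, this is precisely condition~3.

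The global equivalences follow immediately: $4\Leftrightarrow 5$ and the equivalence of each with ``condition~3 at every $\theta\in\Theta$'' come from the pointwise versions already established, while the identity $\gamma_{\theta,i}=\del_i(s_\theta^2)/[2(1-s_\theta^2)]$ shows that condition~3 holds throughout $\Theta$ iff $|\stheta|$ is locally, and hence by connectedness globally, constant on $\Theta$ (condition~6). The main technical hurdle I anticipate is the Bloch-level reduction in the step $1\Leftrightarrow 3$: both the explicit matrix form of $\cD{\theta}$ in the Bloch picture and the verification that the resulting $2\times 2$ linear system has nonzero determinant rely crucially on the regularity assumption~iii) (linear independence of $\del_1\stheta,\del_2\stheta$), which is the natural place where regularity enters essentially.
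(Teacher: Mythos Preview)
Your argument is correct, and the global part ($4\Leftrightarrow5\Leftrightarrow6$ via $\gamma_{\theta,i}\propto\del_i|\stheta|^2$) coincides with the paper's. For the local equivalences, however, you take a genuinely different route. The paper obtains $1\Leftrightarrow2$ in one line as a corollary of Lemma~\ref{lem4} (the general characterization $Z_\theta=\tilde G_\theta^{-1}$ of D-invariance) combined with the two-parameter qubit identity $\Im Z_\theta=\Im\tilde G_\theta^{-1}$, and then establishes $2\Leftrightarrow3$ by writing down the explicit rank-one formula
\[
G_\theta^{-1}-\Re\tilde G_\theta^{-1}=\frac{1}{\det\tilde G_\theta}
\left(\begin{array}{cc}\gamma_{\theta,2}^2 & -\gamma_{\theta,1}\gamma_{\theta,2}\\ -\gamma_{\theta,1}\gamma_{\theta,2} & \gamma_{\theta,1}^2\end{array}\right),
\]
derived from Eq.~\eqref{sldQrld} and Lemma~\ref{lem7}-1. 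You instead prove $2\Leftrightarrow3$ variationally through Lemma~\ref{lem7}-3, and $1\Leftrightarrow3$ by computing the Bloch representation of $\cD{\theta}$ directly and reducing D-invariance to a $2\times2$ homogeneous system in $\vecin{\stheta}{\del_i\stheta}$ with determinant $|\v\ell_\theta^\bot|^2>0$. The paper's route is shorter because the operator-theoretic content is absorbed into the abstract Lemma~\ref{lem4}; your route avoids that lemma entirely and is fully self-contained in the Bloch picture, yielding the transparent geometric statement that D-invariance on a qubit is exactly $\stheta\perp\del_i\stheta$, at the cost of a longer model-specific computation.
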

Three remarks regarding the above lemmas are listed: 
First, imaginary parts of the inverse of the RLD Fisher information matrix 
and the $Z_\theta$ matrix are always identical for two-parameter qubit mixed-state models, 
i.e., $\Im \tilde{G}_\theta^{-1}=\Im Z_\theta$, see proof in the appendix. 
Second, if a model expressed as the Bloch vector contains the origin $(0,0,0)$, the model is always D-invariant 
at this point. This is because the condition $\gamma_{\theta,i}=\vecin{\stheta}{\v{\ell}_{\theta,i}}=0$ 
is met at $\stheta=(0,0,0)$. 
Last, a globally D-invariant model is possible if and only if 
the state is generated by some unitary transformation. 
This is because the condition \ref{lem8}-6 in Lemma \ref{lem8} is equivalent 
to preservation of the length of the Bloch vector. 
 
Finally, we need the following lemma for the optimization: 
\begin{lemma}\label{lem9}
For a given positive $2\times 2$ matrix $A$, a real vector $\vec b\in\bbr^2$, and a real number $c$, 
the minimum of the function
\be\nonumber
f(\vec{\xi})=(\vec{\xi}|A\vec\xi)+2\left|(\vec{b}|\vec\xi)+c\right|, 
\ee
is given by 
\be\nonumber
\min_{\vec\xi\in\bbr^2}f(\vec{\xi})=
\begin{cases}
\ds 2|c|-(\vec{b}|A^{-1}\vec{b}) \quad \mathrm{if}\ |c|\ge(\vec{b}|A^{-1}\vec{b}) \\[2ex]
\ds \frac{|c|^2}{(\vec{b}|A^{-1}\vec{b})}\qquad \quad \mathrm{if}\ |c|<(\vec{b}|A^{-1}\vec{b}) 
\end{cases},
\ee
where the minimum is attained by 
\be\nonumber
\vec{\xi}_*=
\begin{cases}
\ds-\mathrm{sign}(c)A^{-1}\vec{b}\quad\quad\mathrm{if}\ |c|\ge(\vec{b}|A^{-1}\vec{b})\\[2ex]
\ds-\frac{c}{(\vec{b}|A^{-1}\vec{b})}A^{-1}\vec{b}\quad\mathrm{if}\ |c|<(\vec{b}|A^{-1}\vec{b})
\end{cases},
\ee
where $\mathrm{sign}(c)$ is the sign of $c$. 
\end{lemma}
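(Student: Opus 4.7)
The plan is to exploit the strict convexity of $f$ and split the analysis according to whether the unique minimizer $\vec{\xi}_*$ lies on the non-differentiable locus $\{\vec{\xi}\in\bbr^2:g(\vec{\xi})=0\}$, where $g(\vec{\xi}):=(\vec{b}|\vec{\xi})+c$. Since $A>0$ the quadratic part is strictly convex and $2|g|$ is convex, so $f$ is strictly convex and coercive; hence a unique minimizer exists. The first-order (subdifferential) condition $0\in\partial f(\vec{\xi}_*)$ then reads $A\vec{\xi}_*+\mu\vec{b}=0$ with $\mu=\mathrm{sign}(g(\vec{\xi}_*))$ when $g(\vec{\xi}_*)\neq 0$, and with $\mu\in[-1,1]$ when $g(\vec{\xi}_*)=0$.

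In the smooth case I would set $\vec{\xi}_*=-\mu A^{-1}\vec{b}$ with $\mu=\pm 1$, then enforce the sign consistency $\mu=\mathrm{sign}(-\mu(\vec{b}|A^{-1}\vec{b})+c)$. This forces $\mu=\mathrm{sign}(c)$ together with the admissibility condition $|c|\ge(\vec{b}|A^{-1}\vec{b})$, and direct substitution gives $f(\vec{\xi}_*)=(\vec{b}|A^{-1}\vec{b})+2(|c|-(\vec{b}|A^{-1}\vec{b}))=2|c|-(\vec{b}|A^{-1}\vec{b})$, matching the first branch. In the remaining non-smooth case $g(\vec{\xi}_*)=0$, the problem reduces to minimizing the quadratic form $(\vec{\xi}|A\vec{\xi})$ under the affine constraint $(\vec{b}|\vec{\xi})=-c$. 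A standard Lagrange multiplier (or completion of the square) yields $\vec{\xi}_*=-c(\vec{b}|A^{-1}\vec{b})^{-1}A^{-1}\vec{b}$ with optimal value $c^{2}/(\vec{b}|A^{-1}\vec{b})$, and the admissibility $|\mu|\le 1$ becomes exactly $|c|\le(\vec{b}|A^{-1}\vec{b})$.

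There is no genuine obstacle here: the lemma is a clean two-dimensional convex-optimization exercise and the main subtlety is only ensuring that the case split on $|c|$ versus $(\vec{b}|A^{-1}\vec{b})$ is exhaustive and that the two branches agree on the common boundary $|c|=(\vec{b}|A^{-1}\vec{b})$, where the two expressions for both $\vec{\xi}_*$ and $f(\vec{\xi}_*)$ coincide by inspection. An equivalent elementary route, avoiding subdifferentials entirely, is to rewrite $f$ on each open half-plane $\{g>0\}$ and $\{g<0\}$ as an unconstrained smooth quadratic, compute the respective unconstrained minima, and either accept the one that lies in its own half-plane or otherwise solve the boundary problem on $\{g=0\}$; the resulting piecewise formula is identical. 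This alternative will likely read more transparently in the final write-up.
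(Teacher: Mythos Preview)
Your proposal is correct, but it follows a different route from the paper. The paper first disposes of the degenerate case $\vec{b}=\vec{0}$, and then, for $\vec{b}\neq\vec{0}$, performs a linear change of variables $\vec{\eta}=BA^{1/2}\vec{\xi}$ (with $B$ built from $A^{-1/2}\vec{b}$) that turns the quadratic form into the isotropic $\alpha^{-1}(\eta_1^2+\eta_2^2)$ with $\alpha=(\vec{b}|A^{-1}\vec{b})$ and the affine part into $|\eta_1+c|$; the variable $\eta_2$ then decouples and the problem becomes a one-dimensional piecewise-quadratic minimization, handled by an elementary case split on the sign of $\eta_1+c$. Your primary argument instead invokes strict convexity and the subdifferential optimality condition $0\in\partial f(\vec{\xi}_*)$, which is more conceptual and would generalize painlessly to $\bbr^n$, whereas the paper's argument is more elementary and fully self-contained, requiring no convex-analysis vocabulary. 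Your ``alternative elementary route'' (minimize the smooth quadratic on each half-space, then on the boundary) is close in spirit to the paper's approach, the difference being that the paper first diagonalizes via the change of variables before splitting into cases. One small point: your write-up does not explicitly single out $\vec{b}=\vec{0}$, whereas the paper does; in your framework it is absorbed into the first branch since then $(\vec{b}|A^{-1}\vec{b})=0\le|c|$, but it may be worth a sentence to make clear the second-branch formula is never invoked with a vanishing denominator.
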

Proofs for the above three lemmas are given in Appendix \ref{sec:appb-2}-\ref{sec:appb-4}. 

\subsubsection{Proof for Theorem \ref{thm1}}
We now prove Theorem \ref{thm1}. 
From the expression of the Holevo function \eqref{h3function}, we can apply Lemma \ref{lem9} 
by identifing 
\begin{align}
A&=\vecin{\v\ell_{\theta}^\bot}{{Q}_{\theta}^{-1}\v\ell_{\theta}^\bot}  W,\\
 \vec b&=(1-s_\theta^2)\sqrt{\det W}\vec{\gamma}_\theta,\\
c&=\sqrt{\det W} \vecin{\v \ell_\theta^1}{F_\theta\v \ell_\theta^2}. 
\end{align}
We need to evaluate $(\vec{b}|A^{-1}\vec{b})$ and $|c|$ and they are calculated as follows. 
\begin{align*} 
(\vec{b}|A^{-1}\vec{b})&= 
(1-s_\theta^2)^2\det{W} (\vec{\gamma}_\theta |(\vecin{\v\ell_{\theta}^\bot}{{Q}_{\theta}^{-1}\v\ell_{\theta}^\bot}  W)^{-1}\vec{\gamma}_\theta)\\
&=\frac{(1-s_\theta^2)^2\det{W} }{ \vecin{\v\ell_{\theta}^\bot}{{Q}_{\theta}^{-1}\v\ell_{\theta}^\bot} }  (\vec{\gamma}_\theta |W^{-1}\vec{\gamma}_\theta)\\
&=C_\theta^Z[W]-C_\theta^R[W], 
\end{align*}
where Lemma \ref{lem7}-1 and \ref{lem7}-3 are used to get the last line. Lemma \ref{lem7}-2 
immediately gives 
\be
2|c|=\Trabs{W\Im Z_\theta}=C_\theta^Z[W]-C_\theta^S[W]. 
\ee
Therefore, we obtain if 
\be \nonumber
|c|\ge(\vec{b}|A^{-1}\vec{b})\Leftrightarrow 
C_\theta^R[W]\ge\frac12 (C_\theta^Z[W]+C_\theta^S[W])
\ee 
is satisfied, the Holevo bound is
\begin{align}
C_\theta^H[W]&=\nonumber
\ds C_\theta^S[W]+ (C_\theta^Z[W]-C_\theta^S[W])- (C_\theta^Z[W]-C_\theta^R[W])\\
&= C_\theta^R[W]. 
\end{align}

If $|c|<(\vec{b}|A^{-1}\vec{b})$ $\Leftrightarrow$ $C_\theta^R[W]<({C_\theta^Z[W]+C_\theta^S[W]})/{2}$ is satisfied, 
on the other hand, the Holevo bound takes the following form: 
\begin{align} \label {sfunction}
C_\theta^H[W]
&=C_\theta^S[W]+\frac{\left[(C_\theta^Z[W]-C_\theta^S[W])/2\right]^2}{C_\theta^Z[W]-C_\theta^R[W]} \nonumber\\
&=C_\theta^R[W]\nonumber\\ 
+&\frac{1}{C_\theta^Z[W]-C_\theta^R[W]} \left(\frac{C_\theta^Z[W]+C_\theta^S[W]}{2}- C_\theta^R[W] \right)^2 \nonumber\\
&=C_\theta^R[W]+S_\theta[W], 
\end{align}
where the function $S_\theta[W]$ is defined in Eq.~\eqref{hsbound}. 
This proves the theorem. $\square$

We remark that from Lemma \ref{lem7} and the positivity of $W$ we always have the relation
\be \label{zger}
C_\theta^Z[W]\ge C_\theta^R[W], 
\ee
and the equality if and only if $ (\vec{\gamma}_\theta|W^{-1}\vec{\gamma}_\theta)=0$ 
$\Leftrightarrow$ $\cM$ is D-invariant at $\theta$ by Lemma \ref{lem8}-3. 
Note if $\cM$ is D-invariant at $\theta$ ($\gamma_{\theta,1}=\gamma_{\theta,2}=0$), 
the condition $C_\theta^R[W]<({C_\theta^Z[W]+C_\theta^S[W]})/{2}$ 
$\Leftrightarrow$ $0\le \mathrm{TrAbs}\{W\Im \tilde{G}_\theta^{-1}\}/2<C_\theta^Z[W]-C_\theta^R[W] $ cannot be satisfied. 
Thus, the obtained Holevo is well defined for all $\theta$ and for arbitrary weight matrix $W$. 

The optimal set of hermite operators attaining the Holevo bound can be given by Lemma \ref{lem9} as follows. 
Define an hermite matrix by
\be
L_\theta^\bot:= -\vecin{\stheta}{\v{\ell}_\theta^\bot}I+{\v{\ell}_\theta^\bot}\cdot{\v \sigma}, 
\ee 
and the function $\zeta(W)$ by 
\begin{multline}\nonumber
\zeta_\theta(W):=\det W^{-1/2}G_\theta^{-1}\\
\times\begin{cases}
 \mathrm{sign}(\Im z_\theta^{12})\quad \mathrm{if}\ C_\theta^R[W]\ge\frac12(C_\theta^Z[W]+C_\theta^S[W])\\[1ex]
\frac{W^{1/2}\Im z_\theta^{12}}{C_\theta^Z[W]-C_\theta^R[W] }\quad \mathrm{otherwise} 
\end{cases}. 
\end{multline}
Then, we have $\vec{X}_*=(X_*^1,X_*^2)=\mathrm{arg}\min h_\theta[\vec{X}|W]$ as  
\begin{align} \nonumber
X_*^1&=L_\theta^1+\zeta_\theta (w_{22}\gamma_{\theta,1}-w_{12}\gamma_{\theta,2}) L_\theta^\bot,\\
X_*^2&=L_\theta^2+\zeta_\theta (-w_{21}\gamma_{\theta,1}+w_{11}\gamma_{\theta,2}) L_\theta^\bot. 
\end{align}

Before moving to discussion and consequence of the main result, we present the following two-alternative expressions.  
Define a function, 
\be
H(x):=\begin{cases}
2|x|-1\quad \mbox{if $|x|\ge1$}\\[1ex]
x^2\qquad \mbox{if $|x|<1$}
\end{cases}, 
\ee 
which is continuous and the first derivative is also continuous for all $x\in\bbr$. 
Then, the Holevo bound is written in a unified manner: 
\begin{multline}\label{h4bound}
C_\theta^H[W]=C_\theta^S[W]+(C_\theta^Z[W]-C_\theta^R[W]) \\
\times H\left(\frac{(C_\theta^Z[W]-C_\theta^S[W])/2}{C_\theta^Z[W]-C_\theta^R[W]}\right). 
\end{multline}
This expression needs a special care when $C_\theta^Z[W]-C_\theta^R[W]\to 0$. 
This case should be understood as the limit $\lim_{a\to0}aH(b/a)=2|b|$. 

The other expression shown in Eq.~\eqref{eq2} follows from the first line of 
Eq.~\eqref{sfunction} by noting $C_\theta^Z[W]-C_\theta^R[W]= \mathrm{Tr}\{W(G_\theta^{-1}-\Re\tilde{G}_\theta^{-1})\}$ 
and $C_\theta^Z[W]-C_\theta^S[W]=\mathrm{TrAbs}\{W\Im \tilde{G}_\theta^{-1}\}$.


\section{Discussion on Theorem \ref{thm1}}\label{sec4} 
In this section, we shall discuss the consequences of Theorem \ref{thm1}. 
This brings several important findings of our paper. 
First is two conditions that characterize special classes of qubit models. 
Second is a transition in the structure of the Holevo bound depending on the choice of the weight matrix. 

\subsection{Necessary and sufficient conditions for special cases}\label{sec4-1} 
The general formula for the Holevo bound for any two-parameter model is rather unexpected in the following sense. 
First of all, it is expressed solely in terms of the three known bounds and a given weight matrix. 
Second, a straightforward optimization for a nontrivial function reads to 
the exactly same expression as the RLD CR bound when the condition 
$C_\theta^R[W]\ge(C_\theta^Z[W]+C_\theta^S[W])/2$ is satisfied. 
As noted before, this condition explicitly depends on the choice of the weight matrix $W$. 
At first sight, this seems to be in contradiction with the general theorem \ref{thmDinv} 
stating that the RLD CR bound can be attained if and only if the model is D-invariant. 
Therefore, we must examine that the Holevo bound 
is identical to the RLD CR bound if and only if the model is D-invariant based on Theorem \ref{thm1}. 
The following proposition confirms that this is indeed so. 
We note that this statement is a special case of Theorem \ref{thmDinv}. 
Here, its proof becomes extremely simple with the obtained formula. 
\begin{proposition}\label{thm10}
For any two-parameter qubit model, the Holevo bound at $\theta$ becomes same as the RLD CR bound 
for all positive weight matrices if and only if the model is D-invariant at $\theta$. That is, 
\be \label{thmeq10}
\forall W \in\cW\ C_\theta^H[W]= C_\theta^R[W]\ \Leftrightarrow\ \mbox{$\cM$ is D-invariant at $\theta$}.
\ee
\end{proposition}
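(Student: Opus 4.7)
The plan is to read off both implications directly from the explicit Holevo-bound formula in Theorem \ref{thm1} by deciding, for each $W$, which branch of \eqref{eq1} is selected. I would use Lemma \ref{lem7} (to convert sums and differences of the three bounds into expressions in $\vec\gamma_\theta$ and $W$) and Lemma \ref{lem8} (to translate D-invariance into the vanishing of $\vec\gamma_\theta$).

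For the direction ($\Leftarrow$), assume $\cM$ is D-invariant at $\theta$. By Lemma \ref{lem8}-2 this is equivalent to $\Re\tilde G_\theta^{-1} = G_\theta^{-1}$; combined with the elementary identity $\Re Z_\theta = G_\theta^{-1}$ (from $z_\theta^{ij} = \tr{\rho_\theta L_\theta^j L_\theta^i}$ together with the Hermiticity of the $L_\theta^i$) and the remark that $\Im Z_\theta = \Im\tilde G_\theta^{-1}$ for two-parameter qubit models, this forces $C_\theta^Z[W] = C_\theta^R[W]$ for every $W\in\cW$. The branch test $C_\theta^R[W] \ge \tfrac12(C_\theta^Z[W] + C_\theta^S[W])$ then collapses to $C_\theta^R[W] \ge C_\theta^S[W]$, which is automatic because $C_\theta^R[W] - C_\theta^S[W] = \Trabs{W\Im\tilde G_\theta^{-1}} \ge 0$. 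The first branch of Theorem \ref{thm1} thus yields $C_\theta^H[W] = C_\theta^R[W]$ for all $W$.

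For the converse ($\Rightarrow$), I would argue by contrapositive: assuming $\cM$ is \emph{not} D-invariant at $\theta$, I want to exhibit some $W \in \cW$ that lands strictly inside the second branch, so that $S_\theta[W]>0$ and hence $C_\theta^H[W] > C_\theta^R[W]$. A short calculation using the three items of Lemma \ref{lem7} together with the identities just recalled rewrites the second-branch condition $C_\theta^R[W] < \tfrac12(C_\theta^Z[W] + C_\theta^S[W])$ as
\[
|b|\sqrt{\det W} < \frac{(1-s_\theta^2)\det W}{\det G_\theta}\,(\vec\gamma_\theta\,|\,W^{-1}\vec\gamma_\theta),
\]
where $b$ denotes the unique independent entry of the real antisymmetric $2\times 2$ matrix $\Im\tilde G_\theta^{-1}$.

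The main step is then to observe that the right-hand side is unbounded in $W$. By Lemma \ref{lem8}-3, failure of D-invariance means $\vec\gamma_\theta \neq 0$. Taking $W = \mathrm{diag}(w_1,w_2)$ reduces the right-hand side to $\tfrac{1-s_\theta^2}{\det G_\theta}\bigl(\gamma_{\theta,1}^2\sqrt{w_2/w_1} + \gamma_{\theta,2}^2\sqrt{w_1/w_2}\bigr)$, which can be made as large as desired by sending $w_2/w_1\to\infty$ (if $\gamma_{\theta,1}\neq 0$) or $w_1/w_2\to\infty$ (if $\gamma_{\theta,2}\neq 0$); the degenerate case $b=0$ is already handled since any positive right-hand side suffices. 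Thus some $W\in\cW$ satisfies the strict inequality, contradicting the hypothesis $C_\theta^H[W]=C_\theta^R[W]$ for all $W$. The only routine obstacle is the algebraic reduction to the displayed inequality, but this is mechanical given Lemma \ref{lem7}; the conceptual content is that $\vec\gamma_\theta\neq 0$ supplies a direction along which the weight matrix can be ``stretched'' to overpower the fixed quantity $|b|$ and so violate the first-branch condition.
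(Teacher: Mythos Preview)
Your proof is correct and follows essentially the same route as the paper: both directions are read off from Theorem~\ref{thm1} by analyzing the branch condition, and the converse is handled by showing that when the model is not D-invariant one can choose $W$ so that the second branch is selected with $S_\theta[W]>0$. Your diagonal-$W$ scaling argument is a minor variant of the paper's substitution $W\to B^{-1/2}W'B^{-1/2}$ with $B=G_\theta^{-1}-\Re\tilde G_\theta^{-1}$; in fact your version is slightly cleaner, since $B$ is only rank one (cf.\ the formula in the proof of Lemma~\ref{lem8}), so the paper's ``suppose $B>0$'' and the inverse square root are not literally available, whereas your direct stretching of a diagonal $W$ along the nonzero component of $\vec\gamma_\theta$ avoids this wrinkle entirely.
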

\begin{proof}
When the model is D-invariant, the condition $C_\theta^R[W]<({C_\theta^Z[W]+C_\theta^S[W]})/{2}$ 
cannot be satisfied. (See the remark after Eq.~\eqref{zger}.) Therefore, the Holevo bound 
is always identical to the RLD CR bound in this case. 

Next we show the left condition 
implies the right in Eq.~\eqref{thmeq10}. 
If $C_\theta^H[W]= C_\theta^R[W]$ holds for all $W$, 
the relation $\sqrt{\det W} \left| \vecin{\v \ell_\theta^1}{F_\theta\v \ell_\theta^2}\right|
\ge\mathrm{Tr}\{W(G_\theta^{-1}-\Re\tilde{G}_\theta^{-1} )\}  $ must be true 
for all $W$. 
But, if $G_\theta^{-1}-\Re(\tilde{G}_\theta^{-1})\neq0$, this is impossible. 
To show it, let us suppose $B:=G_\theta^{-1}-\Re\tilde{G}_\theta^{-1} > 0$, 
then we can change $W\to B^{-1/2}W' B^{-1/2}$ to get 
\be \label{proof10}
\frac{\sqrt{\det W'}}{\Tr{W'} }\left| \vecin{\v \ell_\theta^1}{F_\theta\v \ell_\theta^2}\right| \det B^{-1}
\ge 1 ,\ \forall W' \in\cW. 
\ee
It is easy to show that $f(W')=\sqrt{\det W'}/\Tr{W'}$ as a function positive matrix $W'$ satisfies  
$1/2\ge f(W')>0$ as changing positive matrix $W'$. That is $f(W')$ can be made arbitrary small 
by choosing $W'$ and hence the condition \eqref{proof10} cannot hold. This gives 
$B=0\Leftrightarrow G_\theta^{-1}=\Re\tilde{G}_\theta^{-1}$ and hence
this is equivalent to the D-invariant condition from Lemma \ref{lem8}.  $\square$
\end{proof}

Using Theorem \ref{thm1}, we now state one more important condition characterizing 
the model where the Holevo bound coincides with the SLD CR bound. 
\begin{proposition}\label{thm11}
For any two-parameter qubit model, the Holevo bound coincides with the SLD CR bound 
for all positive weight matrices if and only if the imaginary part of $Z_\theta$ vanishes. 
Further, this condition is equivalent to the existence of a weight matrix $W_0$ 
such that the Holevo bound and the SLD CR bound are same with this particular choice $W_0$ \cite{nagaokacomment}. 
Mathematically, we have  
\begin{align} \label{thmeq11} 
&\forall W \in\cW\ C_\theta^H[W]= C_\theta^S[W]\\ \label{thmeq12} 
 \Leftrightarrow&\ \Im Z_\theta=0\\ \label{thmeq13} 
\Leftrightarrow&\ \exists W_0 \in\cW\ C_\theta^H[W_0]= C_\theta^S[W_0]. 
\end{align}
\end{proposition}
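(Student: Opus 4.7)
The plan is to read the proposition off directly from the compact form of the Holevo bound \eqref{h4bound}, after recording two elementary observations. First, writing $\Re z_\theta^{ij}=\tfrac12\tr{\rho_\theta(L_\theta^j L_\theta^i+L_\theta^i L_\theta^j)}=\sldin{L_\theta^i}{L_\theta^j}$ and using \eqref{orthcond} one gets $\Re Z_\theta=G_\theta^{-1}$, so that $C_\theta^Z[W]-C_\theta^S[W]=\Trabs{W\Im Z_\theta}$. Second, Lemma \ref{lem7}-2 gives $\Trabs{W\Im Z_\theta}=2\sqrt{\det W}\,\bigl|\vecin{\v\ell_\theta^1}{F_\theta\v\ell_\theta^2}\bigr|$, so for positive-definite $W$ the vanishing of $\Trabs{W\Im Z_\theta}$ for some $W\in\cW$ is equivalent to its vanishing for every $W\in\cW$, both of which are equivalent to $\Im Z_\theta=0$.

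With these identities, I would apply \eqref{h4bound} in the form
\be\nonumber
C_\theta^H[W]-C_\theta^S[W]=(C_\theta^Z[W]-C_\theta^R[W])\,H\!\left(\frac{\Trabs{W\Im Z_\theta}/2}{C_\theta^Z[W]-C_\theta^R[W]}\right).
\ee
Since $H\ge 0$ with $H(x)=0$ iff $x=0$, and since $C_\theta^Z[W]\ge C_\theta^R[W]$ by \eqref{zger}, the right-hand side vanishes precisely when $\Trabs{W\Im Z_\theta}=0$. The D-invariant case $C_\theta^Z[W]=C_\theta^R[W]$ is covered by the convention $\lim_{a\to 0}aH(b/a)=2|b|$ recorded just after \eqref{h4bound}, which collapses the formula to $C_\theta^H[W]=C_\theta^S[W]+\Trabs{W\Im Z_\theta}$, whence the same conclusion.

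The three-way equivalence then follows at once. If $\Im Z_\theta=0$ then $\Trabs{W\Im Z_\theta}=0$ for every $W$, so \eqref{thmeq11} holds, giving \eqref{thmeq12}$\Rightarrow$\eqref{thmeq11}. The implication \eqref{thmeq11}$\Rightarrow$\eqref{thmeq13} is immediate. For \eqref{thmeq13}$\Rightarrow$\eqref{thmeq12}, the previous paragraph forces $\Trabs{W_0\Im Z_\theta}=0$, and the factor $\sqrt{\det W_0}>0$ (since $W_0\in\cW$) in the Lemma \ref{lem7}-2 identity then yields $\Im Z_\theta=0$.

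Main obstacle: there is no serious difficulty once Theorem \ref{thm1} and Lemma \ref{lem7} are in hand. The only care needed is to treat the D-invariant boundary case $C_\theta^Z[W]=C_\theta^R[W]$ through the limit convention attached to \eqref{h4bound}, but that reduces to the transparent identity $C_\theta^H[W]=C_\theta^S[W]+\Trabs{W\Im Z_\theta}$ and contributes nothing new to the logic.
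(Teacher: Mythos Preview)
Your proof is correct and follows essentially the same approach as the paper: both arguments read the equivalences off the explicit formula for the Holevo bound (Theorem \ref{thm1}, in the compact form \eqref{h4bound}) together with Lemma \ref{lem7}-2, using that $C_\theta^Z[W]-C_\theta^S[W]=\Trabs{W\Im Z_\theta}$. The only cosmetic difference is that the paper uses the case-split expression \eqref{sfunction} for \eqref{thmeq12}$\Rightarrow$\eqref{thmeq11} and \eqref{h4bound} for \eqref{thmeq13}$\Rightarrow$\eqref{thmeq12}, while you use \eqref{h4bound} uniformly and handle the D-invariant boundary via the stated limit convention.
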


\begin{proof}
\eqref{thmeq12} $\Rightarrow$ \eqref{thmeq11}: 
The condition $ \Im Z_\theta=0\Leftrightarrow \Im\rldin{L_{\theta,1}}{L_{\theta,2}}=0$ 
implies $\trAbs\{W\Im \tilde{G}_\theta^{-1}\}=0$ for all $W\in \cW$. 
In this case, the condition $C_\theta^R[W]<({C_\theta^Z[W]+C_\theta^S[W]})/{2}$ 
is always satisfied, and hence the Holevo bound coincides with the SLD CR bound 
for all choices of the weight matrix. 
This follows from the second line of the expression \eqref{sfunction}.

\eqref{thmeq11} $\Rightarrow$ \eqref{thmeq13}: The condition \eqref{thmeq11} implies an existence of a weight matrix satisfying 
$C_\theta^H[W]=C_\theta^S[W]$. 

\eqref{thmeq13} $\Rightarrow$ \eqref{thmeq12}: 
Next let us assume $C_\theta^H[W_0]= C_\theta^S[W_0]$ for some weight matrix $ W_0\in\cW$. 
The expression \eqref{h4bound} immediately implies 
$\Trabs{W_0\Im Z_\theta}=2\sqrt{\det W_0}|\Im\rldin{L_{\theta,1}}{L_{\theta,2}}|=0$
which gives $\Im Z_\theta=0$. Therefore, three conditions are all equivalent. $\square$
\end{proof}

We have three remarks regarding this proposition. 
First, in terms of the SLD Bloch vectors, the necessary and sufficient condition \eqref{thmeq12} is also written as 
\begin{align*}
\Im Z_\theta=0&\Leftrightarrow \tr{\rho_\theta\,[L_{\theta,1},\,L_{\theta,2}]}=0\\
&\Leftrightarrow\vecin{\stheta}{\v \ell_{\theta,1}\times\v \ell_{\theta,2}}=0\\ 
&\Leftrightarrow\  \vecin{\stheta}{ \del_{1}\stheta\times\del_2\stheta}=0,
\end{align*}
which is easy to check by calculating the Bloch vector of a given model. 

Second, we note that given a symmetric matrix $A$, $\Tr{WA}\ge0$ for all positive weight matrix $W$ implies 
$A\ge0$ as a matrix inequality. When the Holevo bound is same as the SLD CR bound, 
we see that the MSE matrix $V^{(n)}_\theta[\hat{\Pi}^{(n)}]$ for any asymptotically unbiased estimators 
satisfy the SLD CR inequality: 
\be
\lim_{n\to\infty}n V^{(n)}_\theta[\hat{\Pi}^{(n)}]\ge G_\theta^{-1},
\ee
as a matrix inequality. 
Moreover, there exists a sequence of estimators that attains this matrix equality. 
This is rather counter intuitive since two SLD operators $L_{\theta,1}$ and $L_{\theta,2}$ 
do not commute in general. Therefore, the condition $\Im\rldin{L_{\theta,1}}{L_{\theta,2}}=0$ 
seems to grasp asymptotic commutativity of two SLD operators in some sense. 
Indeed, this condition can be written as $\tr{\rho_\theta\,[L_{\theta,1},\,L_{\theta,2}]}=0$, 
i.e., commutativity of $L_{\theta,1}$ and $L_{\theta,2}$ on the trace of the state $\rho_\theta$. 
When this holds, the quantum parameter estimation problem becomes 
similar to the classical case asymptotically. 
In the rest of the paper, we call a model {\it asymptotically classical} 
when this condition is satisfied. A similar terminology, ``quasi classical model,'' was 
used by Matsumoto in the discussion of parameter estimation of pure states \cite{matsumoto02}.  
Here, we emphasize that classicality arises only in the asymptotic limit 
and hence, this terminology is more appropriate. 
We also note that the equivalence between \eqref{thmeq12} $\Rightarrow$ \eqref{thmeq13} 
was stated in the footnote of the paper \cite{bbgmm06} based on the unpublished work of Hayashi and Matsumoto. 
Here our proof is shown to be simple owing to the general formula obtained in this paper. 

Last, a great reduction occurs in the structure of the fundamental precision bound for this class of models. 
We note that achievability of the SLD CR bound for specific models 
have been reported in literature in discussion on quantum metrology \cite{cdbw14,vdgjkkdbw14}. 
Here, we provide a simple characterization, the necessary and sufficient condition, 
of such special models in the unified manner. 

Having established the above two propositions, we can conclude that 
a generic two-parameter qubit model other than D-invariant or asymptotically classical ones 
exhibits the nontrivial structure for the Holevo bound in the following sense:  
The structure changes smoothly as the weight matrix $W$ varies. 
For a certain choice of $W$, it coincides with the RLD CR bound and it becomes different 
expression for other choices. 
Put it differently, consider any model that is not asymptotically classical, then we can always 
find a certain region of the weight matrix set $\cW$ such that the Holevo bound is same as the RLD CR bound. 
This point is examined in detail in the next subsection and examples 
are provided in the next section for illustration.  

\subsection{Smooth transition in the Holevo bound}\label{sec4-2}
Let us consider a two-parameter qubit model that is neither D-invariant nor asymptotically classical.  
In this case, the set of all possible weight matrices $\cW=\{W|\mbox{$W$ is $2\times2$ real positive definite}\}$
 is divided into three subsets. The first two sets are $\cW_{+}$ and $\cW_{-}$ in which 
$C_\theta^R[W]-({C_\theta^Z[W]+C_\theta^S[W]})/{2}$ is positive and negative, respectively. 
The last is the boundary $\cW_{\del}$ that consists of a family of weight matrices satisfying the equation: 
\be
B_\theta[W]:=C_\theta^R[W]-\frac12(C_\theta^Z[W]+C_\theta^S[W])=0.
\ee
According to this division, the Holevo bound takes the form $C_\theta^H[W]=C_\theta^R[W]$ 
for $W\in\cW_{+}$ and it is expressed as $C_\theta^H[W]=C^R_\theta[W]+S_\theta[W]$ for $W\in\cW_{-}$. 
Whereas, $C_\theta^H[W]=C_\theta^R[W]=S_\theta[W]$ holds for the boundary $W\in\cW_{\del}$. 

In the following we characterize these sets explicitly. To do this, we first note that 
the degree of freedom for the weight matrix in our problem is three due to 
the condition of $W$ being real symmetric. 
Second, we can show that a scalar multiplication of the weight matrix does not change anything 
except for the multiplication of over all expression of the Holevo bound. 
Thus, we can parametrize the $2\times2$ weight matrix by 
two real parameters. For our purpose, we employ the following 
representation up to an arbitrary multiplication factor: 
\be
W=U_\theta\left(\begin{array}{cc}1 &  w w_2 \\w w_2  & w_2^2\end{array}\right) U_\theta^*, 
\ee
where $w_2>0$ and $\det W=w_2^2(1-w^2)>0\Rightarrow |w|<1$ are imposed from the positivity condition 
and the real orthogonal matrix $U_\theta$ is defined in terms of Eq.~\eqref{gamma} by
\be
U_\theta=\frac{1}{\sqrt{\gamma_{\theta,1}^2+\gamma_{\theta,2}^2}}
\left(\begin{array}{cc}\gamma_{\theta,1}&  -\gamma_{\theta,2} \\ \gamma_{\theta,2}  & \gamma_{\theta,1}\end{array}\right). 
\ee 
The assumption of the model under consideration then yields 
\begin{align}\nonumber
B_\theta[W]&=| \vecin{\v\ell_\theta^1}{F_\theta\v\ell_\theta^2}|\sqrt{\det W}
-\Tr{W(G_\theta^{-1}-\Re\tilde{G}_\theta^{-1})} \\ \nonumber
=& \Big[  |\vecin{\v\ell_\theta^1}{F_\theta\v\ell_\theta^2}|\sqrt{1-w^2}-\Tr{G_\theta^{-1}-\Re\tilde{G}_\theta^{-1}}\,w_2 \Big]w_2. 
\end{align} 
Therefore, by defining a constant solely calculated from the given model 
$\alpha_\theta:= |\vecin{\v\ell_\theta^1}{F_\theta\v\ell_\theta^2}|/\Tr{G_\theta^{-1}-\Re\tilde{G}_\theta^{-1}}$, 
we obtain the sets $\cW_{\pm},\cW_{\del}$ as follows. 
\begin{align}\nonumber
\cW_{+}&=\left\{c\, U_\theta\left(\begin{array}{cc}1 &  \alpha_\theta w w_2 \\ 
\alpha_\theta w w_2  & \alpha_\theta^2 w_2^2\end{array}\right) U_\theta^*
\,\Big|\, w_2^2+w^2<1\right\},\\ \nonumber
\cW_{-}&=\left\{c\, U_\theta\left(\begin{array}{cc}1 &  \alpha_\theta w w_2 \\ 
\alpha_\theta w w_2  & \alpha_\theta^2 w_2^2\end{array}\right) U_\theta^*
\,\Big|\, w_2^2+w^2>1\right\},\\ 
\cW_{\del}&=\left\{c\, U_\theta\left(\begin{array}{cc}1 &  \alpha_\theta w w_2 \\ 
\alpha_\theta w w_2  & \alpha_\theta^2 w_2^2\end{array}\right) U_\theta^*
\,\Big|\, w_2^2+w^2=1\right\},
\end{align}
where the common conditions $w_2>0,|w|<1$, and $c>0$ also need to be imposed 
to satisfy the positivity of the weight matrix. 

\subsection{Limit for pure-state models} \label{sec4-3}
So far we only consider models which consist of mixed states. 
It is known that collective measurements do not improve 
the MSE for pure-state models \cite{GM00,matsumoto02,YFG13}. 
In other words, the Holevo bound 
is same as the bound achieved by separable measurements as 
far as pure-state models are concerned. 

In this subsection, we examine the pure-state limit for our general result. 
When a mixed-state model is asymptotically classical, the Holevo bound 
is identical to the SLD CR bound. This should be true in the pure-state limit, 
and this agrees with the result of Matsumoto \cite{matsumoto02}. 
When a model is D-invariant, on the other hand, it is shown that 
the RLD CR bound can be achieved. This also holds in the pure-state limit 
and we examine the pure-state limit for a generic mixed-state model below.  

We first note that we cannot take the limit for SLD and RLD operators directly \cite{fn95,FN99}. 
This is because there are the terms $1-s_\theta^2$ appearing in the denominators. 
However, the SLD and RLD dual operators are well defined even in the pure-state limit. 
By direct calculation, we can show that the SLD and RLD dual vectors (\ref{sldcot}, \ref{rldcot}) are written as 
\begin{align*}
\v{\ell}_\theta^1&=-\frac{Q_\theta^{-1}\v{\ell}_\theta^\bot\times\del_2\stheta}{\vecin{\v{\ell}_\theta^\bot}{Q_\theta^{-1}\v{\ell}_\theta^\bot}},\\ 
\v{\ell}_\theta^2&=\frac{Q_\theta^{-1}\v{\ell}_\theta^\bot\times\del_1\stheta}{\vecin{\v{\ell}_\theta^\bot}{Q_\theta^{-1}\v{\ell}_\theta^\bot}},\\
\tilde{\v{\ell}}_\theta^1&=-\frac{(\openone-\I F_\theta)(\v{\ell}_\theta^\bot\times\del_2\stheta-\I\vecin{\stheta}{\v{\ell}_\theta^\bot}\del_2\stheta)}
{\vecin{\v{\ell}_\theta^\bot}{Q_\theta^{-1}\v{\ell}_\theta^\bot}},\\
\tilde{\v{\ell}}_\theta^2&=\frac{(\openone-\I F_\theta)(\v{\ell}_\theta^\bot\times\del_1\stheta-\I\vecin{\stheta}{\v{\ell}_\theta^\bot}\del_1\stheta)}
{\vecin{\v{\ell}_\theta^\bot}{Q_\theta^{-1}\v{\ell}_\theta^\bot}},
\end{align*}
for any two-parameter qubit model. 
Thus, as long as $\vecin{\v{\ell}_\theta^\bot}{Q_\theta^{-1}\v{\ell}_\theta^\bot}\neq0$, 
they converge in the pure-state limit. This condition is also expressed as
$|\v{\ell}_\theta^\bot\times\stheta|^2-(1-s_\theta^2)^2|\v{\ell}_\theta^\bot|^2\neq0$, 
and hence is equivalent to $\v{\ell}_\theta^\bot\times\stheta\neq\v{0}$ in the pure-state limit., 
i.e., $\cM$ is not asymptotically classical. 
Second, the same warning applies to the SLD and RLD Fisher information \cite{fn95,FN99}. 
However, the inverse of the SLD Fisher information matrix is well defined 
even for the pure-state limit. This is because the $(i,j)$ component of the inverse of SLD Fisher information 
matrix is $\vecin{\v{\ell}_\theta^i}{Q_\theta^{-1}\v{\ell}_\theta^j}$ and this has the well-defined limit. 
The same reasoning can be applied to the inverse of the RLD Fisher information matrix in the pure-state limit. 

Last, let us examine the general formula. It is straightforward 
to show that the function \eqref{hsbound} vanishes in the pure-state limit.  
In other words, the general formula given in Theorem \ref{thm1} becomes 
$C^H_\theta[W]=C^R_\theta[W]$ for all weight matrices. 
Therefore, the Holevo bound becomes the RLD CR bound in the pure-state limit. 
This is, of course, expected because any pure-state model can be expressed 
as a unitary model, which is D-invariant from Lemma \ref{lem8}. 

\section{Examples}\label{sec5}
In this section, we consider several examples for two-parameter qubit models to 
illustrate our result. The first one is a D-invariant model whose Holevo bound is 
identical to the RLD CR bound. The second one is a asymptotically classical model 
which gives the SLD CR bound. As the last example, we analyze a generic model 
in particular the behavior of the Holevo bound depending on the weight matrix. 

Within the setting of pointwise estimation, 
we note that it is sufficient to specify the following three vectors to define a model locally;
\be\label{localmodel}
\{\stheta,\, \del_1\stheta,\,\del_2\stheta\} \quad\mathrm{at}\,\theta.   
\ee
Equivalently, the set $\{\stheta,\v\ell_{\theta,1},\v\ell_{\theta,2}\}$ or $\{\stheta,\tilde{\v\ell}_{\theta,1},\tilde{\v\ell}_{\theta,2}\}$ 
can be used to define the model at $\theta$ uniquely. 
One-to-one correspondences among these three specifications of the model are easily established 
by the Bloch vector representation discussed in Sec.~\ref{sec3-1}. 

The D-invariant condition is now expressed in terms of the vectors of the set \eqref{localmodel} as
\be
\mbox{D-invariant}\ \Leftrightarrow\ \vecin{\stheta}{\del_1\stheta}=\vecin{\stheta}{\del_2\stheta}=0. 
\ee
The asymptotically classical condition is equivalent to 
\be
\mbox{Asymptotically classical}\  \Leftrightarrow\ \vecin{\stheta}{\del_1\stheta \times \del_2\stheta}=0 . 
\ee
All models other than satisfying the above two conditions are generic ones. 

\subsection{D-invariant model}\label{sec5-1}
Consider a simple unitary model where 
the parametric state is generated by a two-parameter unitary $U(\theta)$ ($\theta=(\theta_1,\theta_2)$): 
\be
\rho_\theta= U(\theta)\rho_0U(\theta)^*. 
\ee
It is easy to show that the norm of the Bloch vector is independent of $\theta$,  
and hence, this model is globally D-invariant.  
The Holevo bound is $C_\theta^H[W]=C_\theta^R[W]$ for all weight matrices $W$.

We note that this kind of D-invariant models possesses symmetry and 
has been studied by many authors, see for example Ch.~4 of Holevo's book \cite{holevo} 
and Hayashi \cite{hayashi} and references therein. 

\subsection{Asymptotically classical model}\label{sec5-2}
Consider the following model: 
\be
\cM_{\cB}=\left\{\stheta=f^1(\theta)\v u_1+f^2(\theta)\v u_2\, |\, \theta\in\Theta \right\},
\ee
where $\v u_i$ are unit vectors, which are not necessarily orthogonal to each other, 
and $f^j(\theta)$ are scalar (differentialble) functions of $\theta=(\theta^1,\theta^2)$. 
The parameter region $\Theta$ is specified by an arbitrary open subset 
of the set; $\{(\theta_1,\theta_2)\,|\, s_\theta<1  \}$.
We can show that this model is asymptotically classical because of 
$\stheta\bot \del_1\stheta \times \del_2\stheta\propto \v u_1\times\v u_2$, 
and the Holevo bound is $C_\theta^H[W]=\Tr{WG_\theta^{-1}}\forall W>0$. 

For the asymptotically classical case, we can easily compare this result with 
the bound achieved by an optimal estimator comprised of separable POVMs. 
In this case, the Nagaoka bound is known to be achievable that is 
calculated as \cite{nagaoka89} 
\be
C_\theta^N[W]=\Tr{WG_\theta^{-1}}+2\sqrt{\det WG_\theta^{-1}}. 
\ee
Therefore, the gain due to use of collective POVMs is $2\sqrt{\det WG_\theta^{-1}}$. 

As a special case, let us set $\v u_j$ to be orthogonal normal vectors 
and $f^j(\theta)=\theta^j$. Then, the inverse of the SLD Fisher information matrix reads
\be
G_\theta^{-1}=\left(\begin{array}{cc}1-(\theta^1)^2  & -\theta^1\theta^2 \\ -\theta^1\theta^2  & 1-(\theta^2)^2\end{array}\right), 
\ee 
and the gain mentioned above is $2\sqrt{\det W}\sqrt{1-s_\theta^2}$. 
Thus, the role of collective POVMS becomes important as the state becomes more mixed. 

\subsection{Generic model}\label{sec5-3}
In this subsection, we analyze a generic model other than the previous two examples. 
In this case the structure of the Holevo bound changes when the weight matrix $W$ varies, 
that is it takes the same form as the RLD CR bound for certain choices of $W$ whereas 
it becomes different form for other cases. This is explicitly demonstrated in the section \ref{sec4-2} 
and here we consider a simple yet nontrivial example to gain a deeper insight into this a phenomenon of 
this transition. 

Consider a two-parameter qubit model: 
\be\label{gicmodel}
\cM_{\cB}=\left\{\stheta=(\theta^1,\theta^2,\theta_0)^T\, |\, \theta\in\Theta \right\},
\ee
where $\theta_0$ is a fixed parameter which satisfies $0<|\theta_0|<1$. 
The parameter region is specified by a subset of 
$\{(\theta^1,\theta^2)|(\theta^1)^2+(\theta^2)^2<1-\theta_0^2 \}$. 
This model is neither D-invariant nor asymptotically classical 
when $\theta_0\neq0$ and $\theta^1\theta^2\neq0$ are satisfied. 
This model becomes asymptotically classical for $\theta_0=0$ as discussed before. 
We can also regard this model as a sub-model embedded in the three-parameter qubit model: 
$\left\{\stheta=(\theta^1,\theta^2,\theta^3)^T\, |\, \theta\in\Theta\subset\bbr^3 \right\}$. 

The inverse of SLD and RLD Fisher information matrices at $\theta$ are
\begin{align}\nonumber
G_\theta^{-1}&=\frac{1}{1-\theta_0^2}\left(\begin{array}{cc}1-\theta_0^2-(\theta^1)^2 & -\theta^1\theta^2 \\
-\theta^1\theta^2 & 1-\theta_0^2-(\theta^2)^2\end{array}\right) ,\\ 
\tilde{G}_\theta^{-1}&=\frac{1-s_\theta^2}{1-\theta_0^2}
\left(\begin{array}{cc}1 & -\I\theta_0 \\ \I\theta_0 & 1\end{array}\right)
\end{align}
with $s_\theta=[(\theta^1)^2+(\theta^2)^2+\theta_0^2]^{1/2}$, respectively, 
and the three bounds appearing in Theorem \ref{thm1} read
\begin{align}\nonumber
C_\theta^S[W]&=\Tr{W}- \frac{1}{1-\theta_0^2} (\vec{\theta}|W\vec{\theta}),\\ \nonumber
C_\theta^R[W]&= \frac{1-s_\theta^2}{1-\theta_0^2} \Tr{W}
+2\sqrt{\frac{1-s_\theta^2}{1-\theta_0^2}}\, |\theta_0| \sqrt{\det W},\\ \nonumber
C_\theta^Z[W]&= \Tr{W}- \frac{1}{1-\theta_0^2} (\vec{\theta}|W\vec{\theta})+2\sqrt{\frac{1-s_\theta^2}{1-\theta_0^2}}\, |\theta_0| \sqrt{\det W},  
\end{align}
where $|\vec\theta)=(\theta^1,\theta^2)^T$ is introduced for convenience. 

In the following let us analyze the structure of the Holevo bound using 
the following parametrization of a weight matrix: 
\begin{multline}\label{weightex}
W(w,\omega)=R(\omega)
\left(\begin{array}{cc}\frac12({1+w})& 0 \\0 & \frac12({1-w})\end{array}\right)
R(\omega)^T,\ \\
 \mathrm{with}\quad R(\omega)=
\left(\begin{array}{cc}\cos\omega & -\sin\omega \\ \sin\omega & \cos\omega\end{array}\right), 
\end{multline}
where we normalize the trace of $W$ to be one and $w\in(-1,1)$, $\omega\in[0,2\pi)$. 
This parametrization is different from the one analyzed in Sec.~\ref{sec4-2}, 
yet is convenient for the purpose of visualization.  
It is easy to see that the effect of the matrix $R(\omega)$ is to mix two parameters $\theta^1$ and $\theta^2$ 
by rotating them about an angle $\omega$. 
Since $\det W=(1-w^2)/4$ for this particular parametrization, 
we see that the RLD CR bound is independent of the weight parameter $\omega$. 
The other bounds depend on two parameters $(w,\omega)$. 

We are interested in how the Holevo bound $C^H_\theta[w,\omega]$ changes when we vary the weight parameters $w,\omega$. 
Upon plotting, we fix the direction of the estimation parameter $\vec\theta$ as $\vec\theta=|\theta| (1,1)/\sqrt{2}$ 
with $|\theta|=(\theta|\theta)^{1/2} $. 
We plot the Holevo bound for two sets of the state parameters; (a) $\theta_0=0.2, \vec\theta=0.346 (1,1)/\sqrt2$ 
and (b) $\theta_0=0.275, \vec\theta=0.476 (1,1)/\sqrt2$ for illustration. 
Figure \ref{fig1a} shows the Holevo bound for the state parameter (a) as a function of the weight-matrix parameter $w,\omega$. 
In this plot, the gray areas indicate the case for $C^H_\theta[W]=C^R_\theta[W]+S_\theta[W]$, 
whereas the white-meshed region indicates the case for $C^H_\theta[W]=C^R_\theta[W]$. 
We also show the other state parameter setting (b) in Fig.~\ref{fig1b} with the same convention. 
In both figures we observe smooth transitions between two-different expressions discussed in Sec.~\ref{sec4-2}.  
\begin{figure}[htbp]
\subfigure[]{
\label{fig1a}
\includegraphics[width=5cm]{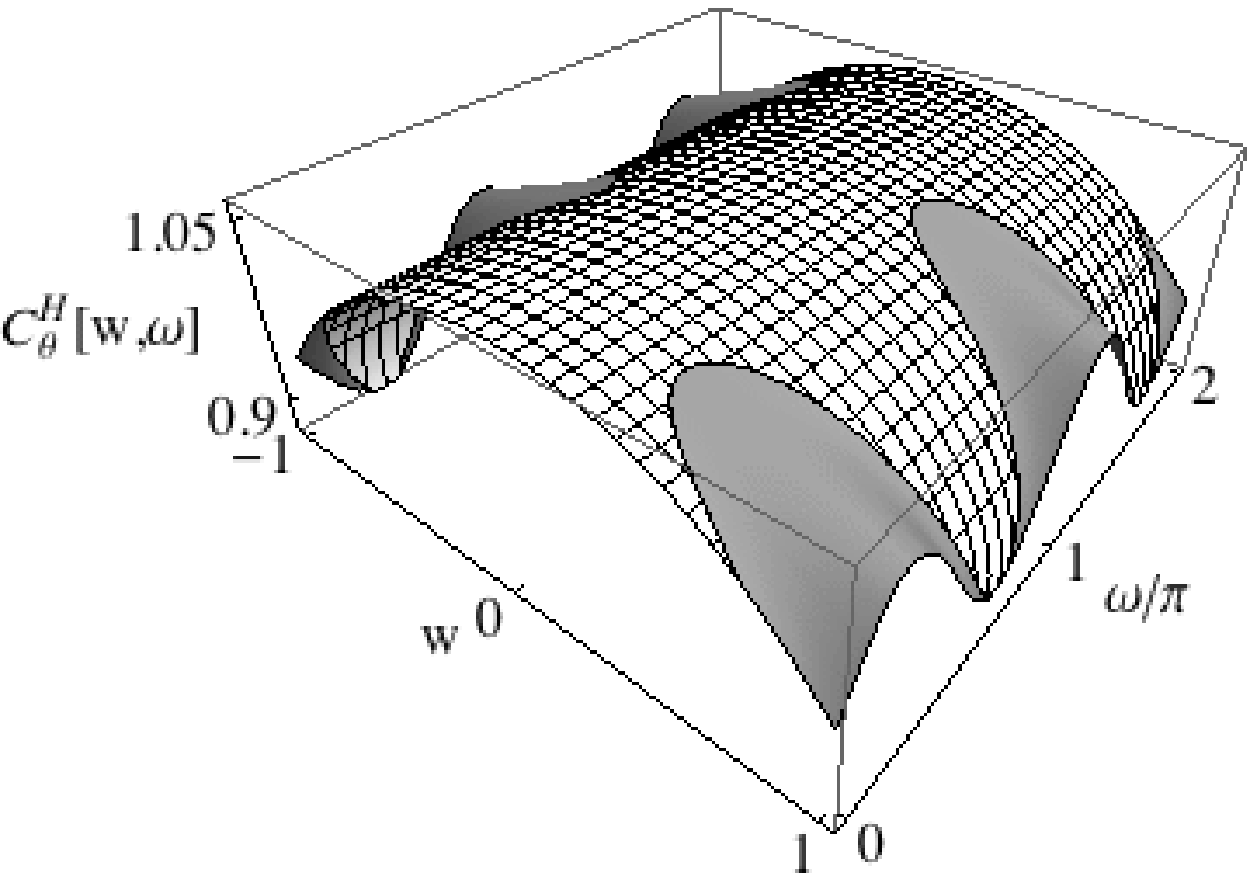}}
\hspace{1mm}
\subfigure[]{
\label{fig1b}
\includegraphics[width=5cm]{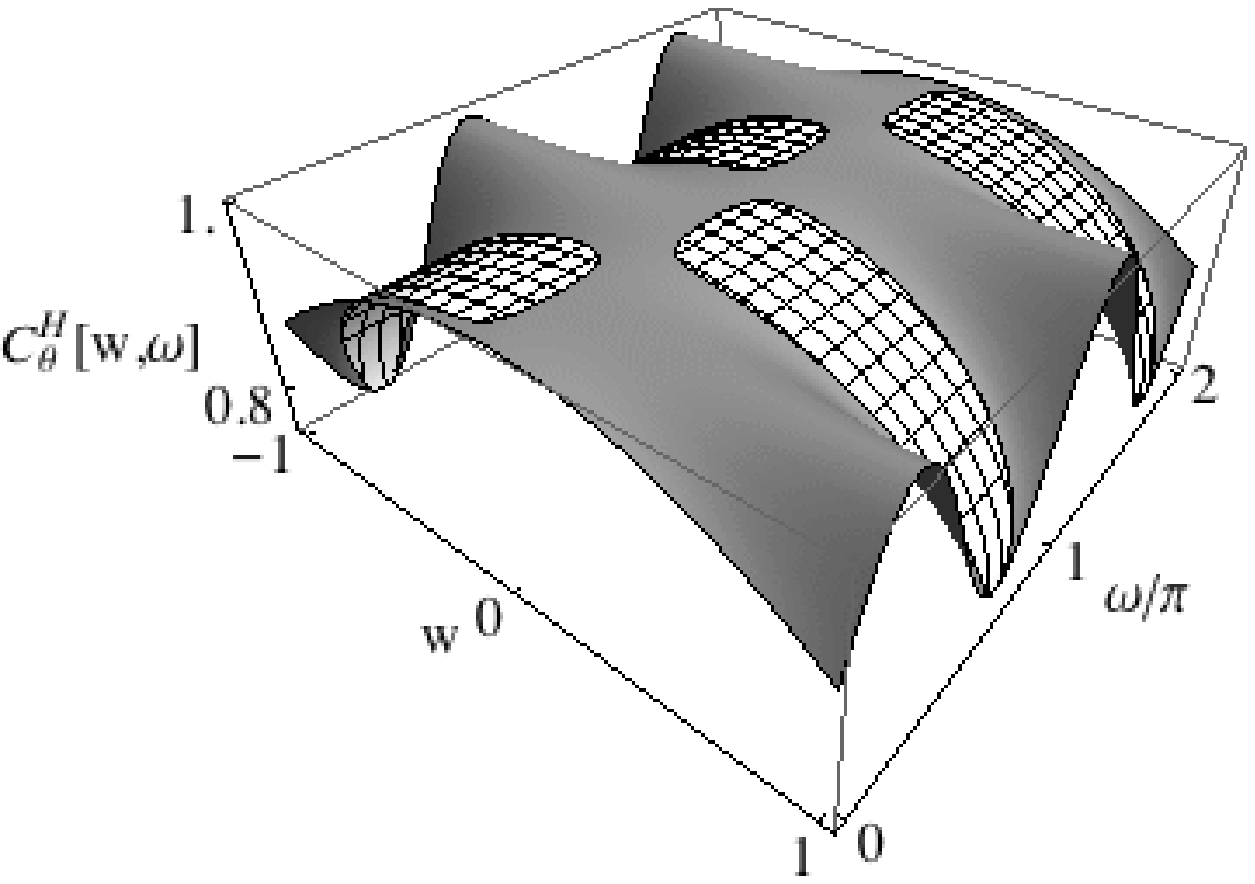}}
\caption{The Holevo bound for the state parameter (a) $\theta_0=0.2, \vec\theta=0.346 (1,1)/\sqrt2$ 
and (b) $\theta_0=0.275, \vec\theta=0.476 (1,1)/\sqrt2$ as a function of 
the wight-matrix parameter $(w,\omega)$ given in Eq.~\eqref{weightex}. 
The gray areas indicate the case for $C^H_\theta[W]=C^R_\theta[W]+S_\theta[W]$, 
whereas the white-meshed region indicate the case for $C^H_\theta[W]=C^R_\theta[W]$.}
\end{figure}

From these figures, we see that the Holevo bound coincides with the RLD CR bound 
for relatively large choice of the weight matrix in Fig.~\ref{fig1a}, 
where as the opposite observation holds for Fig.~\ref{fig1b}.  
To gain a deeper insight, let us calculate the value of quantities 
$\gamma_{\theta,i}:=\vecin{\stheta}{\v{\ell}_{\theta,i}}$ ($i=1,2$). 
Then, we get $\gamma_{\theta,1}=\gamma_{\theta,2}=0.292$ for the case (a) 
and $\gamma_{\theta,1}=\gamma_{\theta,2}=0.483$ for the case (b).  
Since vanishing of these quantities is equivalent to the D-invariance of a model (Lemma \ref{lem8}), 
we naively expect that the smaller values of them imply that a model behaves more D-invariant-like. 
Indeed, the examples presented here agree with our intuition, yet more detail analysis is 
needed to make any conclusion. 

\subsection{Transition in the parameter space}\label{sec5-4}
We briefly discuss another important observation of this paper. 
A generic model, other than special cases discussed before, exhibits a transition in the structure of the Holevo bound 
for a fixed weight matrix when we change the estimation parameter $\theta$. 
A rough sketch of this argument is that 
a change in the weight matrix is amount to that in the parameter 
and vice versa. This is a well-known fact in the pointwise estimation setting \cite{nagaoka91,FN99}. 
Below we briefly show such an example. The model is same 
as the generic model \eqref{gicmodel} analyzed the previous subsection. 

As noted before, a change in the weight-matrix parameter $\omega_0\to\omega_0+\omega$ is equivalent 
to rotate the parameter $(\theta^1,\theta^2)$ by the angle $\omega$. 
Depending upon the choice of the weight matrix, we can also observe 
a similar transition when we change the parameter $\theta$ in the set $\Theta$. 
We set the weight matrix to be a diagonal matrix $W_0=\mathrm{diag} (0.55,0.45)$ 
and $\theta_0=0.35$. Figure \ref{fig2} plots the Holevo bound 
as a function of the state parameter $\theta=(\theta^1,\theta^2)$. 
The gray area indicates the region where $C^H_\theta[W_0]=C^R_\theta[W_0]+S_\theta[W_0]$ holds, 
whereas the white-meshed region indicates the case for $C^H_\theta[W_0]=C^R_\theta[W_0]$. 
This shows that the Holevo bound coincides with the RLD CR bound for 
a certain subset of the parameter space. 
\begin{figure}[htbp]
\begin{center}
\includegraphics[width=5.5cm,keepaspectratio,clip]{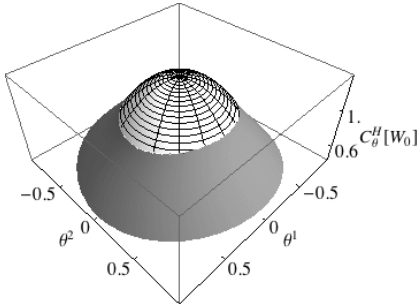}
\caption{The Holevo bound for the weight matrix $W_0=\mathrm{diag} (0.55,0.45)$ and $\theta_0=0.35$ 
as a function of the state parameter $\vec\theta=(\theta^1,\theta^2)$. 
The gray area indicates the case for $C^H_\theta[W]=C^R_\theta[W]+S_\theta[W]$, 
whereas the white-meshed region indicates the case for $C^H_\theta[W]=C^R_\theta[W]$.}
\label{fig2}
\end{center}
\end{figure}

\section{Concluding remarks}\label{sec6}
In this paper, we have derived a closed formula for the fundamental precision bound, the Holevo bound, 
for any two-parameter qubit estimation problem in the pointwise estimation setting. 
This bound is known to be asymptotically achievable by the optimal sequence of estimators 
consisting of jointly performed measurements under some regularity conditions. 
Since there exist explicit formulas for the Holevo bound for the pure-state qubit model, qubit mixed-state models with 
one and three parameters, our result completes a list of the fundamental precision bounds 
in terms of quantum versions of Fisher information, which is calculated from a given quantum parametric model, 
as far as qubit models are concerned.  
The obtained formula shows several new insights into the property of 
the Holevo bound for quantum parameter estimation problems. 
In the following we shall list concluding remarks together with outlook for future works. 

First, the necessary and sufficient conditions for the asymptotic achievability of 
the SLD and RLD CR bounds are derived when estimating any two-parameter family of qubit states. 
In particular, the notation of asymptotically classical models is proposed, 
in which all SLD operators commute with each other on the trace of a given parametric state. 
In this case, the weight matrix can be eliminated and the problem becomes similar to the classical statistics in the asymptotic limit, 
where the SLD Fisher information plays the same role as the Fisher information. 
We note that the notion of asymptotic classicality can be 
extended to any models on any finite-dimensional system and the same statement obtained for the qubit case holds: 
The Holevo bound coincides with the SLD CR bound if and only if the model is asymptotically 
classical. The detail of this result shall be given in the subsequent paper.  

Second, the RLD CR bound is shown to be achieved for a certain choice of the weight matrix 
even though the model is not D-invariant. This result emphasizes the importance of 
the RLD Fisher information for general qubit parameter estimation problems. 
Since the imaginary parts of the inverse of the RLD Fisher information matrix 
and the $Z_\theta$ matrix are always identical for two-parameter qubit mixed-state models, 
we cannot immediately conclude that this is the general statement or not. It might well be that 
the only real part of the inverse of RLD Fisher information matrix plays an important role 
in higher dimensional systems. This deserves further studies and shall be analyzed as 
an extension of the present work. 

Third, our result also provides the (unique) minimizer to the optimization problem appearing in 
the definition of the Holevo bound. This set of observables, which are locally unbiased in the sense of Eq.~\eqref{lucondition}, 
can be used to construct an optimal sequence of POVMs that attains the Holevo bound asymptotically. 
This line of approach has been proposed by Hayashi, who reported a theorem without proof 
to realize the construction of POVMs on the tensor product states \cite{hayashi03}. 
His approach is different from other approaches given in Refs.~\cite{HM08,GK06,YFG13} 
and it may need more refined arguments to complete his theorem. 

Last, smooth transitions in the structure of the Holevo bound is 
shown to occur in general. Since this happens in the simplest quantum system, 
we expect similar phenomena occur in higher dimensional systems as well.   
However, we do not know whether or not the number of different forms is always two as 
demonstrated here. The techniques used in this paper can be 
applied to two-parameter estimation problems in higher dimensional systems, 
and we shall make progress in due course. 

\section*{Acknowledgement}
The author is indebted to Prof.~H.~Nagaoka for invaluable discussions and suggestions 
to improve the manuscript. 
In particular, Theorem \ref{thmDinv} and Lemma \ref{lem4} were motivated by his seminar \cite{nagaokaseminar}. 
He also thanks B.-G. Englert and C.~Miniatura for their kind hospitality at the 
Centre for Quantum Technologies, Singapore, where a part of this work was done. 

\appendix

\section{Proofs for lemmas}\label{sec:appb}
\subsection{Leema \ref{lem-holevo2}}\label{sec:appb-0}
Proof follows from a straightforward calculation. 
We substitute $\v x^i=\v\ell_\theta^i+\xi^i \v\ell_{\theta}^{\bot}$ 
into the second line of Eqs.~\eqref{h2function}.  
The first term reads
\begin{align}\nonumber
\vecin{\v x^i}{Q_\theta^{-1}\v x^j}&=
\vecin{\v{\ell}_\theta^i}{Q_\theta^{-1}\v{\ell}_\theta^j}+\xi^i\xi^j \vecin{\v\ell_{\theta}^\bot}{{Q}_{\theta}^{-1}\v\ell_{\theta}^\bot}\\
&=g_\theta^{ij}+ \xi^i\xi^j \vecin{\v\ell_{\theta}^\bot}{{Q}_{\theta}^{-1}\v\ell_{\theta}^\bot},
\end{align}
where the relation $\vecin{\v\ell_{\theta}^i}{Q_\theta^{-1}\v\ell_{\theta}^\bot}=0$ for $i=1,2$ 
and Eq.~\eqref{sldQinv} are used. 
Note $\vecin{\v a}{F_\theta \v a}=0$ for all $\v a\in\bbr^2$, then the second term is calculated as 
\begin{align}
\sum_{i,j=1}^2 |\vecin{\v x^i}{F_\theta\v x^j}|&= 2|\vecin{\v x^1}{F_\theta\v x^2}|\\ \nonumber
=2|\vecin{\v{\ell}_\theta^1}{F_\theta\v{\ell}_\theta^2}&
+\xi^1 \vecin{\v{\ell}_\theta^\bot}{F_\theta\v{\ell}_\theta^2}
+\xi^2 \vecin{\v{\ell}_\theta^1}{F_\theta\v{\ell}_\theta^\bot}|\\ \nonumber
=2|\vecin{\v{\ell}_\theta^1}{F_\theta\v{\ell}_\theta^2}&
+\xi^1(1-s_\theta^2) \gamma_{\theta,1}
+\xi^2(1-s_\theta^2) \gamma_{\theta,2}|, 
\end{align}
where $\vecin{\v{\ell}_\theta^\bot}{F_\theta\v{\ell}_\theta^2}=(1-s_\theta^2) \gamma_{\theta,1}$ 
and $\vecin{\v{\ell}_\theta^\bot}{F_\theta\v{\ell}_\theta^1}=-(1-s_\theta^2) \gamma_{\theta,2}$ 
are used. 
Combining the above expressions, we get the expression of this lemma. $\square$

\subsection{Lemma \ref{lem7}}\label{sec:appb-2}
\noindent
1. $\vecin{\v\ell_{\theta}^\bot}{{Q}_{\theta}^{-1}\v\ell_{\theta}^\bot} 
= (1-s_\theta^2)\det\, G_\theta=(1-s_\theta^2)^2\det \tilde{G}_\theta$:\\
For convenience, let us define 
\begin{align}
\v{n}_\theta&:=\del_1\stheta\times\del_2\stheta,\\ \nonumber
\v{m}_\theta&:=\v\ell_{\theta,1}\times\v\ell_{\theta,2}, 
\end{align}
which are related by $\v{n}_{\theta}=(1-s_\theta^2) Q_\theta \v{m}_\theta$. 
The standard vector analysis shows 
\begin{align*}
\vecin{\v{m}_\theta}{\v{m}_\theta}&=
\vecin{\v\ell_{\theta,1}}{\v\ell_{\theta,1}}\vecin{\v\ell_{\theta,2}}{\v\ell_{\theta,2}}-(\vecin{\v\ell_{\theta,1}}{\v\ell_{\theta,2}})^2,\\
\stheta\times\v{m}_\theta&=\gamma_{\theta,2}\v\ell_{\theta,1}-\gamma_{\theta,1}\v\ell_{\theta,2}. 
\end{align*}

From the expression \eqref{sldqubit}, the components of the SLD Fisher information matrix 
is expressed as $g_{\theta,ij}=\vecin{\v\ell_{\theta,i}}{\v\ell_{\theta,j}}-\gamma_{\theta,i}\gamma_{\theta,j}$. 
Then, the determinant of the SLD Fisher information matrix is calculated as
\begin{align*}
\det \,G_\theta&=(\vecin{\v\ell_{\theta,1}}{\v\ell_{\theta,1}}-\gamma_{\theta,1}^2 )
(\vecin{\v\ell_{\theta,2}}{\v\ell_{\theta,2}}-\gamma_{\theta,2}^2 )\\
&\quad-(\vecin{\v\ell_{\theta,1}}{\v\ell_{\theta,2}}-\gamma_{\theta,1}\gamma_{\theta,2} )^2\\
&=\vecin{\v\ell_{\theta,1}}{\v\ell_{\theta,1}}\vecin{\v\ell_{\theta,2}}{\v\ell_{\theta,2}}-(\vecin{\v\ell_{\theta,1}}{\v\ell_{\theta,2}})^2\\
&\quad-|\gamma_{\theta,2}\v\ell_{\theta,1}-\gamma_{\theta,1}\v\ell_{\theta,2}|^2\\
&=\vecin{\v{m}_\theta}{\v{m}_\theta}-\vecin{\stheta\times\v{m}_\theta}{\stheta\times\v{m}_\theta}\\
&=(1-s_\theta^2)\vecin{\v{m}_\theta}{Q_\theta\v{m}_\theta}\\
&=\frac{1}{1-s_\theta^2}\vecin{\v{n}_\theta}{{Q}_\theta^{-1} \v{n}_\theta},
\end{align*}
and hence we obtain $\vecin{\v\ell_{\theta}^\bot}{{Q}_{\theta}^{-1}\v\ell_{\theta}^\bot} = (1-s_\theta^2)\det\, G_\theta$. 

Next, we use the relation for the RLD Fisher information; 
$\tilde{g}_{\theta,ij}=(1-s_\theta^2)^{-1}[\vecin{\v\ell_{\theta,i}}{\v\ell_{\theta,j}}-(2-s_\theta^2)\gamma_{\theta,i}\gamma_{\theta,j}
-\I\vecin{\v\ell_{\theta,i}}{F_\theta\v\ell_{\theta,j}}]$ to calculate the determinant 
$\det \,\tilde{G}_\theta$ as
\begin{align*}
&(1-s_\theta^2)^2\det \,\tilde{G}_\theta\\
&=(\vecin{\v\ell_{\theta,1}}{\v\ell_{\theta,1}}-(2-s_\theta^2)\gamma_{\theta,1}^2 )
(\vecin{\v\ell_{\theta,2}}{\v\ell_{\theta,2}}-(2-s_\theta^2)\gamma_{\theta,2}^2 )\\
&\quad-|\vecin{\v\ell_{\theta,1}}{\v\ell_{\theta,2}}
-(2-s_\theta^2)\gamma_{\theta,1}\gamma_{\theta,2}
-\I \vecin{\v\ell_{\theta,1}}{F_\theta\v\ell_{\theta,2}}|^2\\
&=\vecin{\v{m}_\theta}{\v{m}_\theta}-(2-s_\theta^2)\vecin{\stheta\times\v{m}_\theta}{\stheta\times\v{m}_\theta}
-|\vecin{\stheta}{\v{m}_{\theta}}|^2\\
&=(1-s_\theta^2)^2\vecin{\v{m}_\theta}{Q_\theta\v{m}_\theta}. 
\end{align*}
This proves the relation; $(1-s_\theta^2)\det \,\tilde{G}_\theta=\det\,G_\theta$. $\square$ 

\noindent
2. $2\sqrt{\det W}\left| \vecin{\v \ell_\theta^1}{F_\theta\v \ell_\theta^2}\right|
=\mathrm{TrAbs}\{W\Im\tilde{G}_\theta^{-1}\}=\mathrm{TrAbs}\{W\Im Z_\theta\}$:\\
From the definition for the matrix $Z_\theta$, the imaginary part is expressed as
\be
\Im Z_\theta=\vecin{\v\ell_{\theta}^1}{F_\theta\v\ell_{\theta}^2}\left(\begin{array}{cc}0 & 1 \\-1 & 0\end{array}\right),  
\ee
and the straightforward calculation yields
\be
\Trabs{W\Im Z_\theta}=2\sqrt{\det W}\left| \vecin{\v \ell_\theta^1}{F_\theta\v \ell_\theta^2}\right|. 
\ee

The imaginary part of the RLD Fisher information matrix is 
\be
\Im \tilde{G}_\theta=\frac{ \vecin{\v\ell_{\theta,1}}{F_\theta\v\ell_{\theta,2}} }{1-s_\theta^2}\left(\begin{array}{cc}0 & -1 \\1 & 0\end{array}\right),
\ee
and the imaginary part of the inverse is 
\be
\Im \{\tilde{G}_\theta^{-1}\}=\frac{ \vecin{\v\ell_{\theta,1}}{F_\theta\v\ell_{\theta,2}} }{\det {G}_\theta}
\left(\begin{array}{cc}0 & 1 \\-1 & 0\end{array}\right),
\ee
where we use $(1-s_\theta^2)\det \,\tilde{G}_\theta=\det\,G_\theta$. 
It is easy to show 
$\vecin{\v \ell_\theta^1}{F_\theta\v \ell_\theta^2}=\vecin{\v\ell_{\theta,1}}{F_\theta\v\ell_{\theta,2}} /\det G_\theta$ 
and thus we obtain the important relationship; $\Im Z_\theta=\Im \tilde{G}_\theta^{-1}$. 
This proves the claim. $\square$

\noindent
3. $(\vec{\gamma}_\theta|W^{-1}\vec{\gamma}_\theta)=
{\det W^{-1}G_\theta}\left(C_\theta^Z[W]-C_\theta^R[W] \right)/{(1-s_\theta^2)}$: \\
This can be shown by the following calculations: 
\begin{align*}
&\det W\times(\vec{\gamma}_\theta|W^{-1}\vec{\gamma}_\theta)\\
&=
w_{22} \gamma_{\theta,1}^2+w_{11} \gamma_{\theta,2}^2-2 w_{12} \gamma_{\theta,1}\gamma_{\theta,2}\\
&=(1-s_\theta^2)^{-1}(w_{22}g_{\theta,11}+w_{11}g_{\theta,22}-2w_{12}g_{12})\\
&\quad-\Re (w_{22}\tilde{g}_{\theta,11}+w_{11}\tilde{g}_{\theta,22}-2w_{12}\tilde{g}_{12})\\
&=(1-s_\theta^2)^{-1}\det G_\theta \Tr{WG_\theta^{-1}}-\det\tilde{G}_\theta \Tr{W\Re \tilde{G}_\theta^{-1}}\\
&=\frac{\det G_\theta }{1-s_\theta^2} \Tr{W(G_\theta^{-1}-\Re \tilde{G}_\theta^{-1})}\\
&=\frac{\det G_\theta }{1-s_\theta^2}\left(C_\theta^Z[W]-C_\theta^R[W] \right). \ \square
\end{align*}

\subsection{Lemma \ref{lem8}}\label{sec:appb-3}
Since the imaginary parts of the inverse of the RLD Fisher information matrix 
and the matrix $Z_\theta$ are always identical for two-parameter qubit models, 
equivalence between 1 and 2 is the direct consequence of Lemma \ref{lem4}. 
We next note the following relation:
\[
G_\theta^{-1}-\Re\tilde{G}_\theta^{-1}=\frac{1}{\det \tilde{G}_\theta}
\left(\begin{array}{cc}   \gamma_{\theta,2}^2&- \gamma_{\theta,1} \gamma_{\theta,2} \\
 - \gamma_{\theta,1} \gamma_{\theta,2} &  \gamma_{\theta,1}^2\end{array}\right), 
\]
which follows from Eq.~\eqref{sldQrld} and Lemma \ref{lem7}-1. 
Thus $G_\theta^{-1}-\Re\tilde{G}_\theta^{-1}=0$ if and only if 
$\gamma_{\theta,1}=\gamma_{\theta,2}=0$. 

To show the statement about global D-invariance, 
we only need to show $\gamma_{\theta,1}=\gamma_{\theta,2}=0$ 
for all $\theta\in\Theta$ if and only if $s_\theta=|\stheta|$ is independent of $\theta$. 
When $\gamma_{\theta,i}= \vecin{\stheta}{\v\ell_{\theta,i}}=0$ 
$\Leftrightarrow$ $\vecin{\stheta}{\del_i\stheta}=(1/2)\del_i |\stheta|^2 =0$ holds, 
the integration of this condition gives $|\stheta|$ is independent of $\theta=(\theta^1,\theta^2)$. 
Conversely, $|\stheta|$ does not depend on $\theta$, we obtain $\gamma_{\theta,i}=0$ ($i=1,2$) 
for all $\theta\in\Theta$. $\square$

\subsection{Lemma \ref{lem9}}\label{sec:appb-4}
When $\vec{b}=\vec{0}$, the function to be minimized is 
$f(\vec{\xi})=(\vec{\xi}|A\vec{\xi})+2|c|$. Since $A$ is positive-definite, 
the minimum is $2|c|$ and is attained by $\vec{\xi}_*=\vec{0}$. 

For the other case ($\vec{b}\neq\vec{0}$), introducing the new variables through 
\[
\vec\eta=BA^{1/2}\vec\xi,\ \mathrm{with}\ 
B= \left(\begin{array}{c}\vec{b}^TA^{-1/2}\\ \vec{b}^TA^{-1/2} \left(\begin{array}{cc}0 & -1 \\1 & 0\end{array}\right)\end{array}\right), 
\]
we can express the function $f(\vec{\xi})$ as
\be
f(\vec{\eta})=\frac{1}{\alpha}(\eta_1^2+\eta_2^2)+2|\eta_1+c|, 
\ee
where $\alpha=(\vec{b}|A^{-1}\vec{b})$ is a positive constant.  
The minimum of this simple quadratic function is obtained by analyzing 
the case $\eta_1+c>0$ and $\eta_1+c\le0$ separately. 
The result is 
\be
\min_{\vec{\eta}\in\bbr^2}f(\vec{\eta})=
\begin{cases}
-\alpha+2c\quad\mathrm{if}c\ge\alpha\\
\ds\frac{c^2}{\alpha}\qquad\qquad\mathrm{if} |c|<\alpha\\
-\alpha-2c\quad\mathrm{if}c\le-\alpha
\end{cases},
\ee
and the unique minimizer $\vec{\eta_*}=(\zeta_*,0)^T$ is
\be
\zeta_*=
\begin{cases}
-\alpha\quad\mathrm{if}c\ge\alpha\\
-c\quad\quad\mathrm{if} |c|<\alpha\\
\alpha\qquad\mathrm{if}c\le-\alpha
\end{cases}.
\ee
This solution can be translated into the original variables $\vec{\xi}$ 
to prove the lemma. $\square$

\section{Supplemental materials}\label{sec:appc}
\subsection{Canonical projection and invariant subspace}\label{sec:app-3}
In this appendix, we summarize the concept of canonical projection 
and invariant subspace for an inner product space on real numbers. 
These results can be generalized to more general settings to be applied to 
the quantum estimation problem. 

Consider an $n$-dimensional vector space $V$ on $\bbr$ and 
an inner product on it $\vin{\cdot}{\cdot}:V^2\to\bbr$. 
Let $w_1,w_2,\dots,w_k\in V$ ($k<n$) be a set of linearly independent vectors 
and define the subspace $W=\mathrm{span}_\bbr\{ w_1,w_2,\dots,w_k\}\subset V$ 
spanned by these vectors. We define a real positive semi-definite matrix 
$G=[\vin{w_i}{w_j}]_{i,j\in\{1,2,\dots,k\}}$ and its inverse by $G^{-1}=[g^{ij}]_{i,j}$. 
A set of vectors $\{w^i=\sum_j g^{ji}w_j\}_{i=1}^k$ forms the dual basis of $\{w_i\}_{i=1}^k$. 

Given a vector $v\in V$, the canonical projection of $v$ onto the subspace $W$ is 
a map $\pi_W:\ V\to W$ such that
\be
\pi_W(v)=\sum_{i=1}^k \vin{v}{w^i}w_i \in W. 
\ee
This canonical projection is unique and it preserves the inner product as 
\be
\vin{\pi_W(v)}{w}=\vin{v}{w}\ \forall w\in W,\ \forall v\in V. 
\ee
Consider any element $v\in V$ and the condition $v\in W$ is equivalently expressed as follows. 
\begin{align*} 
v\in W&\Leftrightarrow \pi_W(v)=v \\
&\Leftrightarrow \forall v'\in W^\bot\ \vin{v'}{v}=0\\
&\Leftrightarrow \forall v'\in V\ \vin{v'-\pi_W(v')}{v}=0\\
&\Leftrightarrow \forall (v^1,v^2,\dots,v^k)\in V^k,\\
&\quad \forall i(\pi_W(v^i)=w^i\Rightarrow \vin{v^i-w^i}{v}=0),
\end{align*}
where $W^\bot=\{v\in V\,|\, \vin{v}{w}=0\forall w\in W\}$ is the (unique) orthogonal complement of $W$. 

Next, consider a linear map $A$ from $V$ to itself. 
The subspace is said invariant under the map $A$, 
if the image of $W$ is a subset of $W$, i.e., $A(w)\in W$ 
holds for $\forall w\in W$. 
Using the above equivalence, this can be written as 
\begin{align}\label{appDinv}
&W\mbox{ is an}\mbox{ invariant subspace under $A$}\\\nonumber
&\Leftrightarrow \forall i\ A(w_i)\in W\\\nonumber
&\Leftrightarrow \forall i\forall v'\in W^\bot\  \vin{v'}{A(w_i)}=0\\\nonumber
&\Leftrightarrow \forall (v^1,v^2,\dots,v^k)\in V^k,\\\nonumber
&\quad \forall i,j(\pi_W(v^i)=w^i\Rightarrow \vin{v^i-w^i}{A(w_j)}=0).
\end{align}

\subsection{Characterization of D-invariant model} \label{sec:appc-1}
\begin{lemma}\label{lem4}
Let $L_\theta^i$ ($\tilde{L}_\theta^i$) be the SLD (RLD) dual operator 
and $G_\theta$ ($\tilde{G}_\theta$) be the SLD (RLD) Fisher information matrix, respectively. 
Define a $k\times k$ hermite matrix 
by $Z_\theta=[\rldin{L_\theta^i}{L_\theta^j}]_{i,j\in\{1,\dots,k\}}$,
then, the following conditions are equivalent. 
\begin{enumerate}
\item $\cM$ is D-invariant at $\theta$.
\item $\cD{\theta}(L_\theta^i)=\sum_j (\Im Z_\theta)^{ji}L_{\theta,j}$, $\forall i\in\{1,2,\dots,k\}$. 
\item $Z_\theta=\tilde{G}_\theta^{-1}$ 
\item $L_\theta^i=\tilde{L}_\theta^i$, $\forall i\in\{1,2,\dots,k\}$. 
\item $\forall X^i\in\lofhh$, $X^i-L_\theta^i\bot T_\theta(\cM)$ with respect to $\sldin{\cdot}{\cdot}$ 
$\Rightarrow$ $X^i-L_\theta^i\bot T_\theta(\cM)$ with respect to $\rldin{\cdot}{\cdot}$. 
\end{enumerate}
\end{lemma}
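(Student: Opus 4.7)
The plan is to close a cycle of implications anchored at condition 1 (D-invariance). Three basic facts will be used throughout: (i) $\{L_\theta^i\}$ and $\{L_{\theta,i}\}$ span the same real vector space $T_\theta(\cM)$, being related by the invertible matrix $G_\theta^{-1}$; (ii) the duality $\sldin{L_\theta^i}{L_{\theta,j}}=\delta^i_{\,j}$ permits me to extract expansion coefficients of any element of $T_\theta(\cM)$ by SLD-pairing; and (iii) the commutation operator $\cD{\theta}$ preserves hermiticity, so $\cD{\theta}(L_{\theta,j})\in\lofhh$ for every $j$.

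For 1 $\Leftrightarrow$ 2, I would observe that D-invariance is equivalent to $\cD{\theta}(L_\theta^i)\in T_\theta(\cM)$ for every $i$, then write $\cD{\theta}(L_\theta^i)=\sum_j c^{ji}L_{\theta,j}$ and read off $c^{\ell i}=\sldin{L_\theta^\ell}{\cD{\theta}(L_\theta^i)}=\Im z_\theta^{\ell i}$ via \eqref{app1}. For 1 $\Rightarrow$ 4, I would substitute the expansion so obtained into the identity $L_{\theta,i}=(I+\I\cD{\theta})(\tilde{L}_{\theta,i})$ of \eqref{app-0} and solve the resulting $k\times k$ complex linear system for the components of $\tilde{L}_{\theta,j}$ in the $L_{\theta,i}$ basis; a contraction with $\tilde{G}_\theta^{-1}$ should then collapse $\tilde{L}_\theta^i$ onto $L_\theta^i$. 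Condition 3 would follow from 4 immediately via $z_\theta^{ij}=\rldin{L_\theta^i}{L_\theta^j}=\rldin{\tilde{L}_\theta^i}{\tilde{L}_\theta^j}=(\tilde{G}_\theta^{-1})^{ij}$, and the reverse implications 3 $\Rightarrow$ 1 and 4 $\Rightarrow$ 1 would be obtained by running the same Fisher-information algebra backwards, invoking \eqref{app2} to express the action of $\cD{\theta}$ on the dual operators in terms of $\tilde{g}_\theta^{ij}-g_\theta^{ij}$ and verifying that $T_\theta(\cM)$ is preserved.

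For 1 $\Leftrightarrow$ 5, I would use the sesquilinear relation $\rldin{X}{Y}=\sldin{X}{Y}+\I\sldin{X}{\cD{\theta}(Y)}$ stated just above \eqref{app-0}. Letting $Y'=X^i-L_\theta^i$, which ranges over all of $\lofhh$ as $X^i$ does, condition 5 would translate into the assertion that every hermitian $Y'$ SLD-orthogonal to $T_\theta(\cM)$ is also $\sldin{\cdot}{\cD{\theta}(L_{\theta,j})}$-orthogonal for every $j$. Since each $\cD{\theta}(L_{\theta,j})$ is hermitian, the double-orthogonal-complement identity \eqref{appDinv} of Appendix \ref{sec:app-3} would turn this into $\cD{\theta}(L_{\theta,j})\in T_\theta(\cM)$, which is precisely D-invariance.

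The main obstacle will be the index bookkeeping in 1 $\Rightarrow$ 4: one must verify that the $k\times k$ complex matrix obtained by restricting $I+\I\cD{\theta}$ to $T_\theta(\cM)$ under the D-invariance hypothesis is invertible, and that the resulting formula for $\tilde{L}_\theta^i$ reduces cleanly, after contraction with the appropriate Fisher information factors, to $L_\theta^i$ with no leftover imaginary residue. A secondary subtlety in 1 $\Leftrightarrow$ 5 is to confirm that the SLD-orthogonal complement of $T_\theta(\cM)$ inside $\lofhh$ is rich enough for its double complement to recover $T_\theta(\cM)$; this follows from finite-dimensionality and the strict positivity of the SLD inner product on the trace-zero subspace inherited from $\rho_\theta>0$.
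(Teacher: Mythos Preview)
Your overall architecture---spokes from condition 1 rather than the paper's single chain $1\Rightarrow2\Rightarrow3\Rightarrow4\Rightarrow5\Rightarrow1$---is sound, and your treatments of $1\Leftrightarrow2$ and $1\Leftrightarrow5$ match the paper's arguments closely. The route $1\Rightarrow4$ by inverting $(I+\I\cD{\theta})$ on the complexified tangent space is a genuine alternative to the paper's detour through $2\Rightarrow3\Rightarrow4$, which instead pairs $\cD{\theta}(L_\theta^j)$ against $\tilde L_\theta^i$ using \eqref{app2} and then tests against an arbitrary $X$ to force $L_\theta^i=\tilde L_\theta^i$.

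There are, however, two gaps. First, in $1\Rightarrow4$ you propose to ``solve the resulting $k\times k$ complex linear system for the components of $\tilde L_{\theta,j}$ in the $L_{\theta,i}$ basis''; but nothing in the hypothesis tells you a priori that $\tilde L_{\theta,j}$ lies in $\mathrm{span}_{\bbc}\{L_{\theta,i}\}$. You must first argue that $I+\I\cD{\theta}$ is injective on all of $\lofh$ (which follows from $\sldin{X}{(I+\I\cD{\theta})Y}=\rldin{X}{Y}$ and positive-definiteness of the RLD form when $\rho_\theta>0$); only then does invertibility of the $k\times k$ restriction force the preimage $\tilde L_{\theta,j}$ into the complexified tangent space. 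Your flagged ``main obstacle'' is the index bookkeeping, but the conceptual step is this global injectivity.

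Second, and more seriously, your proposed $3\Rightarrow1$ ``by running the algebra backwards via \eqref{app2}'' does not work. Relation \eqref{app2} gives only $\rldin{\tilde L_\theta^i}{\cD{\theta}(L_\theta^j)}$, and knowing these $k$ pairings does not confine $\cD{\theta}(L_\theta^j)$ to $T_\theta(\cM)$. A clean fix is to prove $3\Rightarrow4$ directly: using $\rldin{L_\theta^i}{\tilde L_\theta^j}=\tilde g_\theta^{ij}$ (which holds always) and condition 3, the Gram matrix $\big[\rldin{L_\theta^i-\tilde L_\theta^i}{L_\theta^j-\tilde L_\theta^j}\big]_{i,j}$ equals $Z_\theta-\tilde G_\theta^{-1}=0$; positive-definiteness of $\rldin{\cdot}{\cdot}$ then forces $L_\theta^i=\tilde L_\theta^i$. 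With this in hand, your $4\Rightarrow1$ closes the cycle.
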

In the above lemma, $\sldin{X}{Y}:=\Re\tr{\rho_\theta YX^*}$ and $\rldin{X}{Y}:=\tr{\rho_\theta YX^*}$ denote the SLD and RLD inner products, respectively. 
\begin{proof}
We prove this lemma by the chain; 
$1\Rightarrow 2\Rightarrow3\Rightarrow4\Rightarrow5\Rightarrow1$. 
Suppose that a given model is D-invariant, 
this is equivalent to say that the action of 
the commutation operator on the SLD dual operators is 
expressed as
\be
\cD{\theta}(L_\theta^i)=\sum_{j}c^{ji}L_{\theta,j},
\ee
with some real coefficients $c^{ij}$. These coefficients are 
expressed as 
\be
c^{ji}=\sldin{L_\theta^i}{\cD{\theta}(L_\theta^j)},
\ee
which directly from the orthogonality condition \eqref{orthcond}.
Using the relation \eqref{app1}, the right hand side is also expressed as 
$\Im z_\theta^{ij}$, and if the model is D-invariant at $\theta$, 
\be \label{app3}
\cD{\theta}(L_\theta^i)=\sum_j (\Im Z_\theta)^{ji}L_{\theta,j}. 
\ee
Hence we show $1\Rightarrow 2$. 

Next, if the condition \eqref{app3} holds, the SLD inner product between $\cD{\theta}(L_\theta^i)$ 
and the RLD dual operator $\tilde{L}_\theta^j$ gives 
\be
\sldin{\tilde{L}_\theta^i}{\cD{\theta}(L_\theta^j)}=\sum_k c^{kj} \sldin{\tilde{L}_\theta^i}{L_\theta^k}=c^{ji}. 
\ee
The left hand side is also calculated from Eq.~\eqref{app2} as $-\I(\tilde{g}_\theta^{ij}-g_\theta^{ij})$. 
Therefore, we show 
\be \label{app4}
c^{ji}=\Im z_\theta^{ij}=-\I(\tilde{g}_\theta^{ij}-g_\theta^{ij})\ 
\Leftrightarrow\ \tilde{G}_\theta^{-1}=Z_\theta, 
\ee
holds, if the condition \eqref{app3} holds, that is, $2\Rightarrow 3$. 

Consider an arbitrary linear operator $X$ and assume the condition \eqref{app4}. 
In this case, the RLD inner product between $X$ and $\tilde{L}_\theta^i$ is calculated as
\begin{align*}
\rldin{X}{\tilde{L}_\theta^i}&=\sum_k \tilde{g}_\theta^{kj}\rldin{X}{\tilde{L}_{\theta,k}}\\
&=\sum_k z_\theta^{kj}\rldin{X}{\tilde{L}_{\theta,k}}\\
&=\sum_k z_\theta^{kj}\sldin{X}{L_{\theta,k}}\\
&=\sum_k (g_\theta^{kj}\sldin{X}{L_{\theta,k}}+\I\Im z^{kj} \sldin{X}{L_{\theta,k}})\\
&=\sldin{X}{L_{\theta}^j}+\I \sldin{X}{\cD{\theta}(L_{\theta}^j)}\\
&=\rldin{X}{L_{\theta}^j}, 
\end{align*}
where Eq.~\eqref{app4}, and the several equations presented in Sec.~\ref{sec2-3-2} are used. 
Since $X\in\lofh$ is arbitrary, it implies 
\be\label{app5}
L_\theta^i=\tilde{L}_\theta^i\mbox{ folds for all }i\in\{1,2,\dots,k\}.
\ee 
Therefore, we show $3\Rightarrow4$. 

Next, let us assume the condition \eqref{app5} and we show that this implies the condition $5$, 
that is, $\forall X^i\in\lofhh$, $X^i-L_\theta^i\bot T_\theta(\cM)$ with respect to $\sldin{\cdot}{\cdot}$ 
$\Rightarrow$ $X^i-L_\theta^i\bot T_\theta(\cM)$ with respect to $\rldin{\cdot}{\cdot}$. 
This is because 
\begin{align*}
X^i-&L_\theta^i\bot T_\theta(\cM) \mathrm{w.r.t.} \sldin{\cdot}{\cdot}\\
&\Leftrightarrow \forall j\sldin{X^i-L_\theta^i}{L_{\theta,j}}\\
&\Leftrightarrow \forall j\rldin{X^i-L_\theta^i}{\tilde{L}_{\theta,j}}\\
&\Rightarrow \forall j\rldin{X^i-L_\theta^i}{\tilde{L}_{\theta}^j} \\
&\Leftrightarrow \forall j\rldin{X^i-L_\theta^i}{L_{\theta}^j}\\
&\Leftrightarrow X^i-L_\theta^i\bot T_\theta(\cM) \mathrm{w.r.t.} \rldin{\cdot}{\cdot}. 
\end{align*}

The remaining to be shown is that the condition, 
\begin{multline}\label{app-6}
\forall X^i\in\lofhh,\ \forall i(X^i-L_\theta^i\bot T_\theta(\cM)\mbox{ w.r.t. }\sldin{\cdot}{\cdot}\\
\Rightarrow\ X^i-L_\theta^i\bot T_\theta(\cM)\mbox{ w.r.t. }\rldin{\cdot}{\cdot}), 
\end{multline}
implies the D-invariance of the model. 
Consider a set of hermite operators $\vec{X}=(X^1,X^2,\dots,X^k)$ 
and suppose the above condition \eqref{app-6}. 
Since $X^i-L_\theta^i\bot T_\theta(\cM)\mbox{ w.r.t. }\sldin{\cdot}{\cdot}$ 
is equivalent to $\pi_{T_\theta}(X^i)=L_\theta^i$ with $\pi_{T_\theta}$ 
the canonical projection on $T_\theta(\cM)$, we can rewrite it as 
\be\label{app-7}
\forall X^i\in\lofhh,\ \forall i,j (\sldin{X^i-L_\theta^i}{L_{\theta,j}}\ 
\Rightarrow\ \rldin{X^i-L_\theta^i}{L_{\theta,j}}). 
\ee
The use of Eq.~\eqref{app-0} leads  
$\sldin{X^i-L_\theta^i}{L_{\theta,j}}\Rightarrow \rldin{X^i-L_\theta^i}{L_{\theta,j}}$ 
$\Rightarrow$ $\sldin{X^i-L_\theta^i}{L_{\theta,j}}\Rightarrow \sldin{X^i-L_\theta^i}{\cD{\theta}(L_{\theta,j})}$. 
Then, the equivalent condition \eqref{appDinv} can be applied to conclude that 
the subspace $T_\theta(\cM)$ is invariant under the action of linear operator $\cD{\theta}$, 
that is, the model is D-invariant.  $\square$ 
\end{proof}

\subsection{Proof for Theorem \ref{thmDinv}} \label{sec:appc-2}
\noindent
i) Proof for the RLD CR bound case:\\
The sufficiency (D-invariant model $\Rightarrow C_\theta^H[W]= C_\theta^R[W]$ for all $W>0$.) 
follows from Proposition \ref{propDinv}. If the Holevo bound is identical to the RLD CR bound 
for all weight matrices, then all the RLD dual operators must be hermite, 
i.e., $(\tilde{L}_\theta^i)^* =\tilde{L}_\theta^i$ for all $i=1,2,\dots,k$. 
To see this, we note that $\vec{X}=(X^1,X^2,\dots,X^k)\in \cX_\theta$ 
implies $\vec{X}\in\tilde{\cX}_\theta:=
\{\vec{X}\,|\, \forall i\,X^i\in\lofh,\tr{\rho_\theta X^i}=0,\ \forall i,j\,\tr{\del_i\rho_\theta X^j}=\delta^j_{\,i} \}$. 
By rewriting $\tr{\del_i\rho_\theta X^j}=\delta^j_{\,i}\Leftrightarrow 
\rldin{X^i-\tilde{L}_\theta^i}{\tilde{L}_{\theta,j}}=0$, we see that 
$X^i-\tilde{L}_\theta^i$ ($i=1,2,\dots,k$) are orthogonal to the complex span of the RLD operators, 
$\tilde{T}_\theta(\cM):=\mathrm{span}_\bbc\{\tilde{L}_{\theta,i} \}$.
It is straightforward to show that 
$Z_\theta[\vec{X}]=Z_\theta[\vec{X}-\vec{\tilde{L}}_\theta]+Z_\theta[\vec{\tilde{L}}_\theta]$ 
holds for $\vec{X}\in\tilde{\cX}_\theta$, 
where $\vec{\tilde{L}}_\theta=(\tilde{L}_\theta^1,\tilde{L}_\theta^2,\dots,\tilde{L}_\theta^k)$ 
is the collection of the RLD dual operators. 
Since the matrix $Z_\theta[\vec{X}-\vec{\tilde{L}}_\theta]$ is positive semi-definite, and hence, we obtain 
\be
\vec{X}\in\tilde{\cX}_\theta\Rightarrow
Z_\theta[\vec{X}]\ge \tilde{G}_\theta^{-1},
\ee
and the equality if and only if $\forall i\,X^i=\tilde{L}_\theta^i$ holds. 
Thus, all $\tilde{L}_\theta^i$ need to be in the set $\cX_\theta$. 

Next, we use the relation between the SLD and RLD operators, 
$(I+\I\cD\theta)(\tilde{L}_{\theta,i})=L_{\theta,i}$ (see Eq.~\eqref{app-0}), to get 
\be
(I+\I\cD\theta)(\tilde{L}_{\theta}^i)=\sum_j \tilde{g}^{ji}L_{\theta,j}.  
\ee 
Then, the conditions $(\tilde{L}_\theta^i)^* =\tilde{L}_\theta^i$ for all $i$ imply 
\begin{align}
\tilde{L}_{\theta}^i&=\sum_j \Re\{\tilde{g}^{ji}\}L_{\theta,j},\\
\cD\theta(\tilde{L}_{\theta}^i)&=\sum_j \Im\{\tilde{g}^{ji}\}L_{\theta,j}.
\end{align}
Since $\tilde{G}_\theta$ is positive definite, so as $\Re\tilde{G}_\theta$. 
With these relations, the action of the commutation operator on the SLD operators 
are calculated as
\be
\cD\theta (L_{\theta,i})=\sum_{j,k} [(\Re\{\tilde{G}^{-1}\})^{-1}]_{ji} \Im\{\tilde{g}^{kj}\}L_{\theta,k}.
\ee
This proves that $\cD\theta (L_{\theta,i})\in T_\theta(M)=\mathrm{span}_\bbr\{L_{\theta,i} \}$. 
Therefore, if $C_\theta^H[W]= C_\theta^R[W]$ for all $W\in\cW$, then the model is D-invariant. 
~\\

\noindent
ii) Proof for the D-invariant bound:\\
This equivalence is a direct consequence of the proposition \ref{propDinv} 
($Z_\theta=\tilde{G}_\theta^{-1}$) and the property of the canonical projection given in Appendix \ref{sec:app-3}. 
First, let us note 
\be\label{Dequiv1}
\forall W\in\cW\ C_\theta^H[W]=C_\theta^Z[W]
\Leftrightarrow \forall \vec{X}\in\cX_\theta\ Z_\theta[\vec{X}]\ge Z_\theta. 
\ee
This is because $\vec{L}_\theta$ is an element of the set $\cX_\theta$ 
and $Z_\theta[\vec{L}_\theta]= Z_\theta$. 
If the condition \ref{lem4}-5 holds, it is easy to show 
$\vec{X}\in\cX_\theta\Rightarrow \forall i,j\sldin{X^i-L_\theta^i}{L_{\theta,j}}=0
\Rightarrow \forall i,j\rldin{X^i-L_\theta^i}{L_{\theta,j}}=0$. Therefore, 
we obtain the matrix inequality: 
\begin{align}\nonumber
Z_\theta[\vec{X}]&=[\rldin{X^i}{X^j}]_{i,j}\\ \nonumber
&=[\rldin{X^i-L_\theta^i}{X^j-L_\theta^j}]_{i,j}+[\rldin{L_\theta^i}{L_\theta^j}]_{i,j} \\ 
&\ge [\rldin{L_\theta^i}{L_\theta^j}]_{i,j}=Z_\theta, 
\end{align}
holds for all $\vec{X}\in\cX_\theta$ 
because of the semi-definite positivity of the matrix $[\rldin{X^i-L_\theta^i}{X^j-L_\theta^j}]_{i,j}$. 

Conversely, let us assume $ \forall \vec{X}\in\cX_\theta\ Z_\theta[\vec{X}]\ge Z_\theta$ is true. 
This is possible if and only if $\forall \vec{X}\in\cX_\theta\Rightarrow\forall i,j\ \rldin{X^i-L_\theta^i}{L_{\theta,j}}=0$, 
otherwise we can always find some $\vec{X}\in\cX_\theta$ such that 
the matrix inequality $Z_\theta[\vec{X}]\ge Z_\theta$ is violated. 
Thus, Lemma \ref{lem4}-5 and the equivalence \eqref{Dequiv1} prove this theorem. 
$\square$

\end{document}